\documentclass[11pt, oneside]{article}
\usepackage{geometry}                		
\geometry{letterpaper}                   		
\usepackage{graphicx}				
\usepackage{array}
\usepackage{wrapfig}
\usepackage{xcolor}
\usepackage{amssymb}
\usepackage{amsthm}
\usepackage{amsmath}
\usepackage{physics}
\usepackage{braket}
\usepackage{multirow}
\usepackage{mathtools}
\usepackage{cancel}
\usepackage{tikz}
\usetikzlibrary{decorations.pathreplacing, decorations.pathmorphing}
\usetikzlibrary{positioning}
\usetikzlibrary{shapes,patterns}
\usepackage{listings}
\usepackage{tabu}
\usepackage[toc,page]{appendix}
\usepackage[margin=.2cm,font=footnotesize]{caption}
\definecolor{DarkGreen}{rgb}{0.1,0.5,0.1}
\definecolor{DarkRed}{rgb}{0.5,0.1,0.1}
\definecolor{DarkBlue}{rgb}{0.1,0.1,0.5}
\usepackage{hyperref}
\hypersetup{
    unicode=false,         
    pdftoolbar=true,       
    pdfmenubar=true,       
    pdffitwindow=false,     
    pdfnewwindow=true,     
    colorlinks=true,      
    linkcolor=DarkBlue,         
    citecolor=DarkGreen,       
    filecolor=DarkGreen,     
    urlcolor=DarkBlue,         
    pdftitle={},
    pdfauthor={},
}

\def\BBLMTU{\ensuremath{\mathcal{T}}}
\newtheorem{thm}{Theorem}[section]
\newtheorem{conj}[thm]{Conjecture}
\newtheorem*{thm*}{Theorem}
\newtheorem{remark}[thm]{Remark}
\newtheorem{rmk}[thm]{Remark}
\newtheorem{cor}[thm]{Corollary}
\newtheorem{lem}[thm]{Lemma}
\newtheorem*{lem*}{Lemma}
\newtheorem{define}[thm]{Definition}
\newtheorem{proposition}[thm]{Proposition}
\newtheorem*{proposition*}{Proposition}
\newtheorem{observation}[thm]{Observation}

\newtheorem{claim}[thm]{Claim}
\newtheorem*{claim*}{Claim}

\DeclareMathOperator{\AND}{AND}
\DeclareMathOperator{\conv}{conv}
\DeclareMathOperator{\NAND}{NAND}

\DeclareMathOperator{\XOR}{XOR}

\DeclareMathOperator{\Maj}{Maj}

\DeclareMathOperator{\dual}{dual}
\DeclareMathOperator{\Amp}{Amp}
\DeclareMathOperator{\Ber}{Ber}

\renewcommand{\epsilon}{\varepsilon}
\newcommand{\eps}{\epsilon}

\newcommand{\R}{\mathbb{R}}
\newcommand{\F}{\mathbb{F}}
\renewcommand{\P}{\mathbb{P}}
\newcommand{\PR}[1]{\mathbb{P}\left[#1\right]}

\newcommand{\cC}{\mathcal{C}}
\newcommand{\etal}{\textit{et al. }}

\usepackage[normalem]{ulem}

\allowdisplaybreaks

\begin{document}

\title{Tight Limits on Nonlocality from Nontrivial Communication Complexity; a.k.a. Reliable Computation with Asymmetric Gate Noise}

\author{
    Noah Shutty \footnote{Email: \texttt{\{noaj, marykw, phayden\}@stanford.edu}} \thanks{Department of Physics, Stanford University.} \thanks{N.S. was supported in part by NSF DGE-1656518.}
    \and Mary Wootters \footnotemark[2] \thanks{Departments of Computer Science and Electrical Engineering, Stanford University.  M.W. was supported in part by NSF CAREER CCF-1844628.}
    \and Patrick Hayden \footnotemark[2] \footnotemark[3] \thanks{P.H. was supported by AFOSR (FA9550-16-1-0082), CIFAR and the Simons Foundation}
}

\maketitle

\begin{abstract}
It has long been known that the existence of certain superquantum nonlocal correlations would cause communication complexity to collapse. The absurdity of a world in which any nonlocal binary function could be evaluated with a constant amount of communication in turn provides a tantalizing way to distinguish quantum mechanics from incorrect theories of physics; the statement ``communication complexity is nontrivial’’ has even been conjectured to be a concise information-theoretic axiom for characterizing quantum mechanics. We directly address the viability of that perspective with two results. First, we exhibit a nonlocal game such that communication complexity collapses in any physical theory whose maximal winning probability exceeds the quantum value. Second, we consider the venerable CHSH game that initiated this line of inquiry. In that case, the quantum value is about 0.85 but it is known that a winning probability of approximately 0.91 would collapse communication complexity. We provide evidence that the 0.91 result is the best possible using a large class of proof strategies, suggesting that the communication complexity axiom is insufficient for characterizing CHSH correlations. Both results build on new insights about reliable classical computation. The first exploits our formalization of an equivalence between amplification and reliable computation, while the second  follows from an upper bound on the threshold for reliable computation with formulas of noisy XOR and AND gates.

\end{abstract}

\thispagestyle{empty}
\setcounter{page}{0}
\newpage

\section{Introduction}\label{sec:intro}

Quantum mechanics is mysterious, so it is appealing to look for a concise information-theoretic principle that explains quantum mechanical phenomena.  One such principle might be ``communication complexity is nontrivial.'' That is, two parties with inputs $x \in \{0,1\}^n$ and $y \in \{0,1\}^n$ respectively should not be able to compute arbitrary functions $f(x,y)$ with high probability, using only a constant amount of communication (independent of $n$). 

It is known that the axiom ``communication complexity is nontrivial'' does in fact rule out some superquantum phenomena, specifically superquantum success at certain \em nonlocal games. \em  
For example, consider the famous \em CHSH game\em.\footnote{This game is named after
 Clauser, Horne, Shimony, and Holt and was introduced implicitly in their paper~\cite{chsh}.}
The two players, Alice and Bob, cannot communicate.  Alice and Bob receive independent random bits $x$ and $y$ respectively.  Their goal is to output bits $a$ and $b$, respectively, so that $a \oplus b = x \land y$.

In a classical world, Alice and Bob can win the CHSH game with probability $3/4$ (\em e.g. \em by outputting $a,b= 0$) and cannot do any better; thus the classical value of the CHSH game is 
$ \omega_C(CHSH) = \frac{3}{4}.$
If Alice and Bob have access to any \em nonsignalling \em correlation---that is, they can produce correlated bits $a$ and $b$ in any way they like as long as they do not gain the ability to communicate---then they can win the CHSH game with probability $1$; we say that the nonsignalling value of the CHSH game is
$ \omega_{NS}(CHSH) = 1. $
If instead Alice and Bob share quantum entanglement, they can do something in between $\omega_C$ and $\omega_{NS}$: it turns out that the quantum value of the CSHS game is \cite{cirel1980quantum}
\[ \omega_Q(CHSH) = \frac{1}{2} + \frac{1}{\sqrt{8}} \approx 0.8536.\]

Work of van Dam~\cite{implausibleConsequencesSuperstrongNonlocality} showed that if Alice and Bob could win the CHSH game with probability $1$, then communication complexity would become trivial.  This was extended by
Brassard \etal \cite{Brassard}, who showed that if Alice and Bob could win the CHSH game with probability
greater than
$\frac{1}{2} + \frac{1}{\sqrt{6}} \approx 0.908$ then communication complexity would become trivial.  Thus, the axiom ``communication complexity is nontrivial'' in some sense explains why $\omega_Q(CHSH) < 0.908$.  
Other works have extended the set of nonlocal correlations known to collapse communication complexity~\cite{FWW09,distillMaxPRbox,hoyer2010optimal}. 

However, so far the axiom ``communication complexity is nontrivial'' had not pinned down the exact quantum value for any nonlocal game. For example, in the CHSH game, there is a gap between the threshold of approximately $0.908$ that Brassard \etal obtain and the true quantum value $\omega_Q(CHSH) \approx 0.856$.

In this paper, we address this question: can the axiom ``communication complexity is nontrivial'' be used to explain the quantum value of certain nonlocal games?
Along the way, we formalize a connection to 
the theory of reliable computation for (classical) circuits with noisy gates, and our results for nonlocal games correspond to new results for reliable classical computation.
We outline our contributions in both areas below.

\subsection{Contributions}\label{sec:contributions}
First, we address the extent to which the axiom ``communication complexity is nontrivial'' can explain the quantum value of nonlocal games.
\begin{itemize}
\item[(1)] We exhibit a nonlocal game $G$, for which 
\[ \omega_C(G) < \omega_Q(G) < \omega_{NS}(G), \]
and for which the axiom ``communication complexity is not trivial'' precisely pins down the value $\omega_Q(G)$.  
Our game $G$ is fundamental, in the sense that if communication complexity is trivial in any superquantum theory $S$, 
then there is (a version of) our game $G$ so that $\omega_S(G) > \omega_Q(G)$.  That is, a superquantum advantage at the game $G$ makes communication complexity trivial, and meanwhile any universe in which communication complexity is trivial offers a superquantum advantage at the game~$G$.

\item[(2)] We provide evidence that the axiom ``communication complexity is nontrivial'' is in fact \em not \em sufficient to pin down the quantum value of the CHSH game itself.  
In more detail, in \cite{Brassard}, Brassard \etal essentially use the ability to succeed at the CHSH game as a noisy $\AND$ gate.  They show that reliable computation is possible when circuits are built from these noisy $\AND$ gates along with noiseless $\XOR$ gates (which correspond to certain local operations for Alice and Bob).  This leads to protocols that collapse communication complexity.
We derive an upper bound on the noise threshold for reliable computation by {\it formulas} of noisy $\AND$ and noisy $\XOR$ gates.
Assuming a Conjecture~\ref{conj:AppliesToCircuits}, we are able to use this result about reliable computation to show that the strategy of \cite{Brassard} cannot be pursued further: the threshold of $0.908$ is tight for this model of computation.  While this result is only a barrier against one line of attack, it does suggest that the axiom ``communication complexity is nontrivial'' may not suffice to explain $\omega_Q(CHSH)$.
\end{itemize}
As alluded to in our contribution (2) above, there is a connection to reliable computation with noisy gates.  In that area, we make the following contributions.

\begin{itemize}
\item[(3)] Our contribution (2) above can be seen as a result about reliable computation.
Consider the following circuit model with noisy gates.
Let $\land_\eps$ denote a $2$-input $\AND$ gate which produces an incorrect answer with probability $\eps$, and let $\oplus_\tau$ denote a $2$-input $\XOR$ gate which produces an incorrect answer with probability $\tau$.\footnote{Here and in the rest of the paper, for a gate $g$, $g_\eps$ refers to a version of $g$ which fails with probability $\eps$.
Let $\mathcal{F}_{\eps, \tau}$ be the collection of formulas\footnote{A \em formula \em is a circuit where every gate has fan-out $1$ (that is, the graph underlying the circuit is a tree and each input variable may appear at one or more leaves of this tree).} defined on the gate set $\{\land_\eps, \oplus_\tau\}$, where the noise in each $\land_\eps$ and $\oplus_\tau$ gate is independent. Analogously, let $\mathcal{C}_{\eps, \tau}$ be the collection of general circuits defined on the same gate set, and note that $\mathcal{C}_{\eps, \tau} \supset \mathcal{F}_{\eps, \tau}$.}

Our main technical result is that for all $\tau>0$, for all $\eps > 1/6$, reliable computation is impossible in $\mathcal{F}_{\eps, \tau}$.  Note that for $\eps < 1/6$, it is possible to compute any function using a circuit in $\mathcal{C}_{\epsilon, 0}$ with error probability bounded away from $1/2$.  On the other hand, for any $\tau > 0$, for any $\eps \geq 1/6$, there is some function for which this task is impossible with formulas in $\mathcal{F}_{\eps, \tau}$. We make a conjecture (Conjecture~\ref{conj:AppliesToCircuits}) that our upper bound applies to $\mathcal{C}_{\eps, \tau}$ as well, then show that a topological result (Theorem~\ref{thm:noiseThresholdsAreStrict}) can be used to extend the bound to the case of noise-free $\XOR$ gates applicable to the construction of \cite{Brassard}.

There has been a great deal of work on pinning down noise thresholds for reliable computation, which we survey in Section~\ref{sec:related}.  However, most 
prior work has focused on \em symmetric noise, \em where the noise rate is the same across all gate types.
As we discuss below in Section~\ref{sec:overview}, extending these results to asymmetric noise---and in particular to include noiseless gates---raises several challenges relative to previous work.
Figure~\ref{fig:params} depicts how our work fits into existing work, which is summarized in Section~\ref{sec:related}.

Beyond our primary motivation in quantum mechanics, we believe that the case of asymmetric gate noise is independently interesting from the perspective of fault-tolerant computation.
We hope that our techniques and results may spur future research in this direction. 

\item[(4)] We formalize an equivalence between reliable computation by circuits of noisy gates and \em amplification. \em   Informally, an amplifier is a function $f:\{0,1\}^d \to \{0,1\}$ so that when $f$ is fed in random bits $x \in \{0,1\}^d$ with a slight bias away from $1/2$, the output $f(x)$ amplifies that bias.  While a relationship between reliable computation and amplification had been present in prior work, nailing down an equivalence is a bit subtle, and requires considering the \em convex hull \em of circuit classes; to the best of our knowledge ours is the first work to do this.

Our equivalence between reliable computation and amplification is required in conjunction with Conjecture~\ref{conj:AppliesToCircuits} to establish the threshold in our contribution (3) above.  Further, it leads to the definition and analysis of  
our game $G$ from contribution (1) whose quantum value is pinned down by the nontriviality of communication complexity.  

\end{itemize}

\subsection{Organization}
In Section~\ref{sec:overview}, we state our results in more detail, and give an overview of our proof techniques.
In Section~\ref{sec:related}, we survey related work.
In Section~\ref{sec:preliminaries}, we state some additional formal definitions that we need for our proofs.

In the Sections~\ref{sec:proofOneSixthFormulas}--\ref{sec:newNLGame}, we prove our results.  Because the quantum results build on our results in reliable computation, we begin with those. 
In Section~\ref{sec:proofOneSixthFormulas}, we prove Lemma~\ref{claim:mainClaim}, which upper bounds the threshold for reliable computation in the class $\mathcal{F}_{\eps, \tau}$. We also make Conjecture~\ref{conj:AppliesToCircuits}, and show that it implies Theorem~\ref{thm:mainOneSixthFormulas}, which gives a sharp noise threshold for reliable computation in the class $\mathcal{C}_{\eps, 0}$.

Our proof of Theorem~\ref{thm:mainOneSixthFormulas} relies on Theorem~\ref{thm:noiseThresholdsAreStrict}, which states that for any class of noisy circuits, the region where reliable computation is impossible is closed; we use Theorem~\ref{thm:noiseThresholdsAreStrict} as a black box in our proof of Theorem~\ref{thm:mainOneSixthFormulas} and return to it later.
In Section~\ref{sec:pfEquiv}, we prove Theorem~\ref{thm:AmplifierEquivToFTCC_andOverhead}, which shows a formal equivalence between reliable computation by circuits of noisy gates and amplification.  
In Section~\ref{section:proofNoiseThresholdsStrict}, we use Theorem~\ref{thm:AmplifierEquivToFTCC_andOverhead} to prove Theorem~\ref{thm:noiseThresholdsAreStrict}, which along with Conjecture~\ref{conj:AppliesToCircuits} allows us to prove Theorem~\ref{thm:mainOneSixthFormulas} from Lemma~\ref{claim:mainClaim}.  
Finally, in Section~\ref{sec:newNLGame}, we prove Theorem~\ref{thm:newNonlocalGame}, which constructs the game $G_k$ so that $\omega_Q(G_k)$ is pinned down by the axiom ``communication complexity is nontrivial.''

We conclude in Section~\ref{sec:conclusion} with some discussion and future directions.

\section{Results and Technical Overview}\label{sec:overview}

In this section we state our results more precisely, and give a brief overview of how we achieve them. 

First, in Section~\ref{sec:overview_rel}, we introduce a few necessary definitions and discuss the relationship between nonlocality and reliable computation.
Then we discuss each of the contributions from Section~\ref{sec:contributions} in more detail.
We discuss our results in quantum nonlocality in Sections~\ref{sec:overview_magicG} and \ref{sec:overview_CHSH}, explaining how they would follow from our results on reliable computation.
Then we formally state our results on reliable computation in Sections~\ref{sec:overview_thresh} and \ref{sec:overview_equiv}, respectively, and give a high-level overview of our proof techniques.  

We note that our results and techniques in Sections~\ref{sec:overview_thresh} and \ref{sec:overview_equiv} are purely classical, and can be read without any background in quantum mechanics.  In particular, the reader interested only in our results in classical reliable computation can skip to Section~\ref{sec:overview_thresh}.

\subsection{Relationship between nontrivial communication complexity and reliable computation}\label{sec:overview_rel}

In our study of nonlocal games, we will consider players Alice and Bob who have joint access to different sets of \em bipartite correlations \em (Definition~\ref{define:NSCor}).  A bipartite correlation can be thought of as a box 
that a spatially separated Alice and Bob can use to process distributed inputs, without providing them the ability to communicate.
Alice inputs $x$, Bob inputs $y$, and the box outputs $a$ for Alice and $b$ for Bob according to some distribution $\mathbb{P}[a,b | x,y]$.  Sets of interest include $C$, the set of all bipartite correlations that are possible classically; $Q$, the set of all bipartite correlations that are possible if Alice and Bob share quantum entanglement; and $NS$, the set of all nonsignalling bipartite correlations.

In this paper we will consider sets $S$ of bipartite correlations that are closed under all of the operations that Alice and Bob might want to do to combine elements of $S$ with each other (for example, composing or taking probabilistic mixtures of correlations); following \cite{closedCorrelationSets}, we say that such sets are \em closed under wirings. \em  The sets $C, Q$, and $NS$ are all closed under wirings.
A set $S$ of bipartite correlations naturally gives rise to a circuit class by thinking about how these correlations act on \em distributed bits.  \em  We say that a bit $z$ is distributed as $z = x \oplus y$ if Alice holds $x$ and Bob holds $y$, where $x$ is uniformly random and $y = z \oplus x$. 
Then we can think of a bipartite correlation acting on inputs $x$ and $y$ as a gate acting on input $z$.

 In more detail, suppose that $c \in S$ is a bipartite correlation that stochastically maps inputs $\mathbf{x}, \mathbf{y} \in \{0,1\}^t$ for Alice and Bob respectively to bits $a,b \in \{0,1\}$.  We can define a randomized gate $\BBLMTU(c): \{0,1\}^t \to \{0,1\}$ as follows.
The gate $\BBLMTU(c)$ takes as input $\mathbf{z} \in \{0,1\}^t$.  Each coordinate $z_i$ of $\mathbf{z}$ is distributed between Alice and Bob as $z_i = x_i \oplus y_i$.  Then $\BBLMTU(c)$ outputs $a \oplus b$, where $a,b$ are the output of $c$ acting on $\mathbf{x}$ and $\mathbf{y}$.
This process is depicted in Figure~\ref{fig:nonlocalcircuit}. 

\begin{figure}
\centering
\begin{tikzpicture}[xscale=1.3,yscale=.9]
\begin{scope}
    \clip (-4,0) rectangle (4,4);
    \draw[thick] (0,0) circle(4);
    \draw[thick] (-4,0) -- (4,0);
\end{scope}
\node (z1) at (-2,-.75) {$z_1$};
\node (z2) at (2,-.75) {$z_2$};

\coordinate (zz1) at (-2,0);
\coordinate (zz2) at (2,0);

\draw (z1) to (zz1);
\draw (z2) to (zz2);

\draw[blue] (-2,2) rectangle (-1,3);
\node[blue] at (-1.5,2.5) {Alice};
\node[blue] at (1.5,2.5) {Bob};
\draw[blue] (1,2) rectangle (2,3);
\draw[blue,decorate,decoration=snake] (-1,2.5) -- (1,2.5);

\node(x1) at (-1.75,1.25) {$x_1$};
\node(x2) at (-1.25,1.25) {$x_2$};
\node(y1) at (1.25,1.25) {$y_1$};
\node(y2) at (1.75,1.25) {$y_2$};

\draw[blue] (x1) to (-1.75,2);
\draw[blue] (y1) to (1.25,2);
\draw[blue] (x2) to (-1.25,2);
\draw[blue] (y2) to (1.75,2);

\node(a) at (-1.5, 3.5) {$a$};
\node(b) at (1.5, 3.5) {$b$};

\node(out) at (0,4.75) {$\BBLMTU(c)(z_1,z_2)$};
\draw (out) to (0,4);
\node(oplus) at (0,3.75) {$\oplus$};
\draw[dashed] (0,4) to (oplus);

\draw[dashed] (a) to (oplus);
\draw[dashed] (b) to (oplus);

\draw[blue] (a) to (-1.5,3);
\draw[blue] (b) to (1.5,3);

\draw[dashed] (zz1) to (x1);
\draw[dashed] (zz1) to (y1);
\draw[dashed] (zz2) to (x2);
\draw[dashed] (zz2) to (y2);

\end{tikzpicture}
\caption{Defining a randomized circuit $\BBLMTU(c)$ from a nonlocal correlation $c \in S$.}\label{fig:nonlocalcircuit}
\end{figure}
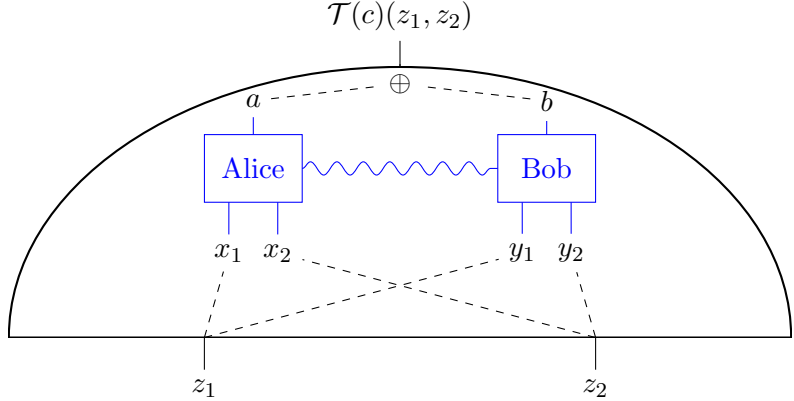

Given a convex set $S$ of bipartite correlations that is closed under wirings, we can define\footnote{Technically, Definition~\ref{define:BBLMTU} takes the convex hull of circuits comprised of $\{\BBLMTU(c)\,:\,c \in S\}$ gates; if $S$ is closed under wirings then this distinction does not matter, as per Proposition~\ref{proposition:closedUnderWiringsImpliesBBLMTUisElementwise}.} 
the set $\BBLMTU(S)$ to be the set of circuits one can make out of the gates $\{ \BBLMTU(c) \,:\, c \in S \}$
(see Definition~\ref{define:BBLMTU}).

Our goal is to understand which sets $S$ of bipartite correlations cause communication complexity to be trivial.  
We will do so by studying when the (noisy) circuit model $\BBLMTU(S)$ supports reliable computation.
\begin{define}\label{define:reliableComputationDefinition}
A (noisy) circuit model $\mathcal{C}$  {\bf supports reliable computation with advantage $\delta_0 > 0$} if for all $n > 0$, for all Boolean functions $f: \F_2^{n}\rightarrow \F_2$, there exists a circuit $c \in \mathcal{C}$ such that for each possible input $\mathbf{x} \in \F_2^n$,
\begin{equation}\label{eq:advantage}
(-1)^{f(\mathbf{x})} \left( \mathbb{P}[c(\mathbf{x}) = 0] - \mathbb{P}[c(\mathbf{x}) = 1] \right) \geq \delta_0,
\end{equation}
where the probability is over the randomness in $c$.

We say that $\mathcal{C}$ {\bf supports reliable computation} if there exists a $\delta_0 > 0$ so that $\mathcal{C}$ supports reliable computation with advantage $\delta_0$.
\end{define}

We will say that a set $S$ of bipartite correlations causes communication complexity to become trivial if there is some way for Alice and Bob to use the correlations in $S$, along with shared randomness and arbitrary local computation, to compute any function with high probability with constant communication complexity.  That is, there are some constants $\eps > 0$ and $t \geq 1$ so that for any $n$ and for any $f:\{0,1\}^n \times \{0,1\}^n \to \{0,1\}$, the following holds.  There is some protocol $\Pi_S$ for Alice and Bob so that $\Pi_S$ uses $t$ bits of communication (in either direction), and so that for any inputs $\mathbf{x},\mathbf{y} \in \{0,1\}^n$ for Alice and Bob respectively,
\[ \mathbb{P}[ \Pi_S(\mathbf{x}, \mathbf{y}) = f(\mathbf{x}, \mathbf{y}) ] \geq 1/2 + \eps. \]

Our starting point is the observation, implicit in \cite{Brassard}, that if $\BBLMTU(S)$ supports reliable computation, then $S$ causes communication complexity to become trivial.  In fact, it is not hard to see that the converse is true as well.  Thus, we have the following proposition.
\begin{proposition}\label{prop:thepoint}
Suppose that $C \subseteq S \subseteq NS$ and that $S$ is closed under wirings.  Then
 $S$ causes probabilistic communication complexity to become trivial (in the sense described above) if and only if $\BBLMTU(S)$ supports reliable computation. 
\end{proposition}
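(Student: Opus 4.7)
My plan is to prove both directions of the equivalence separately.

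For the forward direction ($\BBLMTU(S)$ supports reliable computation $\implies$ communication complexity becomes trivial): I would take an arbitrary $f: \{0,1\}^n \times \{0,1\}^n \to \{0,1\}$, regard it as a Boolean function on $2n$ bits, and let $c$ be a circuit in $\BBLMTU(S)$ computing it with advantage $\delta_0 > 0$. Alice would interpret her input $\mathbf{x}$ as distributed bits with local shares $(x_i, 0)$, and Bob would interpret $\mathbf{y}$ as $(0, y_i)$. Because $S$ is closed under wirings and contains the classical set $C$ (which supplies shared randomness and local processing), Alice and Bob can emulate $c$ gate-by-gate with no communication: every wire corresponds to a pair of local bits whose XOR equals the intended logical value. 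The final distributed output equals $f(\mathbf{x}, \mathbf{y})$ with advantage $\delta_0$, so they compute $f$ correctly with probability $\tfrac{1}{2} + \tfrac{\delta_0}{2}$, using at most one bit of communication to recombine shares if needed.

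For the reverse direction (trivial communication complexity $\implies$ reliable computation in $\BBLMTU(S)$): I would fix an arbitrary $g: \F_2^m \to \F_2$, set $f(\mathbf{x},\mathbf{y}) := g(\mathbf{x} \oplus \mathbf{y})$, and let $\Pi_S$ be a $t$-bit protocol computing $f$ with probability at least $\tfrac{1}{2} + \eps$. The task is to strip the $t$ bits of communication while preserving a constant advantage. Using shared randomness drawn from $C \subseteq S$, Alice and Bob sample a uniform guessed transcript $R \in \{0,1\}^t$. Each party locally executes their side of $\Pi_S$, treating the bits they would have ``received'' as the corresponding entries of $R$ and producing local outputs $a$ and $b$. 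Each party also computes a consistency flag ($c_A$ for Alice, $c_B$ for Bob) equal to $1$ iff every bit they would have transmitted under $\Pi_S$ matches the corresponding entry of $R$. Finally, Alice outputs $a \oplus (1 - c_A) r_A'$ and Bob outputs $b \oplus (1 - c_B) r_B'$ for fresh private random bits $r_A', r_B'$; the distributed circuit output is the XOR of these.

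The analysis centers on the event $E = \{c_A = c_B = 1\}$, which has probability exactly $2^{-t}$ because each ``forced'' bit is compared with an independent uniform bit of $R$. Conditional on $E$ the simulation coincides with a genuine execution of $\Pi_S$ on $(\mathbf{x}, \mathbf{y})$, so the distributed output equals $g(\mathbf{z})$ with probability at least $\tfrac{1}{2} + \eps$; when $E$ fails, at least one of $r_A', r_B'$ randomizes the output to uniform. The overall success probability is therefore
\[ 2^{-t}\bigl(\tfrac{1}{2} + \eps\bigr) + \bigl(1 - 2^{-t}\bigr) \cdot \tfrac{1}{2} = \tfrac{1}{2} + 2^{-t}\eps, \]
yielding reliable computation with the constant advantage $2^{1-t}\eps > 0$ that Definition~\ref{define:reliableComputationDefinition} demands.

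The main obstacle is this reverse direction: a naive transcript-guessing argument yields only $\Pr[\text{correct}] \geq 2^{-t}(\tfrac{1}{2} + \eps)$, which can be well below $\tfrac{1}{2}$ and hence useless for reliable computation. The local consistency checks combined with output randomization are what break this symmetry and deliver an advantage bounded away from $0$; this step leans on $C \subseteq S$ to supply both the shared transcript guess and the fresh private bits used for masking. A minor technicality about the convex-hull definition of $\BBLMTU(S)$ is absorbed by the hypothesis that $S$ is closed under wirings.
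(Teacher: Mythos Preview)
Your proposal is correct, and your reverse direction (transcript guessing plus local consistency flags and masking) is essentially identical to the paper's argument in Appendix~\ref{app:qm}. The paper phrases it as guessing Alice's and Bob's messages separately with shared strings $\mathbf{r},\mathbf{q}$ and randomizing each party's contribution when her own messages fail to match, arriving at advantage $\eps\cdot 2^{-2t}$; you guess the interleaved transcript $R$ and get $\eps\cdot 2^{-t}$ for $t$ total bits, but this is the same idea and the same level of rigor (neither you nor the paper fully unpacks why conditioning on the consistency event does not bias the box outputs---it does work, via nonsignalling, but both proofs leave that implicit).

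Your forward direction is genuinely more direct than the paper's. The paper follows the historical Brassard~\etal template: use the classical ``guess the other input'' trick to get a distributed bit with bias $2^{-n}$ toward $f(\mathbf{x},\mathbf{y})$, repeat, and then invoke an amplifier in $\BBLMTU(S)$ (namely a reliable Majority circuit) to boost to constant bias. You instead observe that reliable computation already hands you a circuit in $\BBLMTU(S)$ computing $f$ itself (viewed as a $2n$-bit function), and that Alice and Bob can run it on additively shared inputs with no communication. This bypasses the amplifier entirely. The one detail you gloss over is that the shares $(x_i,0)$ and $(0,y_j)$ must be rerandomized with shared randomness before being fed into a box, since $\BBLMTU(c)$ is defined with uniformly distributed Alice-shares; you gesture at this with ``$C\subseteq S$ supplies shared randomness,'' which is exactly the fix (and is the same rerandomization already used in Proposition~\ref{proposition:closedUnderWiringsImpliesBBLMTUisElementwise}).
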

The proof of Proposition~\ref{prop:thepoint} follows similar logic to \cite{Brassard}.  For completness, we include a proof in Appendix~\ref{app:qm}.  The basic idea behind the connection is as follows.  If $\BBLMTU(S)$ supports reliable computation, then in particular $\BBLMTU(S)$ contains a circuit $\mathrm{Amp}: \{0,1\}^t \to \{0,1\}$ that acts as an \em amplifier \em (Definition~\ref{define:amplifier}).  That is, given independent random bits $z_1, \ldots, z_t$ with bias $p>1/2$ (resp. $p<1/2$), $\mathrm{Amp}(z_1, \ldots, z_t)$ outputs a bit $a$ that is very likely to be $1$ (resp. $0$).\footnote{Indeed, $\mathrm{Amp}$ is simply the circuit that implements the Majority function.}  Suppose that Alice and Bob want to compute some function $f:\{0,1\}^n \times \{0,1\}^n \to \{0,1\}$.  It turns out that Alice and Bob can, using only classical techniques and without communicating, obtain bits $x_1, \ldots, x_t$ and $y_1, \ldots, y_t$ so that for each $i$, $x_i \oplus y_i$ has a very slight bias towards the correct answer.  However, this bias shrinks as $n$ grows.  To amplify their success so that this bias is a constant, Alice and Bob use $S$, as shown in Figure~\ref{fig:nonlocalcircuit}, to obtain bits $a$ and $b$ respectively so that $a \oplus b = \mathrm{Amp}( x_1 \oplus y_1, \ldots, x_t \oplus y_t )$.  Then $a \oplus b$  is very likely to be equal to the correct value of $f$.    Finally, Alice sends the single bit $a$ to Bob, who outputs $a \oplus b$.

Due to Proposition~\ref{prop:thepoint}, for the rest of the paper we will in fact take ``$\BBLMTU(S)$ supports reliable computation'' as the \em definition \em of trivial probabilistic communication complexity (Definition~\ref{define:TPCC}).

With the connection between trivial communication complexity and reliable computation established, we continue with an overview of our main results in both nonlocality and in reliable communication.

\subsection{A nonlocal game whose quantum value is the threshold for nontrivial communication complexity}\label{sec:overview_magicG}

Our main result in this section is the following.

\begin{thm}[A game whose quantum value is the threshold for nontrivial communication complexity]\label{thm:newNonlocalGame}
There exists a sequence of 2-player nonlocal games $G_k$ for $k \geq 1$ that satisfies properties (1-3) below, in which $S$ is any set of bipartite nonsignalling correlations closed under wirings and such that $S\supseteq Q$.
\begin{enumerate}
\item For all $k\geq 1$, $\omega_C(G_k) < \omega_Q(G_k) < 1.$
\item Fix any $k\geq 1$. If $\omega_S(G_k) > \omega_Q(G_k)$, then $S$ has trivial probabilistic communication complexity.
\item If $S$ has trivial probabilistic communication complexity, then there exists some $k\geq 1$ such that $\omega_S(G_k) > \omega_Q(G_k)$.
\end{enumerate}

\end{thm}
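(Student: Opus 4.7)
The natural construction is a distributed amplification game whose value with respect to $S$ directly encodes the amplifier power of $\BBLMTU(S)$ on $k$ inputs, so that properties (2) and (3) will follow from the amplification $\Leftrightarrow$ reliable-computation equivalence (Theorem~\ref{thm:AmplifierEquivToFTCC_andOverhead}) combined with Proposition~\ref{prop:thepoint}. I would define $G_k$ as follows: the referee samples a hidden bit $b^* \in \{0,1\}$ uniformly, then $k$ i.i.d.\ bits $z_1,\ldots,z_k$ with $\mathbb{P}[z_i = b^*] = 1/2 + \eta_k$ for a small bias $\eta_k > 0$ to be tuned, writes each $z_i = x_i \oplus y_i$ with $x_i$ uniform, and sends $\mathbf{x}$ to Alice and $\mathbf{y}$ to Bob. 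They win if $a \oplus b = b^*$. Any strategy using correlations in $S$ with shared randomness and local operations realizes exactly a $k$-input gate in $\BBLMTU(S)$, so $\omega_S(G_k)$ is determined by the best amplifier quality achievable in $\BBLMTU(S)$ on bias-$\eta_k$ inputs.

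For property (1), $\BBLMTU(C)$ reduces to classical randomized formulas, whose amplifier performance on $k$ i.i.d.\ bias-$\eta_k$ bits is strictly dominated (for a suitable choice of $\eta_k$) by amplifiers built in $\BBLMTU(Q)$ from CHSH-type correlations; this gives $\omega_C(G_k) < \omega_Q(G_k)$, while $\omega_Q(G_k) < 1$ since no $k$-input amplifier can perfectly recover $b^*$ from finitely many slightly biased bits. For property (2), if $\omega_S(G_k) > \omega_Q(G_k)$ for some fixed $k$, then $\BBLMTU(S)$ contains a $k$-input amplifier strictly better than every amplifier in $\BBLMTU(Q)$ at bias $\eta_k$. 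Iterating this amplifier in a tree of sufficient depth gives exponentially small error on any biased input, i.e., reliable computation in $\BBLMTU(S)$, via the forward direction of Theorem~\ref{thm:AmplifierEquivToFTCC_andOverhead}; Proposition~\ref{prop:thepoint} then yields trivial PCC.

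For property (3), if $S$ has trivial PCC, Proposition~\ref{prop:thepoint} gives that $\BBLMTU(S)$ supports reliable computation, and the converse direction of Theorem~\ref{thm:AmplifierEquivToFTCC_andOverhead} extracts an amplifier in $\BBLMTU(S)$ on some number of inputs $k^*$. This amplifier cannot lie in $\BBLMTU(Q)$ --- otherwise the same iteration would collapse PCC for $Q$ as well, contrary to the (implicit) assumption baked into the theorem that quantum mechanics does not itself trivialize PCC. Tuning $\eta_{k^*}$ so that the gap between this amplifier and the best quantum amplifier at bias $\eta_{k^*}$ is positive then yields $\omega_S(G_{k^*}) > \omega_Q(G_{k^*})$. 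The hard part will be arranging the biases $\eta_k$ so that the family $\{G_k\}$ simultaneously (a) exhibits a sharp, attained quantum optimum at each $k$, requiring a compactness argument on the set of quantum amplifiers inside the convex hull $\BBLMTU(Q)$, and (b) is rich enough that every amplifier in $\BBLMTU(S) \setminus \BBLMTU(Q)$ is separated from $\BBLMTU(Q)$ by the linear functional $c \mapsto \omega_{\{c\}}(G_k)$ for some $k$. The convex-hull structure in Definition~\ref{define:BBLMTU} is essential here: it makes the sets of $k$-input behaviors closed and convex, reducing the separation between $\BBLMTU(Q)$ and $\BBLMTU(S)$ to a convex separation problem whose witnessing hyperplane is precisely the game $G_k$ for a suitably chosen $\eta_k$.
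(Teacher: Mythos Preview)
Your overall architecture is close to the paper's: define an amplification game so that $\omega_S$ encodes amplifier quality in $\BBLMTU(S)$, then invoke Theorem~\ref{thm:AmplifierEquivToFTCC_andOverhead} and Proposition~\ref{prop:thepoint} for properties (2) and (3). However, there is a genuine gap in your argument for property (1), and it is not a technicality.

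Your game $G_k$ is a \emph{nonlocal computation} game: the winning predicate depends only on $a\oplus b$ and on $\mathbf{x}\oplus\mathbf{y}$. For every game of this form, quantum correlations give \emph{no} advantage over classical ones; this is precisely the result of Linden, Popescu, Short, and Winter (cited in the paper as \cite{noQAdvantage}). Hence $\omega_C(G_k)=\omega_Q(G_k)$ for your $G_k$, and your claim that ``CHSH-type correlations'' yield a strictly better amplifier in $\BBLMTU(Q)$ than any classical one is false. The paper runs into exactly this obstruction with its own amplification game $\Amp_k$ and must repair it by conjoining $\Amp_k$ with a pseudo-telepathy game (the Mermin--Peres magic square) via a careful mixture-and-conjunction construction (Lemma~\ref{lem:addQAdvantage}). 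This step---introducing a game with $\omega_C<\omega_Q=1$ and showing the composite game inherits properties (2)--(3) while gaining property (1)---is the missing idea in your proposal.

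A secondary issue concerns properties (2) and (3). Your plan to tune $\eta_k$ and appeal to convex separation is workable in spirit but vague; in particular, ``strictly better at bias $\eta_k$'' does not immediately give an amplifier in the sense of Definition~\ref{define:amplifier} (a fixed point with derivative exceeding $1$). The paper avoids all tuning: it chooses the input distribution $\pi$ for $\Amp_k$ so that $2\Omega_c(\Amp_k)-1$ is \emph{exactly} a constant multiple of $A_{\BBLMTU(c)}'(1/2)$ (Claim~\ref{cl:relateToAmp}). This makes $\omega_S(\Amp_k)>\omega_Q(\Amp_k)$ equivalent to $\sup_{u\in\BBLMTU(S)_k}A_u'(1/2)>1$, which plugs directly into Claim~\ref{claim:informationTheoreticTwoPartyAmplification} without any compactness or separation argument.
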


The proof of Theorem~\ref{thm:newNonlocalGame} is in Section~\ref{sec:newNLGame}, and we sketch the intuition below.
In Section~\ref{sec:overview_equiv} below, we state
Theorem~\ref{thm:AmplifierEquivToFTCC_andOverhead}, which roughly says that reliable computation is equivalent to containing an amplifier.  Inspired by this, we define the \em amplification game \em $\mathrm{Amp}_k$ as follows.  Alice and Bob get $\mathbf{x}, \mathbf{y} \in \{0,1\}^{2k+1}$ respectively, and their goal is to output $a,b$ so that
\[ a \oplus b = \mathrm{Maj}( \mathbf{x} \oplus \mathbf{y} ), \]
where $\oplus$ is applied coordinate-wise.  The inputs $\mathbf{x}$ and $\mathbf{y}$ are drawn from a distribution so that success at the amplification game using the correlations $S$ translates into an amplifier in $\BBLMTU(S)$.  Since, by Theorem~\ref{thm:AmplifierEquivToFTCC_andOverhead}, amplification is equivalent to reliable computation, this translates to reliable computation for $\BBLMTU(S)$, which in turn, via Proposition~\ref{prop:thepoint}, translates into trivial probabilistic communication complexity.   Formalizing these connections imply that the family $\mathrm{Amp}_k$ satisfies properties 2 and 3 of Theorem~\ref{thm:newNonlocalGame}.

However, it turns out that property 1 of Theorem~\ref{thm:newNonlocalGame} is not satisfied: $\omega_C(\mathrm{Amp}_k) = \omega_Q(\mathrm{Amp}_k)$.   This is disappointing if the goal is to use the axiom ``communication complexity is nontrivial'' to pin down $\omega_Q(\mathrm{Amp}_k)$, because it also pins down $\omega_C(\mathrm{Amp}_k)$.    To obtain our game $G_k$ as in Theorem~\ref{thm:newNonlocalGame}, we use $\mathrm{Amp}_k$ along with the \em Mermin-Peres magic square game \em \cite{merminMagicSquare, peresMagicSquare} in order to make a game which retains properties 2 and 3, but which also has a gap between $\omega_C(G_k)$ and $\omega_Q(G_k)$.

\subsection{The approach of Brassard \etal cannot be improved}\label{sec:overview_CHSH}

Our next result is that the approach of Brassard \etal in \cite{Brassard} cannot be improved.  Recall from the introduction that \cite{Brassard} shows that, if $\omega_S(CHSH) > 0.908$ for some set $S$ of bipartite correlations, then communication complexity is trivial in any world where $S$ is allowed.  The hope would be to extend this result to replace $0.908$ with $\omega_Q(CHSH) \approx 0.854$. If this were the case, then the axiom ``communication complexity is nontrivial'' could pin down the quantum value of the CHSH game.

Unfortunately, we provide evidence that the approach of \cite{Brassard} cannot be improved.  As per Proposition~\ref{prop:thepoint}, \cite{Brassard} show that $\BBLMTU(S)$ supports reliable computation for any $S$ that allows Alice and Bob to win the CHSH game with probability greater than $0.908$.  Their approach is to show that for any such $S$, $\BBLMTU(S)$ contains the gates $\{\land_\eps, \oplus_0\}$ for $\eps < 1/6$.  Then they show how to build an amplifier as a formula on these gates.

The hope to improve the result of \cite{Brassard}---to replace the threshold $0.908$ with a smaller number---was to make an amplifier out of $\{\land_\eps, \oplus_0\}$ for $\eps \geq 1/6$.  However, our main technical result, Theorem~\ref{thm:main}, shows that assuming a certain conjecture, this task is impossible.  Specifically, we show that the class $\mathcal{F}_{\eps,\tau}$ of formulas on $\{\land_\eps, \oplus_\tau\}$ gates does not support reliable computation for any $\eps > 1/6, \tau>0$, and conjecture (Conjecture~\ref{conj:AppliesToCircuits}) that this bound applies to circuits as well. In particular (using Theorem~\ref{thm:AmplifierEquivToFTCC_andOverhead} about the equivalence between amplification and reliable computation), this would imply $\mathcal{C}_{\eps, \tau}$ does not contain an amplifier, and due to Theorem~\ref{thm:noiseThresholdsAreStrict}, these thresholds can be sharpened to include $\eps \geq 1/6, \tau \geq 0$. 

Theorem~\ref{thm:main}, assuming Conjecture~\ref{conj:AppliesToCircuits},  rules out the approach of \cite{Brassard}, but there are still two avenues open.  First, Conjecture~\ref{conj:AppliesToCircuits} may be false.  Second, one could hope to use the more of the class $\BBLMTU(S)$ than just $\{\land_\eps, \oplus_0\}$ gates.  However, there are reasons to be pessimistic about both of these avenues. 
First, Conjecture~\ref{conj:AppliesToCircuits} is directly analogous to conjectures made by authors who have used the same general technique to upper bound the threshold for formulas \cite{maxGateNoise,pippengerFirstUpperBound,FalkUngerMinorBoundImprovements}. To the best of our knowledge, there are no classes of gates known for which the true threshold for circuits is known to lie above the bound that this technique gives for formulas.
Second, although we cannot currently rule it out, we would find it surprising if there were a more efficient way of using the ability to succeed at the CHSH game than to create noisy AND gates.
As one example of work in this direction, a nontrivial ``adaptive" protocol was introduced by \cite{pawlowski2009information} and used to show that the ability to win the CHSH game better than quantum mechanics violates a principle they termed {\it information causality}. This same protocol was later employed by  \cite{mori2016three}, who applied it to violating non-trivial communication complexity but could not improve on the threshold value found by \cite{Brassard}.
Thus, our results suggest that the axiom ``communication complexity is nontrivial'' may not pin down $\omega_Q(CHSH)$.

\subsection{Sharp thresholds for reliable computation in $\cC_\eps$}\label{sec:overview_thresh}
Having explained the implications of our results on reliable computation for nonlocality, we now explain these results themselves.  We begin with our main technical result, which is that the noise threshold for reliable computation using formulas on $\{\land_\eps, \oplus_0\}$ is $\eps = 1/6$.

In fact, we show something stronger,
in that we allow \em probabilistic mixtures \em of formulas.
That is, for a class of probabilistic circuits $\mathcal{C}$, we define $\conv \mathcal{C}$ to be the set of probabilistic circuits obtained as distributions on elements of $\mathcal{C}$.  
With this notation, our main theorem in this section is as follows.

\begin{thm}[Sharp threshold for reliable computation]\label{thm:mainOneSixthFormulas}\label{thm:main}
Let $\eps \in [1/6,5/6]$.
Let $\mathcal{C}_\eps$ be the class of circuits on $\{\land_\eps, \oplus_0\}$.
Assuming Conjecture~\ref{conj:AppliesToCircuits}, $\conv \mathcal{C}_\eps$ does not support reliable computation.
\end{thm}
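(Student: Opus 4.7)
The plan is to pass to the amplifier picture using Theorem~\ref{thm:AmplifierEquivToFTCC_andOverhead}: $\conv\mathcal{C}_\epsilon$ supports reliable computation if and only if it contains an \emph{amplifier}, a randomized function on $\{0,1\}^d$ whose output bias strictly exceeds the input bias for iid biased inputs in some neighborhood of uniform. So it suffices to rule out an amplifier in $\conv\mathcal{C}_\epsilon$ for $\epsilon \in [1/6,5/6]$. I combine this with Theorem~\ref{thm:noiseThresholdsAreStrict} (the no-reliable-computation region is closed), so that it suffices to handle $\epsilon$ strictly in $(1/6,5/6)$, with the endpoints following by closedness.

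Fourier setup: for each formula $\phi \in \mathcal{C}_\epsilon$, write its expected output in $\pm 1$ notation as a polynomial $f_\phi: \{-1,1\}^n \to [-1,1]$, and set $B(\phi) = \hat{f}_\phi(\emptyset)$ and $D(\phi) = \sum_i \hat{f}_\phi(\{i\})$. Under iid $\pm 1$ input bias $\mu$, one has $\mathbb{E}[f_\phi] = B(\phi) + D(\phi)\mu + O(\mu^2)$. Any two-sided amplifier $A$ must have $B(A)=0$ (else the output has a fixed sign as $\mu \to 0$, failing amplification of biases of the opposite sign) and $D(A)>1$. Define $\Phi(\phi) = D(\phi) + |B(\phi)|$; an amplifier therefore satisfies $\Phi(A)>1$. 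Since $B,D$ are linear in the randomized function $\phi$ and $|\cdot|$ is convex, $\Phi$ is subadditive on convex combinations, so it suffices to prove $\Phi(\phi) \leq 1$ on individual formulas $\phi \in \mathcal{C}_{1/6}$.

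I would prove $\Phi(\phi) \leq 1$ by structural induction on the formula. The base case $\phi = z_i$ has $(D,B)=(1,0)$, so $\Phi=1$. For a noiseless XOR gate $\phi = \phi_1 \oplus \phi_2$, use $f_\phi = f_{\phi_1} f_{\phi_2}$; for a noisy AND $\phi = \land_{1/6}(\phi_1,\phi_2)$, use $f_\phi = \tfrac{1}{3}\bigl(1 + f_{\phi_1} + f_{\phi_2} - f_{\phi_1} f_{\phi_2}\bigr)$, where the coefficient $\tfrac{1}{3} = (1 - 2\cdot\tfrac{1}{6})/2$ is exactly what makes the $\epsilon=1/6$ threshold tight. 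In the \emph{disjoint-inputs} case, $B_\phi$ and $D_\phi$ admit explicit algebraic expressions in $B_1,B_2,D_1,D_2$, and the bound $\Phi(\phi) \leq 1$ reduces to a short case analysis on the signs of $B_1,B_2$ using the inductive hypotheses $D_i + |B_i| \leq 1$ and $|B_i| \leq 1$. The extension from $\mathcal{C}_{1/6}$ to $\mathcal{C}_\epsilon$ for $\epsilon \in (1/6,1/2]$ follows from the convex-combination identity $\land_\epsilon = 2\epsilon \cdot (\text{random bit}) + (1-2\epsilon) \cdot \land_0$ (which implies $\land_\epsilon \in \conv\{\land_{1/6}, \text{random bit}\}$), and the range $[1/2,5/6)$ follows by the symmetry $\land_\epsilon \leftrightarrow \land_{1-\epsilon}$ via output negation; the endpoints are then handed to Theorem~\ref{thm:noiseThresholdsAreStrict}.

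The main obstacle is the inductive step when the two subformulas of a gate share input variables: then $B(\phi_1 \oplus \phi_2) = \mathbb{E}_z[f_1(z) f_2(z)]$ and $D(\phi_1 \oplus \phi_2) = \sum_i \mathbb{E}_z[z_i f_1(z) f_2(z)]$ are no longer simple products and sums of $B_i,D_i$, but involve cross-Fourier sums $\sum_S \hat{f}_1(S)\hat{f}_2(S)$ and their level-$1$ analogues. Controlling these in terms of the inductive invariant $\Phi(\phi_i) \leq 1$ and the pointwise bound $\|f_{\phi_i}\|_\infty \leq 1$ is the delicate part; most likely it requires strengthening the induction to track a richer polynomial invariant (for example a positive-combination certificate for $1 \pm f_\phi$ coming from the formula structure), or introducing a read-once reduction that renames repeated input leaves as fresh independent variables and shows the relevant potential does not decrease when we un-identify the shared coordinates.
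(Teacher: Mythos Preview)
Your reduction to ruling out an amplifier via Theorem~\ref{thm:AmplifierEquivToFTCC_andOverhead} is sound (noting that $\neg_0\in\conv\mathcal{C}_\epsilon$ via $\oplus_0$ with a constant input, so the $\neg_\kappa$ clause is automatic, and a general amplifier can be converted to one away from $1/2$ using the paper's Lemma~\ref{lem:existsSelfDualAmplfiier}, which justifies your focus on $B=0$, $D>1$). The potential $\Phi=D+|B|$ is well-chosen: restricted to a single variable it is exactly the quantity $|\delta|/(1-|2c-1|)$ that drives the paper's weight-decay Lemma~\ref{lem:weight}.

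The gap is precisely where you flag it, and it is a real gap rather than a detail. Once subformulas share an input, $B(\phi_1\oplus\phi_2)=\sum_S\hat f_1(S)\hat f_2(S)$ and $D(\phi_1\oplus\phi_2)=\sum_{i,S}\hat f_1(S)\hat f_2(S\triangle\{i\})$ involve \emph{all} Fourier levels, so an invariant that tracks only levels $0$ and $1$ cannot close the induction. Your two suggested fixes are not yet arguments: the ``richer polynomial invariant'' is unspecified, and the read-once reduction does not control $\Phi$ in the right direction (identifying two fresh coordinates can move Fourier mass between levels either way). Even the disjoint-inputs case needs more than ``a short case analysis'': a naive triangle-inequality bound on $|D|+|B|$ for $\land_{1/6}$ already overshoots $1$, so sign-tracking is required there too. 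In short, the proposal stops at the hard step.

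The paper sidesteps the shared-variable difficulty entirely by a different route. It never attempts a multivariate $\Phi\le 1$ induction; instead it (i) works in the larger model $\mathcal{C}_{\epsilon,\tau}$ with $\tau>0$ strictly, (ii) restricts all variables but one $X_i$, chosen via Lemma~\ref{lem:depth} to be deep on average in any mixture, and (iii) shows the \emph{univariate} per-gate contraction $|\delta_c|/(1-|2c-1|)\le \theta\cdot\max(\cdot)$ with $\theta<1$ whenever $\epsilon\in(1/6,5/6)$ and $\tau\in(0,1)$ (Lemma~\ref{lem:weight}). After restriction the formula is univariate and the tree structure makes the two input wires to each gate independent, which is exactly what your induction lacked. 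This gives Lemma~\ref{claim:mainClaim} only on the \emph{open} box, and then Theorem~\ref{thm:noiseThresholdsAreStrict} is applied twice---once in $\epsilon$, once in $\tau$---to close down to $\tau=0$ and $\epsilon\in\{1/6,5/6\}$. So in the paper's argument the closedness theorem is essential (the strict contraction genuinely fails at $\epsilon=1/6$ or $\tau=0$), whereas in your plan it was redundant if the $\epsilon=1/6$ Fourier argument had gone through.
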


The work of \cite{Brassard} implies that $\mathcal{C}_\eps$ supports reliable computation for all $\epsilon < 1/6$.  Thus, Theorem~\ref{thm:main} is tight.
The proof of Theorem~\ref{thm:mainOneSixthFormulas} is given in Section~\ref{sec:proofOneSixthFormulas}.
\begin{remark}[NOT gates]\label{not1}
A noise-free $\oplus_0$ gate may be used to construct a noise-free unary $\neg$ (NOT) gate, by setting one of the input wires to $1$.  Thus, $\mathcal{C}_{\eps}$ also includes $\neg_0$.

In fact, our entire proof (including Lemma~\ref{claim:mainClaim} below which does not include noiseless $\oplus_0$ gates) goes through in the presence of $\neg_0$ gates, and implies the slightly stronger statement that, defining the circuit model $\cC_{\eps, \tau, 0}$ of circuits from the gate set $\{\land_\eps, \oplus_\tau, \neg_0\}$, $\conv \cC_{\eps, \tau, 0}$ does not support reliable computation for any $\eps \in [1/6, 5/6]$, $\tau \in [0,1]$.  See Remark~\ref{rmk:NOTsComeForFree}.
\end{remark}

The reason we need to consider convex hulls is for the connection to nonlocality, described in Section~\ref{sec:overview_CHSH}.  
A noisy circuit corresponds to a strategy for the CHSH game, for which Alice and Bob are allowed shared randomness and hence can execute probabilistic mixtures of strategies.

The main ingredient in the proof of Theorem~\ref{thm:mainOneSixthFormulas} is the following lemma. 
\begin{lem}\label{claim:mainClaim}
Let $\mathcal{F}_{\eps, \tau}$ be the class of formulas on $\{\land_\eps, \oplus_\tau\}$, and suppose that $\eps \in (1/6, 5/6)$, and $\tau \in (0, 1)$.
Fix $\Delta >0$ and let $f: \{0,1\}^n \to \{0,1\}$ be a function that is computable with probability at least $1/2 + \Delta$ by functions in $\conv \mathcal{F}_{\eps, \tau}$.
Then $f$ depends on at most a constant number of inputs.
\end{lem}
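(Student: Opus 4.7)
The plan is to represent each randomized formula $F \in \conv \mathcal{C}_{\epsilon, \tau}$ by its signed conditional expectation
$\phi_F(X) := \mathbb{E}\bigl[(-1)^{F(x)}\bigr] \in [-1,1]$,
where $X_i = (-1)^{x_i}$ and the expectation is over both the gate noise and (for convex combinations) the distribution on formulas. Because gate noise is independent across the formula tree, conditioning on $X$ makes the two subformulas' outputs independent, yielding the composition rules
$\phi_{\oplus_\tau(F_1,F_2)} = (1-2\tau)\,\phi_{F_1}\phi_{F_2}$ and $\phi_{\land_\epsilon(F_1,F_2)} = \tfrac{1-2\epsilon}{2}\bigl(1 + \phi_{F_1} + \phi_{F_2} - \phi_{F_1}\phi_{F_2}\bigr)$. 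In particular $\phi_F$ is a multilinear polynomial in $X$. The hypothesis that $F$ computes $f$ with bias at least $\Delta$ becomes the pointwise bound $(-1)^{f(x)}\phi_F(X) \geq 2\Delta$, and convex combinations pass through by linearity of $\phi_F$ in the formula distribution.

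Writing $\phi_F = E_i\phi_F + X_i\,D_i\phi_F$ with $E_i\phi_F, D_i\phi_F$ independent of $X_i$, note that if $f$ depends on coordinate $i$ then some pair $(x,x^{\oplus i})$ forces $|\phi_F(x)-\phi_F(x^{\oplus i})| \geq 4\Delta$, giving $\|D_i\phi_F\|_\infty \geq 2\Delta$. The heart of the proof is then to design a non-negative potential $\Phi(F)$ on formulas satisfying (a) $\Phi(\text{leaf}) = 1$, (b) a contraction $\Phi(G(F_1,F_2)) \leq \lambda_G\bigl(\Phi(F_1)+\Phi(F_2)\bigr)$ with $\lambda_G < \tfrac12$ at each gate type (so that $\Phi(F) \leq C_{\epsilon,\tau}$ uniformly by induction on formula structure), and (c) $\Phi(F) \geq c\,\sum_i\|D_i\phi_F\|_\infty$ for some $c > 0$. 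These three properties together force the number of coordinates on which $f$ depends to be at most $C_{\epsilon,\tau}/(2c\Delta) = O_{\epsilon,\tau,\Delta}(1)$, which is the conclusion.

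The principal obstacle is step (b), specifically obtaining the sharp threshold at $\epsilon = 1/6$. A direct polynomial computation gives
$D_i\phi_{\land_\epsilon} = \tfrac{1-2\epsilon}{2}\bigl((1-E_i\phi_{F_2})D_i\phi_{F_1}+(1-E_i\phi_{F_1})D_i\phi_{F_2}\bigr)$,
and with the naive potential $\Phi(F) = \sum_i\|D_i\phi_F\|_\infty$ the crude bound $\|1-E_i\phi_{F_j}\|_\infty \leq 2$ only yields $\lambda_\land \leq 1-2\epsilon$, i.e.\ the weak threshold $\epsilon > 1/4$. To close the gap to $\epsilon > 1/6$, the plan is to exploit the structural bound $\|\phi_{\land_\epsilon(\cdot,\cdot)}\|_\infty \leq 1-2\epsilon$ (immediate from the AND composition rule) and iterate it: deeper AND-subformulas take values in a shrinking interval around the attracting fixed point of the symmetric AND iterate, which at $\epsilon = 1/6$ is the golden ratio $\phi^\ast = (\sqrt{5}-1)/2$. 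This yields a strictly smaller $\|1-E_i\phi_{F_j}\|_\infty$, and hence a tighter effective $\lambda_\land$. The concrete plan is to augment the potential with auxiliary $L^\infty$-type information about $\phi_F$ — e.g.\ $\Phi(F) = w(\phi_F)\sum_i\|D_i\phi_F\|_\infty$ for a suitable weight function $w$ depending on the range of $\phi_F$ — and to optimize the resulting contraction inequality at the AND gate; the precise constant $1/6$ should emerge from this optimization. The analogous multiplicative contraction $\|\phi_{\oplus_\tau}\|_\infty \leq (1-2\tau)\|\phi_{F_1}\|_\infty\|\phi_{F_2}\|_\infty$ handles the XOR case for any $\tau \in (0,1)$.
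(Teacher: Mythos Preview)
Your framework is set up correctly and the composition rules for $\phi_F$ are right, but the core step~(b) — a uniform gate-by-gate contraction $\lambda_G < \tfrac12$ — is not achievable in the regime you care about, and the sketch you give does not close this gap.

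The obstruction is already visible at the $\oplus_\tau$ gate with small $\tau$, which your last sentence brushes past. Take $F=\oplus_\tau(X_1,X_2)$: then $\sum_i\|D_i\phi_F\|_\infty=2(1-2\tau)$ while each leaf has $\sum_i\|D_i\phi\|_\infty=1$, so the naive contraction is $1-2\tau$, arbitrarily close to $1$. Your fix is to weight by $w(\|\phi_F\|_\infty)$; but forcing $\lambda_\oplus<\tfrac12$ at this single gate already requires $w(1-2\tau)<\tfrac{w(1)}{2(1-2\tau)}$, i.e.\ $w$ must nearly halve as its argument moves from $1$ to $1-2\tau$. Iterating through a balanced XOR tree drives $\|\phi\|_\infty\to 0$ and forces $w\to 0$, contradicting the lower bound $w\ge c>0$ you need for property~(c). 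The remark that $\|\phi_{\oplus_\tau}\|_\infty$ contracts multiplicatively is true but does not by itself produce a potential satisfying (a)--(c). A similar problem arises at $\land_\epsilon$ for $\epsilon$ just above $1/6$: any per-wire contraction you can extract there is only slightly below $1$, not below $1/2$ (your golden-ratio heuristic describes the attracting range of $\phi$ but does not translate into an additive contraction of a potential with the required constant).

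The paper avoids this by \emph{not} asking for $\lambda<\tfrac12$. Instead it tracks, for a \emph{single} variable $X_i$ after restricting all others, the normalized influence $\frac{|D_i\phi|}{1-|E_i\phi|}$ (in your notation), and proves that this quantity contracts by some $\theta<1$ through \emph{each} gate via a max, $R_c\le\theta\max(R_a,R_b)$ (Lemma~\ref{lem:weight}). The per-wire normalization by $1-|E_i\phi|$ is exactly what produces the sharp constant $1/6$ in the $\land_\epsilon$ case. Because the contraction is only $<1$, a second ingredient is needed: a pigeonhole over depths (Lemma~\ref{lem:depth}) showing that in any mixture of formulas some relevant variable $X_i$ satisfies $\sum_j p_j\,\theta^{\rho(F_j,X_i)}=o(1)$. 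Composing these gives $|D_i\phi_C|=o(1)$ on a suitable restriction, which is the desired contradiction. So the missing idea in your plan is to replace ``global potential with $\lambda<\tfrac12$'' by ``single-variable normalized influence with $\theta<1$, plus a depth argument''; the normalization is per wire (by $1-|E_i\phi|$), not a global weight $w(\phi_F)$.
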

The proof of Lemma~\ref{claim:mainClaim} is given in Section~\ref{sec:proofOneSixthFormulas}.
Our proof may be viewed a probabilistic analogue of an argument first presented by Pippenger, which reduces the problem of formulas reliably computing functions that depend on many arguments to the problem of deep formulas computing a function of a single argument \cite{pippengerFirstUpperBound}.
The proof idea is as follows.  Let $f$ be some function.  We show that for any distribution on formulas $C \in \conv \mathcal{F}_{\eps,\tau}$, there is some variable $X_i$ that $f$ depends on, which appears reasonably deep, on average, in the formulas in the support of $C$.
This means that $X_i$ must pass through many noisy gates before reaching the output, which implies that $C$ cannot compute $f$ too accurately. 
While the basic idea is similar to the argument of \cite{pippengerFirstUpperBound}, since we consider distributions on formulas and also allow for arbitrarily small $\tau > 0$, new ideas are required to establish Lemma~\ref{claim:mainClaim}.

We make Conjecture~\ref{conj:AppliesToCircuits}, which states that Lemma~\ref{claim:mainClaim} applies to circuits as well as formulas.
Conjecture~\ref{conj:AppliesToCircuits} and Lemma~\ref{claim:mainClaim} come close to establishing Theorem~\ref{thm:mainOneSixthFormulas}. Indeed, since there are functions which depend on more than a constant number of inputs (for example, the AND of $n$ bits), Conjecture~\ref{conj:AppliesToCircuits} and Lemma~\ref{claim:mainClaim} imply that such functions cannot be computed in $\conv \mathcal{C}_{\eps,\tau}$ with any constant probability larger than $1/2$, provided that $\eps \in (1/6, 5/6)$ and $\tau \in (0,1)$.  The final step to the proof of Theorem~\ref{thm:mainOneSixthFormulas} is to handle the case of $(\eps, \tau) \in \{1/6, 5/6\}\times\{0,1\}$.  We do this by showing that the set of $(\eps, \tau)$ for which $\conv \mathcal{C}_{\eps, \tau}$ does \em not \em support reliable computation is closed.

\begin{remark}[The difference between Theorem~\ref{thm:main} and Lemma~\ref{claim:mainClaim}]
While Lemma~\ref{claim:mainClaim} and Conjecture~\ref{conj:AppliesToCircuits} are a key step to Theorem~\ref{thm:main}, there is still a big gap.
The proof of Lemma~\ref{claim:mainClaim} crucially relies on being able to take $\eps > 1/6$ and $\tau > 0$, where the inequalities are strict.  In particular, Lemma~\ref{claim:mainClaim} is clearly false if we take $\tau = 0$, as the example of the parity function shows.
Since $\tau=0$ is the setting we care about for our application, the step from Lemma~\ref{claim:mainClaim} to Theorem~\ref{thm:main} is important.
\end{remark}

Our proof that the ``non-reliable computation region'' is closed uses a
characterization---which may be of independent interest---of those circuit models $\mathcal{C}$ whose convex hulls $\conv \mathcal{C}$ support reliable computation.  More precisely, we formalize  
the relationship between { amplification} and classical fault-tolerant computation.  We discuss this formalization more in the next section.

\subsection{Equivalence between reliable computation and amplification}\label{sec:overview_equiv}
We say that a function $f:\{0,1\}^n \to \{0,1\}$ is an \em amplifier \em (Definition~\ref{define:amplifier}) if it amplifies the probability of a $1$ (resp. $0$) when given as input i.i.d. bits which are slightly biased towards $1$ (resp. $0$). 
The  relationship between amplifiers and reliable computation has been implicitly exploited in previous work.
However, making this relationship explicit (which turns out to be somewhat involved), is helpful in proving Theorem~\ref{thm:mainOneSixthFormulas}.  Moreover, it has applications to nonlocal games, as discussed in Sections~\ref{sec:overview_magicG} and \ref{sec:overview_CHSH}.  We hope that this formalization will be useful for other questions in fault-tolerant computation.

We establish the relationship between reliable computation and amplification with the following theorem.  Note that this theorem applies to arbitrary circuits, not just formulas. 

\begin{thm}[Equivalence between reliable computation and amplification]\label{thm:AmplifierEquivToFTCC_andOverhead}
Let $\mathcal{C}$ denote a circuit model closed under composition. Then $\conv\mathcal{C}$ supports reliable computation if and only if $\conv\mathcal{C}$ contains both an amplifier and a $\neg_\kappa$ gate for $\kappa < 1/2$.

Further,
given a circuit model $\mathcal{C}$ such that $\conv \mathcal{C}$ supports reliable computation, there exists a constant $s$ such that for any function $f:\{0,1\}^n \to \{0,1\}$ computable by a depth-$d$ circuit of noiseless $\NAND$ gates, $f$ can be computed by a depth-$(s\cdot d)$ circuit in $\conv \mathcal{C}$ with failure probability bounded away from $1/2$.
 \end{thm}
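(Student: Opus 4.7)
\medskip
\noindent\textbf{Proof plan.} The theorem contains two claims: an equivalence between reliable computation and the pair (amplifier, noisy NOT), and a depth-efficiency statement. I would handle the equivalence first, then the depth bound.

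\medskip
\noindent\emph{Forward direction.} Assume $\conv\mathcal{C}$ supports reliable computation with some advantage $\delta_0>0$. Applying Definition~\ref{define:reliableComputationDefinition} to the specific function $\neg$ gives a circuit in $\conv\mathcal{C}$ that is a $\neg_\kappa$ gate with $\kappa\le \tfrac12-\tfrac{\delta_0}{2}<\tfrac12$. Applying it to $\Maj_N$ for each odd $N$ gives a circuit $C_N\in\conv\mathcal{C}$ with pointwise advantage $\delta_0$ at computing the Majority function. I would verify that for $N$ large enough, $C_N$ qualifies as an amplifier: when the inputs are i.i.d.\ $\Ber(\tfrac12+\delta)$, the \emph{true} Majority outputs $1$ with probability $\tfrac12+\delta^\star$ where $\delta^\star=\Omega(\sqrt{N}\delta)$ in a neighbourhood of $\tfrac12$, and averaging the pointwise advantage bound over this input distribution shows that $\P[C_N(X)=1]$ differs from $\tfrac12+\delta^\star$ by at most $\tfrac12(1-\delta_0)$, leaving a net bias exceeding $\delta$ once $N$ is large enough relative to $\delta_0$ and the working range of $\delta$.

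\medskip
\noindent\emph{Reverse direction.} Suppose $\conv\mathcal{C}$ contains an amplifier $A$ and a gate $\neg_\kappa$ with $\kappa<\tfrac12$. The strategy has three steps. First, iterating $A$ on $d$ independent copies of a wire of bias $\ge\delta$ drives the bias arbitrarily close to $1$; analogously, iterating $A$ with $\neg_\kappa$ inserted on selected inputs gives an arbitrarily reliable negation and, more importantly, a collection of effective gates that are not forced to lie in the same closure class as $A$ alone. Second, I would assemble, inside $\conv\mathcal{C}$, a constant-depth gadget from these primitives that implements a noisy $\NAND$ (or any other universal gate) with error bounded away from $\tfrac12$. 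Third, given any Boolean function $f$, I would write a $\NAND$-formula for $f$ and simulate it in $\conv\mathcal{C}$ by replacing each $\NAND$ with the gadget above and inserting an application of $A$ after every layer to restore the noise level.

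\medskip
\noindent\emph{Depth-efficiency (the ``further'' clause).} Once reliable computation is established, the $O(d)$ depth bound follows from a textbook von Neumann construction. I would encode each wire of the noiseless depth-$d$ $\NAND$ circuit as a redundant bundle of $k$ independent copies, replace each $\NAND$ gate by the constant-depth gadget built above, and apply $A$ to each bundle immediately afterward as the restoring map. Because the amplification map strictly contracts the error rate around its unstable fixed point at $\tfrac12$, a single pair $(k,\text{gadget depth})$ depending only on $\mathcal{C}$ stabilises the error; the total depth is $s\cdot d$ for a constant $s=s(\mathcal{C})$, and majority-decoding the final bundle yields $f(\mathbf{x})$ with failure probability bounded away from $\tfrac12$.

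\medskip
\noindent\emph{Expected main obstacle.} The hardest step is the second one in the reverse direction: constructing a noisy universal gate from only $A$ and $\neg_\kappa$. Natural amplifiers (e.g.\ Majority) are self-dual and, together with $\neg_\kappa$ alone, cannot escape that closure class; one must exploit asymmetric pre-composition by $\neg_\kappa$ on a subset of the amplifier's inputs, or use mixtures permitted by $\conv\mathcal{C}$, to produce an effective gate outside the self-dual (and other restrictive) clones, while simultaneously budgeting the noise so that the composition never crosses the $\tfrac12$ error barrier.
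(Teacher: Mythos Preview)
Your proposal follows the same overall architecture as the paper's proof: forward direction via reliable computation of $\neg$ and of $\Maj_N$; reverse direction via building a noisy-$\NAND$ gadget out of the amplifier and $\neg_\kappa$, then substituting it into a $\NAND$ tree. A few differences worth flagging.

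\emph{Forward direction.} The paper does not argue about the bias of $C_N$ on $\Ber(1/2+\delta)$ inputs; it computes $A_{c_n}'(1/2)$ directly from the expansion $A_{c_n}(p)=\sum_{\mathbf{x}}p^{|\mathbf{x}|}(1-p)^{n-|\mathbf{x}|}\P[c_n(\mathbf{x})=1]$ and the pointwise advantage bound, obtaining $A_{c_n}'(1/2)\gtrsim\gamma\sqrt{n}$. Crucially, it then \emph{balances} by mixing $c_n$ with a constant function to force $A_{b_n}(1/2)=1/2$, so that $1/2$ is an honest fixed point with derivative~$>1$. Your sketch shows $A_{C_N}(1/2+\delta)>1/2+\delta$ but does not yet locate a fixed point; you will need either the balancing trick or a two-sided argument plus the intermediate value theorem.

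\emph{Reverse direction.} You correctly isolate the real obstacle and name the right tool (mixtures). The paper's explicit solution is the two-input mixture
\[
D_{r,s}=r\,\mathbf{0}+s\,\mathbf{1}+\tfrac{1-r-s}{2}(\mathbf{x}_1+\mathbf{x}_2),
\]
which, for suitable $r,s$, sends biases $(I_\pm,I_\pm)$ to the correct side of $1/2$ like an $\AND$. This sits entirely inside $\conv\mathcal{C}$ since constants and projections are available, and it is what breaks out of the self-dual clone. The full gadget is $\mathbf{N}=\mathcal{N}_{\ell_1}\circ c^{\circ k}\circ\mathcal{N}_{\ell_0}\circ\neg_\kappa\circ D_{r,s}$, where $c$ is a self-dual amplifier away from $1/2$ built first from the given amplifier and $\neg_\kappa$.

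\emph{Depth.} The paper does not use von Neumann bundles. The gadget $\mathbf{N}$ is constructed so that it maps the intervals $I_-,I_+$ back into $I_-,I_+$ exactly as $\NAND$ maps $\{0,1\}$; a direct recursive substitution $\mathcal{J}(T)$ of $\mathbf{N}$ for each $\NAND$ in the tree already keeps every intermediate bias in $I_\pm$ by induction, so $d(\mathcal{J}(T))=d(\mathbf{N})\cdot d(T)$ with no separate restoring layer. Your bundle approach would also work but is a bit heavier than needed here.
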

Theorem~\ref{thm:AmplifierEquivToFTCC_andOverhead} may be viewed as a generalization of a number of results of a similar flavor which have been proven, explicitly or implicitly, for specific circuit models \cite{Maj3VN,evansSchulmanNoisyCircuitBound,HW91, maximumTolerableNoiseKinputGates,maximumTolerableNoiseKinputGates,maxGateNoise,FalkUngerMinorBoundImprovements,Brassard}.
However, to the best of our knowledge no equivalence in this generality has been stated before; perhaps this is because previous work has not explicitly considered convex hulls of circuit models.

The depth statement in Theorem~\ref{thm:AmplifierEquivToFTCC_andOverhead} (the ``Further'' clause) is closely related to many classical results on the depth and size complexity overhead for fault-tolerance, for example \cite{pippenger1985networks,dobrushin1977upper}.  Our result in Theorem~\ref{thm:AmplifierEquivToFTCC_andOverhead} differs from previous work in that it holds whenever reliable computation is possible (as opposed to for some fixed noise level or gate set). 
To the best of our knowledge, the depth statement in Theorem~\ref{thm:AmplifierEquivToFTCC_andOverhead} is not an immediate consequence of prior work.

One direction of Theorem~\ref{thm:AmplifierEquivToFTCC_andOverhead} is straightforward. Supposing $\conv \mathcal{C}$ supports reliable computation, it can reliably compute the majority function on $k$ variables $\Maj^{(k)}$; for sufficiently large $k$ this leads to an amplifier in $\conv \mathcal{C}$. 

The other direction is more involved.  If $\conv \mathcal{C}$ contains an amplifier and a $\neg_\kappa$ gate for $\kappa < 1/2$, then one can construct a map that behaves similarly to a $\NAND$ gate, and use this map to reliably compute an arbitrary Boolean function. Since the depth of this approximate $\NAND$ gate is constant (for fixed $\kappa$), this means that the depth of any fault-tolerant formula is only a constant factor larger than the depth of the noise-free formula. We prove Theorem~\ref{thm:AmplifierEquivToFTCC_andOverhead} in Section~\ref{sec:pfEquiv}.

For our purposes, Theorem~\ref{thm:AmplifierEquivToFTCC_andOverhead} is useful for two reasons.
First, it is useful in formalizing the connection to nonlocal games, as discussed below.
Second, Theorem~\ref{thm:AmplifierEquivToFTCC_andOverhead}
makes it easy to prove the following Theorem~\ref{thm:noiseThresholdsAreStrict}, which provides the last part of the proof of Theorem~\ref{thm:mainOneSixthFormulas} that we outlined above in Section~\ref{sec:overview_thresh}.
\begin{thm}\label{thm:noiseThresholdsAreStrict}
Let $\mathcal{C}_\epsilon$ denote a circuit model on a gate set $\mathcal{G}$ which includes a noisy gate $g_\eps$.
Let $I \subseteq [0, 1]$ denote the set of $\epsilon$ for which $\conv \mathcal{C}_\epsilon$ does not support reliable computation (varying the noise on $g_\eps$ and keeping all other gates in $\mathcal{G}$ fixed). Then $I$ is closed.
\end{thm}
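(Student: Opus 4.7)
The plan is to show that the complement $[0,1] \setminus I$ is open, using Theorem~\ref{thm:AmplifierEquivToFTCC_andOverhead} to convert an a-priori infinite-dimensional condition (reliable computation of \emph{every} function on \emph{every} input length) into the existence of two \emph{finite} witnesses, after which continuity in $\epsilon$ takes over. Fix $\epsilon_0 \in [0,1] \setminus I$, so that $\conv \mathcal{C}_{\epsilon_0}$ supports reliable computation. By Theorem~\ref{thm:AmplifierEquivToFTCC_andOverhead}, there is an amplifier $A \in \conv \mathcal{C}_{\epsilon_0}$ and a $\neg_\kappa$ gate in $\conv \mathcal{C}_{\epsilon_0}$ for some $\kappa < 1/2$. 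Each is a convex combination of finitely many finite circuits $C_1,\ldots,C_m \in \mathcal{C}_{\epsilon_0}$, where $C_j$ uses $g_\epsilon$ at some finite set of positions and all other gates in $\mathcal{G}$ are held fixed.

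The key observation is that for each fixed circuit $C_j$ and each fixed input $\mathbf{x}$, the probability $\mathbb{P}[C_j(\mathbf{x}) = 1]$ is a polynomial in $\epsilon$ of degree at most the number of $g_\epsilon$ gates in $C_j$, because $g_\epsilon$ is the only $\epsilon$-dependent source of randomness. Hence the output probabilities of the two convex combinations witnessing $A$ and $\neg_\kappa$ are continuous (indeed, polynomial) functions of $\epsilon$ on all finitely many relevant inputs. The amplifier property (Definition~\ref{define:amplifier}) and the condition $\kappa < 1/2$ are specified by a \emph{finite} conjunction of strict inequalities on these output probabilities. Since each such inequality defines an open condition on $\epsilon$, their intersection is an open neighborhood $U \ni \epsilon_0$ on which the same two convex combinations of circuits — now interpreted with the gate $g_\epsilon$ of noise parameter $\epsilon$ — remain an amplifier and a $\neg_{\kappa'}$ gate with $\kappa' < 1/2$ in $\conv \mathcal{C}_\epsilon$. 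Applying Theorem~\ref{thm:AmplifierEquivToFTCC_andOverhead} in the reverse direction, $\conv \mathcal{C}_\epsilon$ supports reliable computation for every $\epsilon \in U$, so $U \subseteq [0,1] \setminus I$.

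I do not anticipate a serious obstacle: essentially every step is either a direct invocation of Theorem~\ref{thm:AmplifierEquivToFTCC_andOverhead} or a standard continuity argument. The only point that warrants care is checking that the amplifier condition really is expressible as a finite system of strict inequalities on continuous (polynomial in $\epsilon$) functions — this is where the restriction to $\conv \mathcal{C}_\epsilon$ (rather than, say, a limit of convex combinations) matters, since a single convex combination involves only finitely many circuits, each of finite size, and only finitely many input distributions need to be checked. Without Theorem~\ref{thm:AmplifierEquivToFTCC_andOverhead}, closedness of $I$ would require a uniform-in-$n$ preservation of reliable-computation circuits under perturbations of $\epsilon$, which is not obvious; the amplification equivalence is precisely what lets us localize the question to finite witnesses.
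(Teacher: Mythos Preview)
Your overall strategy matches the paper's: invoke Theorem~\ref{thm:AmplifierEquivToFTCC_andOverhead} to reduce to two finite witnesses, then argue continuity in $\epsilon$. However, the key assertion that ``the amplifier property and the condition $\kappa < 1/2$ are specified by a finite conjunction of strict inequalities on these output probabilities'' is not correct as stated, and this is precisely where the substantive work lies.

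The amplifier property (Definition~\ref{define:amplifier}) requires the \emph{existence} of $p_0 \in (0,1)$ with $A_f(p_0) = p_0$ and $A_f'(p_0) > 1$: the first is an equality, and $p_0$ is not known in advance. Likewise, being a $\neg_\kappa$ gate means $\mathbb{P}[f(0)=1] = 1-\kappa$ and $\mathbb{P}[f(1)=1] = \kappa$ for the \emph{same} $\kappa$, which forces the equality $\mathbb{P}[f(0)=1] + \mathbb{P}[f(1)=1] = 1$. When you perturb $\epsilon$, the same mixture $N_{\epsilon'}$ will typically violate this equality and hence fail to be $\neg_{\kappa'}$ for any $\kappa'$; and for the amplifier, the fixed point $p_0$ moves, so you must argue that the derivative condition still holds at the \emph{new} fixed point, which does not follow from preserving finitely many strict inequalities at fixed inputs.

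The paper handles both issues explicitly. For the amplifier it proves Lemma~\ref{lem:nudgeCrossing}: if the bivariate polynomial $A_{M_\epsilon}(p) - p$ has a zero with positive $p$-derivative at $(p_0,\epsilon_0)$, then for nearby $\epsilon'$ there is a nearby $p'$ with the same property, via the intermediate value theorem together with continuity of $\min_{p \in [p_1,p_2]} \partial_p A$ (Lemma~\ref{lem:infimumIsContinuous}). For the $\neg_\kappa$ gate, after applying the same lemma to locate $p'$ with $A_{N_{\epsilon'}}(p') = 1/2$ and negative derivative, the paper constructs a \emph{new} mixture $N' = \lambda(N_{\epsilon'} \circ \mathbf{1}_c) + (1-\lambda)N_{\epsilon'}$ with $c,\lambda$ chosen so that $A_{N'}(1/2) = A_{N_{\epsilon'}}(p')$, restoring the required equality at $1/2$. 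Your proposal correctly identifies that finiteness of the witnesses is what makes continuity usable, but the passage from ``continuous dependence'' to ``open condition'' is not the one-liner you suggest.
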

Notice that Theorem~\ref{thm:noiseThresholdsAreStrict} directly implies that there exists a nonzero noise threshold for any circuit model that is closed under convex combinations and based on a functionally complete set of logic gates.

The proof of Theorem~\ref{thm:noiseThresholdsAreStrict} is a simple consequence of Theorem~\ref{thm:AmplifierEquivToFTCC_andOverhead}. Suppose that $\mathcal{C}_\eps = \conv \mathcal{C}_\eps$ supports reliable computation.  Then Theorem~\ref{thm:AmplifierEquivToFTCC_andOverhead} implies there is an amplifier and a $\neg_\kappa$ gate for $\kappa < 1/2$ in $\mathcal{C}_\eps$.  
Using some elementary analytical lemmas, one can show that these finite circuits retain their nature despite a sufficiently small ``nudge" in the noise rate $\eps$, and hence applying Theorem~\ref{thm:noiseThresholdsAreStrict} again, the resulting circuit model still supports reliable computation.
We prove Theorem~\ref{thm:noiseThresholdsAreStrict} in Section~\ref{section:proofNoiseThresholdsStrict}.

\section{Related Work}\label{sec:related}

In this section, we briefly review related work.  We begin with related work in quantum mechanics, and then discuss related work in classical fault-tolerant computation.

\subsection{Axiomatization of Quantum Mechanics}\label{ssection:relatedWorkQuantumPhysics}
Features of quantum mechanics like the uncertainty principle and quantum entanglement have perplexed scientists since its early days, ultimately requiring a wholesale reconsideration of information theory and the limits of computation. At the same time, quantum mechanics lacks the equivalent of the clear and concise physical principles from which Einstein derived special relativity. Instead, it is usually presented as a highly effective mathematical framework without prior or deeper justification. Given its radical implications for the definition and behavior of information, there have been several proposals for sets of information-theoretic axioms that can be used to derive quantum mechanics. Examples include those of Hardy~\cite{hardy2001quantum} as well as Mueller and Masanes~\cite{mueller2016information}. While those efforts are enlightening in many ways, they don't directly address one of the most profound features of quantum mechanics, quantum nonlocality. However, the concise and uncontroversial requirement that ``communication complexity is not trivial'' is known to place stringent constraints on that nonlocality, so it is intriguing to consider whether the requirement could function as an axiom precisely delineating the limits of quantum mechanics~\cite{implausibleConsequencesSuperstrongNonlocality, Brassard, distillMaxPRbox}.

The work of van Dam~\cite{implausibleConsequencesSuperstrongNonlocality} established that the ability to win the CHSH game with probability $1$ (that is, access to a so-called \em Popescu-Rohrlich \em (PR) box) causes communication complexity to be trivial.  As discussed above, the work of Brassard, Buhrman, Linden, M\'ethot, Tapp, and Unger~\cite{Brassard} extended this result to apply to success probability greater than $0.908$.  However, there is still a gap between this value and $\omega_Q(CHSH) \approx 0.8536$.

The work of Forster, Winkler and Wolf~\cite{FWW09} shows that certain superquantum correlations on the boundary of the NS polytope can be distilled into perfect PR boxes, and thus also collapse communication complexity.%
\footnote{
Distillation protocols are descriptions of wirings that combine multiple ``weaker'' nonlocal correlations such that the new correlation is more useful (e.g., for playing the CHSH game). 
}
Brunner and Skrzypczyk~\cite{distillMaxPRbox} considered adding noise to these correlations and extended this set of superquantum correlations that collapse communication complexity (which we will call the {\it distillable set}). Allcock, Brunner, Linden, Popescu, Skrzypczyk, and V\'ertesi~\cite{closedCorrelationSets} introduced the notion of a set of correlations remaining closed under wirings; they exhibited convex sets of correlations without this property.

The distillable set has points arbitrarily close to a vertex of the polytope of classical correlations $C$. Unfortunately this does not produce a nonlocal game whose quantum value is exactly limited by the requirement that communication complexity be nontrivial.
Geometrically, this is because supporting hyperplanes of $Q$ at points of intersection between $Q$ and the boundary of the distillable set are supporting hyperplanes of $NS$ itself. Therefore superquantum advantage at such games is not possible in any (possibly superquantum) theory.

Other works have computed the optimality of distillation protocols, shown impossibility results within restricted settings (e.g., for nonadaptive procedures), and exhibited closed sets of superquantum correlations~\cite{noDistill1, noDistill2, noDistill3,hoyer2010optimal, zoo}.\footnote{\cite{hoyer2010optimal} also extended the distillable set.} One might have hoped to improve the construction of Brassard \etal \cite{Brassard} by first distilling slightly superquantum noisy PR boxes to obtain better ones, which would then collapse communication complexity.  However, prior work has yet to discover a distillation procedure for noisy PR boxes, or to rigorously rule out that one exists.

Navascu\'es, Guryanova, Hoban, and Ac\'in~\cite{almostQuantum}  introduced the set $\tilde{Q}$ of ``almost-quantum" correlations, which strictly contains $Q$ and has nontrivial communication complexity. This result implies there are many superquantum correlations which do not collapse communication complexity, and that other principles beyond ``communication complexity is nontrivial'' are required to discriminate points between $Q$ and $NS\setminus Q$. On the other hand, since $\omega_{\tilde{Q}}(CHSH) = \omega_Q(CHSH)$, this left open the possibility that $\omega_Q(CHSH)$ is the maximum value consistent with nontrivial communication complexity.

\begin{figure}
\begin{center}
\definecolor{cbvermillion}{RGB}{213,94,0}
\definecolor{cbskyblue}{RGB}{86,180,233}
\definecolor{cbbluishgreen}{RGB}{0,158,115}
\definecolor{cborange}{RGB}{230,159,0}
\definecolor{cbyellow}{RGB}{240,228,66}

\def\legendx{-10*1/30}
\def\legendy{-10*3/4}
\def\legendh{.5}
\def\legendw{.7}
\def\axistickl{.2}
\def\thicklinewidth{3}

\begin{tikzpicture}[scale=1]

\draw [very thick] (0,0) -- (5+1/70,0);
\draw [very thick] (0,0) -- (0,5+1/70);

\draw [->,very thick] (0,0) -- (5+1/2,0);
\draw [->,very thick] (0,0) -- (0,5+1/2);

\draw [thick] (0.8856217223385232, 0) -- (0.8856217223385232, -\axistickl);
\node [below] at (0.8856217223385232, -\axistickl) {$\frac{3-\sqrt{7}}{4}$};
\draw [thick] (0, 0.8856217223385232) -- (-\axistickl, 0.8856217223385232);
\node [left] at (-\axistickl, 0.8856217223385232) {$\frac{3-\sqrt{7}}{4}$};

\draw [thick] (10*0.1464466094067262, 0) -- (10*0.1464466094067262, - 5*\axistickl);
\node [below] at (10*0.1464466094067262, -5*\axistickl) {$\frac{1}{2}-\frac{1}{\sqrt{8}}$};

\draw [thick] (0, 10*0.1464466094067262) -- (- 1*\axistickl, 10*0.1464466094067262);
\node [left] at (-1*\axistickl, 10*0.1464466094067262) {$\frac{1}{2}-\frac{1}{\sqrt{8}}$};

\draw [thick] (0, 0) -- (0, -\axistickl);
\node [below] at (0, -\axistickl) {$0$};

\draw [thick] (0, 0) -- (-\axistickl, 0);
\node [left] at (-\axistickl, 0) {$0$};

\draw [thick] (10/6, 0) -- (10/6, -\axistickl);
\node [below] at (10/6, -\axistickl) {$\frac{1}{6}$};

\draw [thick] (10/2, 0) -- (10/2, -\axistickl);
\node [below] at (10/2, -\axistickl) {$\frac{1}{2}$};
\draw [thick] (0, 10/2) -- (-\axistickl, 10/2);
\node [left] at (-\axistickl, 10/2) {$\frac{1}{2}$};

\draw [thick] (10/4, 0) -- (10/4, -\axistickl);
\node [below] at (10/4, -\axistickl) {$\frac{1}{4}$};
\draw [thick] (0, 10/4) -- (-\axistickl, 10/4);
\node [left] at (-\axistickl, 10/4) {$\frac{1}{4}$};

\node [below] at (4/4*1/2*10, -5*\axistickl) {$\epsilon$ (NAND noise)};
\node [left] at (-2*\axistickl, 3/4*1/2*10) {$\tau$ (XOR noise)};

\node at (\legendx+3, \legendy-0.8) {RC = ``reliable computation"};

\draw [cbvermillion, pattern=crosshatch dots, pattern color=cbvermillion] (0,0) -- (0,5) -- (0.8856217223385232,5) -- (0.8856217223385232,0) -- (0,0);
\draw [cbvermillion, pattern=crosshatch dots, pattern color=cbvermillion] (\legendx, \legendy + 7*1.3*\legendh) -- (\legendx, \legendy + \legendh + 7*1.3*\legendh) -- (\legendx + \legendw, \legendy + \legendh + 7*1.3*\legendh) -- (\legendx + \legendw, \legendy + 7*1.3*\legendh) -- (\legendx, \legendy + 7*1.3*\legendh);
\node [right] at (\legendx + \legendw, \legendy + 7*1.3*\legendh + \legendh/2) {RC by formulas, \cite{maxGateNoise}};

\draw [cbyellow, line width=\thicklinewidth, line cap=round] (0, 0) -- (10/6, 0);
\draw [cbyellow, fill=cbyellow] (\legendx, \legendy + 6*1.3*\legendh) -- (\legendx, \legendy + 6*1.3*\legendh + \legendh) -- (\legendx + \legendw, \legendy + 6*1.3*\legendh + \legendh) -- (\legendx + \legendw, \legendy + 6*1.3*\legendh) -- (\legendx, \legendy + 6*1.3*\legendh);
\node [right] at (\legendx + \legendw, \legendy + 6*1.3*\legendh + \legendh/2) {RC by circuits, \cite{Brassard}};

\begin{scope}[yshift=1.2cm]

\begin{scope}[yshift=-1.2cm]
\draw [cborange, pattern=north west lines, pattern color=cborange] (10/6, 0) -- (10/6, 10/2) -- (10/2, 10/2) -- (10/2, 0) -- (10/6, 0);
\end{scope}
\draw [cborange, pattern=north west lines, pattern color=cborange] (\legendx, \legendy + 1.3*\legendh) -- (\legendx, \legendy + 1.3*\legendh + \legendh) -- (\legendx + \legendw, \legendy + 1.3*\legendh + \legendh) -- (\legendx + \legendw, \legendy + 1.3*\legendh) -- (\legendx, \legendy + 1.3*\legendh);
\node [right] at (\legendx + \legendw, \legendy + 1.3*\legendh + \legendh/2) {no RC by formulas, this work};

\begin{scope}[yshift=-1.2cm]
\draw [gray, pattern=north east lines, pattern color=gray] (10/4, 0) -- (10/4, 10/2) -- (10/2, 10/2) -- (10/2, 0) -- (10/4, 0);
\end{scope}
\draw [gray, pattern=north east lines, pattern color=gray] (\legendx, \legendy + 3*1.3*\legendh) -- (\legendx, \legendy + 3*1.3*\legendh + \legendh) -- (\legendx + \legendw, \legendy + 3*1.3*\legendh + \legendh) -- (\legendx + \legendw, \legendy + 3*1.3*\legendh) -- (\legendx, \legendy + 3*1.3*\legendh);
\node [right] at (\legendx + \legendw, \legendy + \legendh/2 + 3*1.3*\legendh) {no RC by circuits $({\dag})$};

\begin{scope}[yshift=-1.2cm]
\draw [cbbluishgreen, line cap=round, line width=\thicklinewidth] (0.8856217223385232, 0.8856217223385232) -- (1/6*10, 1/6*10);
\end{scope}
\draw [cbbluishgreen, fill=cbbluishgreen] (\legendx, \legendy + 2*1.3*\legendh) -- (\legendx, \legendy + 2*1.3*\legendh + \legendh) -- (\legendx + \legendw, \legendy + 2*1.3*\legendh + \legendh) -- (\legendx + \legendw, \legendy + 2*1.3*\legendh) -- (\legendx, \legendy + 2*1.3*\legendh);
\node [right] at (\legendx + \legendw, \legendy + 2*1.3*\legendh + \legendh/2) {no RC by formulas, \cite{maxGateNoise,FalkUngerMinorBoundImprovements}};

\begin{scope}[yshift=-1.2cm]
\draw [cbskyblue, line width=\thicklinewidth, line cap=round] (10*0.1464466094067262, 10*0.1464466094067262) -- (10/2, 10/2);
\end{scope}
\draw [cbskyblue, fill=cbskyblue] (\legendx, \legendy + 0*1.3*\legendh) -- (\legendx, \legendy + 0*1.3*\legendh + \legendh) -- (\legendx + \legendw, \legendy + 0*1.3*\legendh + \legendh) -- (\legendx + \legendw, \legendy + 0*1.3*\legendh) -- (\legendx, \legendy + 0*1.3*\legendh);
\node [right] at (\legendx + \legendw, \legendy + 0*1.3*\legendh + \legendh/2) {no RC by circuits, \cite{evansSchulmanNoisyCircuitBound}};

\draw [decorate,decoration={brace,amplitude=5,raise=4}] (\legendx, \legendy + 0*1.3*\legendh) -- (\legendx, \legendy + 3*1.3*\legendh + \legendh) node [black,midway,left,xshift=-10] {negative results};

\end{scope}
\begin{scope}[yshift=-2.75cm]

\begin{scope}[yshift=2.75cm]
\draw [cbbluishgreen, dotted, line width=\thicklinewidth/2, rounded corners] (0.8856217223385232, 0.8856217223385232) -- (0.8856217223385232, 10/2) -- (10/2, 10/2) -- (10/2, 0.8856217223385232) -- (0.8856217223385232, 0.8856217223385232);
\end{scope}

\draw [cbbluishgreen, dotted, line width=\thicklinewidth/2, rounded corners] (\legendx, \legendy + 4*1.3*\legendh) -- (\legendx, \legendy + 4*1.3*\legendh + \legendh) -- (\legendx + \legendw, \legendy + 4*1.3*\legendh + \legendh) -- (\legendx + \legendw, \legendy + 4*1.3*\legendh) -- (\legendx, \legendy + 4*1.3*\legendh);
\node [right] at (\legendx + \legendw, \legendy + 4*1.3*\legendh + \legendh/2) {RC by formulas seems unlikely \cite{maxGateNoise,FalkUngerMinorBoundImprovements}};

\draw [decorate,decoration={brace,amplitude=5,raise=4}] (\legendx, \legendy + 4*1.3*\legendh) -- (\legendx, \legendy + 5*1.3*\legendh + \legendh) node [black,midway,left,xshift=-10] {likely negative results};

\begin{scope}[yshift=2.75cm]
\draw [cbskyblue, dotted, line width=\thicklinewidth/2, rounded corners] (10*0.1464466094067262, 10*0.1464466094067262) -- (10*0.1464466094067262, 10/2) -- (10/2, 10/2) -- (10/2, 10*0.1464466094067262) -- (10*0.1464466094067262, 10*0.1464466094067262);
\end{scope}
\draw [cbskyblue, dotted, line width=\thicklinewidth/2, rounded corners] (\legendx, \legendy + 5*1.3*\legendh) -- (\legendx, \legendy + 5*1.3*\legendh + \legendh) -- (\legendx + \legendw, \legendy + 5*1.3*\legendh + \legendh) -- (\legendx + \legendw, \legendy + 5*1.3*\legendh) -- (\legendx, \legendy + 5*1.3*\legendh);
\node [right] at (\legendx + \legendw, \legendy + 5*1.3*\legendh + \legendh/2) {RC by circuits seems unlikely \cite{evansSchulmanNoisyCircuitBound}};

\end{scope}

\draw [decorate,decoration={brace,amplitude=5,raise=4}] (\legendx, \legendy + 6*1.3*\legendh) -- (\legendx, \legendy + 7*1.3*\legendh + \legendh) node [black,midway,left,xshift=-10] {positive results};

\end{tikzpicture}
\caption{Map of $(\epsilon, \tau)$ parameter space indicating regions which have been shown to support reliable computation or not (positive and negative results respectively).
The negative result ${(\dag)}$ for $\epsilon\geq 1/4$ is a consequence of the fact that there exist functions for which communication complexity is nontrivial.  If one believes the (reasonable) hypothesis that increasing noise will not enable reliable computation, then impossibility results for $\eps = \tau = \alpha_0 < 1/2$ also imply impossibility results for $\alpha_0 \leq \eps,\tau \leq 1/2$, and these are depicted by the dashed boxes labeled ``likely negative results.''
Some prior work has focused on the functionally complete NAND gate in noisy circuit models, which is why the $\epsilon$ axis above corresponds to noise on the $\NAND$ gate. For our results, as explained in Remark~\ref{not1}, the distinction between $\AND$ and $\NAND$ does not matter.
}\label{fig:params}
\end{center}
\end{figure}

\subsection{Fault-tolerant Computation from Noisy Gates}
Fault-tolerant computation by circuits has been studied extensively since von Neumann's work in the 1950's.
A central question in this area is how noisy the gates can get before reliable computation is impossible.
In general, stronger bounds on the noise threshold have been obtained for formulas rather than general circuits; following this line of work, Theorem~\ref{thm:main} holds only for formulas, although we conjecture that a similar result holds for circuits as well.
Almost all work\footnote{We note that one exception to the
symmetric noise paradigm is~\cite{falkUngerBetterGatesBreakFaultTol} which shows that if an adversary gets to decrease the noise heterogeneously from gate to gate, fault tolerant computation actually becomes harder.}
 that we are aware of
 in fault-tolerant computation
focuses on symmetric noise. 

Modern work in  the symmetric case goes back to
the work of von Neumann in 1956,
who showed that reliable computation is possible using noisy 3-majority gates which fail independently with probability $\eps \leq 0.0073$~\cite{Maj3VN}.  Since then, there has been a great deal of work; we summarize the best results in this setting in Table~\ref{tab:litreview}.

\begin{table}[h!]
\centering
\tabulinesep=1mm
\begin{tabu}to\linewidth{|X[.9]|X[.7]|X[c]|X[c]|}
\hline
Noise Model & Source & Circuit model & Bounds on threshold $\eps_0$\\
\hline \hline
\multirow{3}{*}{Symmetric Noise}
& \cite{evansSchulmanNoisyCircuitBound}
	& All circuits of $\eps$-noisy gates of fan-in $k$
	& $\eps_0 \leq \frac{1}{2} - \frac{1}{2\sqrt{k}}$ \\
\cline{2-4}
& \cite{HW91, maximumTolerableNoiseKinputGates}
	& Formulas of $\eps$-noisy gates of odd fan-in $k$
	& $\eps_0 = \frac{1}{2} - \frac{2^{k-1}}{k {k-1 \choose k/2 - 1/2} }$ \\
\cline{2-4}
& \cite{maxGateNoise,FalkUngerMinorBoundImprovements}
	& Formulas of $\eps$-noisy gates of fan-in $2$
	& $\eps_0 = \frac{ 3 - \sqrt{7} }{4} \approx 0.08856$ \\
\hline
\multirow{2}{*}{Asymmetric Noise}
& \cite{Brassard}
	& Formulas of $\{\land_\epsilon, \oplus_0\}$ gates
	& $\eps_0 \geq 1/6$ \\
\cline{2-4}
& This work
	& Formulas of $\{\land_\epsilon, \oplus_\tau\}$ gates
	& $\forall \tau > 0, \eps_0 \leq 1/6$ \\
\hline
\end{tabu}
\vspace{.5cm}
\caption{Summary of best results on thresholds in both the symmetric and asymmetric case.  Above, $\eps_0$ represents the noise threshold so that if $\eps < \eps_0$ then reliable computation is possible, but if $\eps \geq \eps_0$ then it is impossible.
}
\label{tab:litreview}
\end{table}

To gain some intuition for these results, it is helpful to understand {amplification}, which we discussed in Sections~\ref{sec:intro} and \ref{sec:overview} and which we define formally in Section~\ref{sec:preliminaries}.
All of the positive results that we are aware of go through amplifiers.
That is, these works construct an amplifier out of the target gate set and then use that, perhaps along with other gates, to establish a method for reliable computation.
For example, von Neumann~\cite{Maj3VN} used a noisy 3-input majority gate $\Maj^{(3)}_\eps$ as an amplifier;
both Hajek and Weller~\cite{HW91} as well as Evans and Schulman~\cite{maximumTolerableNoiseKinputGates} also used $\Maj^{(k)}_\eps$ as an amplifier, and used noisy $\mathrm{XNAND}_\eps$ gates along with this amplifier to improve von Neumann's result to give a sharp threshold for $k$-input gates for odd $k$.
Evans and Pippenger~\cite{maxGateNoise} and Unger~\cite{FalkUngerMinorBoundImprovements} used the amplifier
\begin{equation}\label{eq:formulaNANDsquared}
\NAND_\eps(\NAND_\eps(X_0, X_1), \NAND_\eps(X_2, X_3)),
\end{equation}
along with more $\NAND_\eps$ gates to establish reliable computation for any $\eps < \eps_0 = \frac{ 3 - \sqrt{7 }}{4}$.  The work of \cite{maxGateNoise} showed a matching upper bound for reliable computation by formulas of noisy $\NAND_\eps$ gates, assuming noisy inputs, showing that reliable computation is impossible when $\eps > \eps_0$ under these assumptions.
Finally \cite{FalkUngerMinorBoundImprovements} extended the impossibility result to also include the case where $\eps = \eps_0$, removed the assumption that the inputs are noisy, and generalized the result to computation by the formulas of all $2$-input $\eps$-noisy gates.
As we explain further in Section~\ref{ssection:relatedWorkQuantumPhysics}, the limit on nonlocality from nontrivial communication complexity is derived in \cite{Brassard} using the following amplifier:
\begin{equation}\label{eq:brassardAmplifierFormula}
((X_0\oplus X_2)\land_\eps (X_0 \oplus X_1)) \oplus X_0
\end{equation}

Because all of the positive results go through amplifiers, it is natural to wonder whether there is a deeper connection between the amplifiers and reliable computation in a circuit model, and this is what we show in Theorem~\ref{thm:AmplifierEquivToFTCC_andOverhead}.
Although such a connection is implicit in prior work, to the best of our knowledge it has not been made rigorous.  This may be because the equivalence is easier to formulate and prove using $\conv\cC$ rather than $\cC$ itself.

\section{Formal Definitions}\label{sec:preliminaries}
In this section we formally define a few notions that we will need to prove our main results. 
\subsection{Fault-Tolerant Computation by Circuits of Noisy Gates}\label{ssection:ftccDefs}
\begin{define}[Formula]\label{define:formula}
Let $G = (V,E)$ be a directed tree, so that every node has out-degree at most $1$ and in-degree either $2$ or $0$.  
Let $\mathbf{X} = \{X_i \,:\, i \in [n]\}$ be a set of variables.
Let $\mathcal{G}$ be a set of (possibly noisy) two-input binary gates.
Let $L: V \to \{0,1\} \cup \mathbf{X} \cup \mathcal{G}$ be a labeling function so that $L(v) \in \mathbf{X} \cup \{0,1\}$ if and only if $v$ is a leaf.  Otherwise, $L(v) \in \mathcal{G}$.

A {\bf formula} on the gate set $\mathcal{G}$ acting on input variables $\mathbf{X}$ is given by a tuple $(G,L)$ for such a labeling function $L$.  We use $r(F)$ to denote the root vertex of $G$, which corresponds to the output gate of the formula $F$.

\end{define}

\begin{remark}[Restriction to two-input gates]
In this work we only consider gate sets $\mathcal{G}$ which contain only two-input gates, and so for convenience our definition of a formula reflects this restriction.  We note that a unary $\neg$ gate can be included in this definition by including appropriately modified versions of two-input gates in $\mathcal{G}$.
\end{remark}

Note that $\mathcal{G}$ may include noisy gates, in which case a formula built with gates from $\mathcal{G}$ induces a stochastic map:
a {\bf stochastic map} from $\{0,1\}^n$ to $\{0,1\}$ is a function $f: \{0,1\}^n \to [0,1]$, where we interpret $f(\mathbf{x}) = p$ as ``$f(\mathbf{x})$ outputs $1$ with probability $p$.'' For notational convenience, we will often identify a  formula $F$ with the stochastic map it induces.

\begin{define}[Mixture of Formulas]
A {\bf mixture of formulas} $C$ is a probability distribution over a $N$ formulas.  We may write the stochastic map of $C$ as
$$C = \sum_{j=1}^{N}{p_j F_j}$$
where $F_j$ is the stochastic map corresponding the the $j$'th formula, which is chosen with probability $p_j$.
\end{define}

\begin{define}[Depth]\label{def:depth}
Given a formula $F = (G, L)$ and a vertex $v \in V(G)$, the {\bf depth $d(v)$ of the vertex} $v$ is the length of the path from $v$ to the root of $G$. 
The {\bf depth $d(F)$ of the formula} $F$ is defined as $\max_{v \in V(G)} d(v)$.
For a  mixture of formulas $C = \sum_{i=1}^{N}{p_i F_i}$, we define the {\bf depth $d(C)$ of the mixture of formulas} $C$ as
$d(C) = \max_{i\in[N]}{d(F_i)}.$
\end{define}

\begin{define}[Dependence in Boolean functions]\label{define:boolDepends}
Given a Boolean function $f: \{0, 1\}^n \rightarrow \{0, 1\}$, we say that {\bf $f$ depends on $X_i$} if there exist constants $c_1, ..., c_{i-1}, c_{i+1}, ..., c_n \in \{0, 1\}$ such that
\begin{equation}\label{eq:dependsDef}
f(c_1, ..., c_{i-1}, 0, c_{i+1}..., c_n) \neq f(c_1, ..., c_{i-1}, 1, c_{i+1}..., c_n).
\end{equation}

Let $S \subseteq [n]$ be the largest set such that $f$ depends on $X_i$ for each $i \in S$.  If $|S| = k$, we say that $f$ {\bf depends on } $k$ inputs.
\end{define}

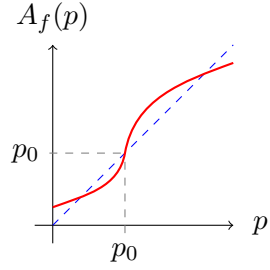
\begin{figure}
\centering
\begin{tikzpicture}[xscale=1.2, yscale=1.2]
\draw[->] (-.2, 0) to (2, 0) ;
\draw[->] (0,-.2) to (0, 2);
\node at (2.3, 0) {$p$};
\node at (0,2.3) {$A_f(p)$};
\node(l1) at (-.3, 0.8) {$p_0$};
\node(l2) at (0.8,-.3) {$p_0$};
\draw[gray, dashed] (l1) to (0.8, 0.8);
\draw[gray, dashed] (l2) to (0.8, 0.8);
\draw[dashed,blue] (0,0) to (2,2);
\draw[thick, red] (0,.2) to[out=20,in=260] (.8,.8) to[out=-180+260,in=200] (2,1.8);
\end{tikzpicture}
\caption{
    $f$ is an amplifier away from $p_0$.
}\label{fig:amp}
\end{figure}

\begin{define}[Amplifiers]\label{define:amplificationFunction}\label{define:amplifier}
The {\bf amplification function} of a (possibly stochastic) map $f: \{0,1\}^n \to \{0,1\}$ is the polynomial\footnote{Notice that $A_f(p) = \sum_{\mathbf{x} \in \{0,1\}^n} p^{|\mathbf{x}|}(1-p)^{n-|\mathbf{x}|} \mathbb{P}_f[f(\mathbf{x}) = 1]$ is a polynomial in $p$ of degree at most $n$.}
\begin{equation}\label{eq:AmplificationFunctionDefinition}
A_f(p) = \P_{X_i \sim \text{Ber}(p)}[f(X_1, ..., X_n) = 1]
\end{equation}
where the probability is over both the inputs $X_1,\ldots,X_n$, which are i.i.d. Bernoulli($p$) random variables, and the function $f$.
We say $f$ is an {\bf amplifier} if $\exists p_0 \in (0, 1)$ such that
$$A_f(p_0) = p_0 \qquad \text{and} \qquad A_f'(p_0) > 1.$$
In this case, we say that $f$ {\bf amplifies away from $p_0$}.
\end{define}

An amplification function $A_f$ for an amplifier $f$ is shown in Figure~\ref{fig:amp}.

\subsection{Bipartite Correlations and Nonlocal Games}\label{section:defsTwoParty}

As discussed in the introduction, we will allow Alice and Bob the ability to sample from \em nonsignalling correlations. \em 

\begin{define}\label{define:NSCor}
A {\bf bipartite nonsignalling correlation} (also called a box or strategy) $(A,B,X,Y,p)$ consists of finite input and output alphabets $A, B, X, Y$ along with a function $p: A\times B\times X\times Y \rightarrow [0,1]$ which defines a probability distribution over $a, b$ conditioned on $x,y$,
$$\P[a, b | x, y] = p(a, b, x, y),$$
and which satisfies the no-signalling condition:
$$\forall x \in X, \forall y, y' \in Y, \forall a \in A, \sum_{b\in B}{p(a,b, x,y)} = \sum_{b\in B}{p(a,b, x,y')}$$
$$\forall x, x' \in X, \forall y \in Y, \forall b \in B, \sum_{a\in A}{p(a,b, x,y)} = \sum_{a\in A}{p(a,b, x',y)}.$$

The set of all bipartite nonsignalling correlations is denoted {\bf NS}.
\end{define}
This no-signalling conditions enforce that neither party may alter the distribution observed by the other.

Two important subsets of $NS$ are $C$, the set of bipartite classical correlations, and $Q$, the set of bipartite quantum correlations.
\begin{define}
The set $C$ of bipartite classical correlations achievable with shared randomness is the set
$$C = \conv\left\{(A, B, X, Y, p) : p(a,b,x,y) = \begin{cases}
1&q(x)=a \text{ and } r(y) = b\\
0 & \text{otherwise}
\end{cases}, q: X\rightarrow A, r: Y\rightarrow B \right\}.$$
The set $Q$ of bipartite quantum correlations is the set 
$$Q = \bigcup_{n, k} \left\{(p(a,b,x,y)) : p(a,b,x,y) = \braket{\psi |A^x_a \otimes B^y_b|\psi}, \ket{\psi}\in \mathcal{H}_A\otimes \mathcal{H}_B, \forall xy, \{A^x_a\}_a, \{B^y_b\}_b \text{ POVM} \right\}.$$
\end{define}
Above, POVM stands for ``Positive Operator-Valued Measure.'' POVMs represent general quantum measurements. In our paper, we will not use make use of the technical definition of $Q$ directly, but it is worth mentioning that this definition corresponds to what have been called the {\it quantum spatial correlations}, in contrast to the more general {\it commuting-operator model}. In fact, our results hold in either model because communication complexity is nontrivial in both.
We note that $C$ includes correlations that give Alice and Bob access to shared randomness. 

We have the inclusions $C \subsetneq Q \subsetneq NS$, and for fixed finite $A, B, X, Y$, the sets $NS$ and $C$ are polytopes \cite{nonlocalCorrelationsCharacterization}.

Given a set $C \subseteq S \subseteq NS$ of bipartite correlations, Alice and Bob may use correlations $c \in S$ as steps in a larger computation.  In particular, they may compose different correlations in $S$ along with local computations, shared randomness, and so on.  Allcock \etal \cite{closedCorrelationSets} defined any correlation that Alice and Bob can make by ``wiring'' together correlations in $S$ to be the set \textbf{wirings}$(S)$.  If $S = \mathrm{wirings}(S)$, $S$ is said to be \textbf{closed under wirings.}  The sets $S, Q,$ and $NS$ are all closed under wirings.

We are interested in when Alice and Bob can use correlations in $S$ to collapse communication complexity.
As discussed in Section~\ref{sec:overview_rel}, given a set $S$ that is closed under wirings, Alice and Bob can use it to collapse communication complexity if and only if $\BBLMTU(S)$ supports reliable computation.  
The transformation $\BBLMTU$ is depicted in Figure~\ref{fig:nonlocalcircuit}.
This transformation was also used in \cite{Brassard}, and we formally define it as follows.

\begin{define}\label{define:BBLMTU}
Let $n\geq 1$. Let $c = (A,B,X,Y,p)$ be a bipartite nonsignalling correlation with output alphabets $A=B=\{0,1\}^m$ and input alphabets $X=Y=\{0,1\}^n$. Then $\BBLMTU(c)$ is the stochastic map $d: \{0,1\}^n\rightarrow \{0,1\}^m$
defined by
\begin{equation}\label{eq:BBLMTUdist}
\mathbb{P}[d(\mathbf{z}) = \mathbf{w}] = \mathbb{P}_{\mathbf{u}\sim\{0,1\}^n}\left[\XOR(c(\mathbf{u}, \mathbf{u}\oplus\mathbf{z})) = \mathbf{w}\right],
\end{equation}
That is, $\mathbf{w}\sim d(\mathbf{z})$ is the random variable 
in which $\mathbf{u}$ is sampled uniformly from $\{0,1\}^n$, then $\mathbf{a}, \mathbf{b} \in \{0,1\}^m$ are sampled from $c(\mathbf{u}, \mathbf{u}\oplus\mathbf{z})$, and then $\mathbf{w}$ is set to $\mathbf{a}\oplus \mathbf{b}$, in which $\oplus$ acts elementwise.

If $S$ is a set of bipartite nonsignalling correlations, we define the following circuit model:
$$\BBLMTU(S) = \conv\{\text{circuits from gates in }\{\BBLMTU(c) : c \in S \}\}.$$
\end{define}

Above, we defined $\BBLMTU(S)$ as the convex hull of circuits formed out of gates $\BBLMTU(c)$ so that $c \in S$.  This description is necessary to talk about $\BBLMTU(S)$ as a circuit model, but fortunately, when $S$ is closed under wirings, $\BBLMTU(S)$ has a simpler description.

\begin{proposition}\label{proposition:closedUnderWiringsImpliesBBLMTUisElementwise}
Let $S$ be a set of bipartite nonsignalling correlations so that $C \subseteq S$ and $S$ is closed under wirings. Then 
\begin{equation}\label{eq:BBLMTUcircuitsEqBBLTMUmap}
\BBLMTU(S)  = \{\BBLMTU(c) : c \in S\}.
\end{equation}
\end{proposition}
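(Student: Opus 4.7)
The inclusion $\{\BBLMTU(c) : c \in S\} \subseteq \BBLMTU(S)$ is immediate because a single gate is both a trivial one-gate circuit and a trivial convex combination of itself. My plan for the reverse inclusion is a structural induction showing that every circuit built from gates $\{\BBLMTU(c) : c \in S\}$, and every convex combination of such circuits, is itself of the form $\BBLMTU(c')$ for a single $c' \in S$. There are three inductive cases: the single-gate base case (immediate), a composition case (the output of one $\BBLMTU$-form gate feeding into another), and a convex mixture case. In each, the task is to exhibit an explicit wiring of the underlying correlations that produces $c' \in S$ whose $\BBLMTU$-image realizes the desired stochastic map; closure under wirings, together with the assumption $C \subseteq S$ (for shared randomness), keeps $c'$ in $S$ at each step.

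For the composition case, suppose $d_1 = \BBLMTU(c_1)$ feeds into $d_2 = \BBLMTU(c_2)$. I would define $c'$ as the wiring that on inputs $(\mathbf{x}, \mathbf{y})$ first runs $c_1(\mathbf{x}, \mathbf{y})$ to produce private outputs $(\mathbf{a}_1, \mathbf{b}_1)$, then uses a fresh uniform shared string $\mathbf{r}$ (available through $C \subseteq S$) to form inputs $\mathbf{a}_1 \oplus \mathbf{r}$ and $\mathbf{b}_1 \oplus \mathbf{r}$ to $c_2$, and returns whatever $c_2$ outputs. A short computation then shows that when $\mathbf{u}$ is sampled uniformly in $\BBLMTU(c')(\mathbf{z})$, the induced Alice-input $\mathbf{a}_1 \oplus \mathbf{r}$ to $c_2$ is marginally uniform and independent of the intermediate wire $\mathbf{w} = \mathbf{a}_1 \oplus \mathbf{b}_1$; this is exactly the fresh independent sample that the definition of $\BBLMTU(c_2)$ requires when applied to input $\mathbf{w}$. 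A circuit in which a gate has several inputs (coming from distinct subcircuits) is handled analogously by bundling the subcorrelations and applying an independent $\mathbf{r}$-mask on each internal wire in parallel.

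For the convex mixture case, given $d = \sum_i p_i \BBLMTU(c_i)$, I would take $c'$ to be the wiring in which Alice and Bob first jointly sample an index $i$ with probability $p_i$ via shared randomness and then apply $c_i$. Then $\BBLMTU(c')(\mathbf{z})$ is distributed as a sample from $\BBLMTU(c_i)(\mathbf{z})$ with probability $p_i$, matching $d$. The main technical point is the $\mathbf{r}$-masking verification in the composition step: one must check that the uniform re-randomization $\mathbf{u}$ supplied by the outer $\BBLMTU$ in $\BBLMTU(c')$ correctly seeds the fresh marginal randomness that each internal gate's $\BBLMTU$ would independently sample, despite the fact that internal wires in general carry highly non-uniform joint distributions across Alice and Bob. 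Once this verification is in hand, the inductive synthesis of $c' \in S$ for an arbitrary circuit and convex mixture is straightforward, completing the reverse inclusion and hence the proposition.
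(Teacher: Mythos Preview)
Your proposal is correct and follows essentially the same approach as the paper: the trivial inclusion is noted, and the reverse inclusion is obtained by showing that any circuit of $\BBLMTU$-gates (and any convex mixture thereof) corresponds to a single wiring in $S$, using the shared-randomness $\mathbf{r}$-masking trick to re-uniformize internal wires at each composition step. The paper's proof is terser---it relegates the $\mathbf{r}$-masking subtlety to a footnote and dispatches the convex-mixture case in one line---but the underlying argument is identical to your structural induction.
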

\begin{proof}
First, we observe that $\{\BBLMTU(c) : c \in S\} \subseteq \BBLMTU(S)$ trivially.  For the other direction, suppose that $d$ is a circuit with gates from $\{\BBLMTU(c) : c \in S\}$.  Thus, $d$ is obtainable by wiring together correlations in $S$, and so $d \in \mathrm{wirings}(S)$.  Since $S$ is closed under wirings, $d \in S$.\footnote{We note that there is a slight subtlety here, which is that when when two gates $\BBLMTU(c)$ and $\BBLMTU(c')$ are composed, this is not the same as directly wiring together the correlations $c$ and $c'$.  In more detail, directly wiring the outputs of $c$ to the inputs of $c'$ would mean that if $c$ outputs $x,y$ so that $x \oplus y = z$, then $c'$ would take as input $x$ and $y$.  However, composing $\BBLMTU(c)$ and $\BBLMTU(c')$ and then translating back to the correlations $c$ and $c'$ means that the outputs $x,y$ of $c$ would be reshared as $x' \oplus y' = z$, and then $x',y'$ would be the inputs to $c'$.  Fortunately, Alice and Bob can simulate this using shared randomness (which they have since $C \subseteq S$ and $S$ is closed under wirings), by defining $x' = x \oplus r$ and $y' = y \oplus r$ for a uniformly random bit $r$.  Thus the composition of $\BBLMTU(c)$ and $\BBLMTU(c')$ is of the form $\BBLMTU(d)$ for some  $d \in \mathrm{wirings}(S) = S$.} 
This establishes that $\{ \text{circuits from gates } \{ \BBLMTU(c) : c \in S \} \} \subseteq \{ \BBLMTU(c) : c \in S \}$.
Finally, since $C \subseteq S$ and $S$ is closed under wirings, $S$ is also closed under probabilistic mixtures, which implies that $\BBLMTU(S) \subseteq \{ \BBLMTU(c) : c \in S \}$.
\end{proof}

As per Proposition~\ref{prop:thepoint}, $S$ collapses communication complexity (in the sense described in Section~\ref{sec:overview_rel}) if and only if $\BBLMTU(S)$ supports reliable computation.  To that end, we will actually define trivial probabilistic communication complexity in this language.
\begin{define}\label{define:TPCC}
For a set $S$ of bipartite nonsignalling correlations which is closed under wirings and such that $C \subseteq S$, $S$ has {\bf trivial probabilistic communication complexity} if and only if the circuit model $\BBLMTU(S)$ supports reliable computation.
\end{define}

Alice and Bob will utilize nonsignalling correlations to play nonlocal games.  Formally, we define a nonlocal game as follows.

\begin{define}\label{define:nonlocalGame}
A {\bf two-player nonlocal game} $G = (X,Y,A,B,\pi,D)$ consists of finite sets of possible questions $X$ and $Y$ for the two players, finite sets $A$ and $B$ of possible answers, a probability distribution $\pi:X\times Y\rightarrow [0,1]$ over the questions, and a predicate $D: X\times Y \times A \times B \rightarrow \{0,1\}$. Given a set of correlations $S$, we define the {\bf $S$-value} of the game $G$ by
$$\omega_{S}(G) := \sup_{c\in S}{\P_{x,y\sim \pi}[D(x, y, c(x,y))=1]},$$
in which the supremum is over $c\in S$ for which the input and output alphabets match those of $G$, and in which the probability is also over the randomness of $c$.
That is, $\omega_{S}(G)$ is the optimal success probability when the game is played with access to correlations in $S$.
\end{define}

We will need the following ways to combine two nonlocal games.
The first is the \em conjunction \em of two games.
Informally, the conjuction $G_1 \land G_2$ is the game in which Alice and Bob must play both $G_1$ and $G_2$ at the same time, and win a round only if they answer correctly for both $G_1$ and $G_2$.
\begin{define}\label{define:NLConjunction}
For two nonlocal games $G_1, G_2$ with $G_i = (X_i,Y_i,A_i,B_i,\pi_i,D_i)$, the {\bf conjunction} $G_1\land G_2$ is the nonlocal game
$$G_1\land G_2 = \left( X_1\times X_2, Y_1\times Y_2, A_1\times A_2, B_1\times B_2, \pi_1\otimes  \pi_2, D_1\otimes D_2 \right).$$
Here, $(\pi_1\otimes \pi_2)((x_1, x_2), (y_1, y_2)) = \pi_1(x_1, y_1)\pi_2(x_2, y_2)$, and similarly for $D_1\otimes D_2$.
\end{define}

The second is the \em mixture \em of two games.  Informally, the mixture $qG_1 + (1-q)G_2$  is the game in which with probability $q$, Alice and Bob must play $G_1$ and with probability $1-q$, they must play $G_2$. For each round, Alice and Bob are told which game they must play.

\begin{define}\label{define:NLMixture}

For two nonlocal games $G_1, G_2$ with $G_i = (X_i,Y_i,A_i,B_i,\pi_i,D_i)$, for $q\in [0,1]$, the {\bf mixture} $qG_1 + (1-q)G_2$ is the nonlocal game
$$qG_1 + (1-q)G_2 = (X_1\sqcup X_2, Y_1\sqcup Y_2, A_1\sqcup A_2, B_1\sqcup B_2, \pi, D)$$
in which 
$$\pi(x,y) = \begin{cases}
q\pi_1(x,y) & x\in X_1, y\in Y_1\\
(1-q)\pi_2(x,y) & x\in X_2, y\in Y_2\\
0 & \text{otherwise}
\end{cases},$$
and similarly
$$D(x,y, a, b) = \begin{cases}
D_1(x,y,a,b) & x\in X_1, y\in Y_1, a\in A_1, b\in B_1\\
D_2(x,y,a,b) & \text{otherwise}
\end{cases}.$$
\end{define}

Finally, we define the \em trivial game. \em 
\begin{define}\label{define:trivialNLGame}
The {\bf trivial nonlocal game $G_T$} has $X=Y=A=B=\{\perp\}$ for some unique symbol $\perp$, and $\pi(\perp,\perp) = D(\perp,\perp,\perp,\perp) = 1$.
\end{define}

\section{Proof of Theorem~\ref{thm:mainOneSixthFormulas}: Sharp threshold for reliable computation}\label{sec:proofOneSixthFormulas}
In this section we prove Theorem~\ref{thm:main}.  We begin by proving Lemma~\ref{claim:mainClaim}, which we restate below.

\begin{lem*}[Lemma~\ref{claim:mainClaim}, restated]
Let $\mathcal{F}_{\eps, \tau}$ be the class of formulas on $\{\land_\eps, \oplus_\tau\}$, and suppose that $\eps \in (1/6, 5/6)$, and $\tau \in (0, 1)$.
Fix $\Delta >0$ and let $f: \{0,1\}^n \to \{0,1\}$ be a function that is computable with probability at least $1/2 + \Delta$ by functions in $\conv \mathcal{F}_{\eps, \tau}$.
Then $f$ depends on at most a constant number of inputs.
\end{lem*}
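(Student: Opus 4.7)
The plan is to adapt Pippenger's classical signal-decay argument to handle both (i) distributions in $\conv \cC_{\eps, \tau}$ and (ii) arbitrarily small $\tau > 0$. The strategy is to associate to each formula $F \in \cC_{\eps, \tau}$ and each input variable $X_i$ a signal weight $\sigma_i(F) \geq 0$ satisfying two properties: (a) a lower bound, namely if $F$ computes $f$ with $\P[F(\mathbf x) = f(\mathbf x)] \geq 1/2 + \Delta$ for every $\mathbf x$ and $f$ depends on $X_i$, then $\sigma_i(F) \geq 2\Delta$; and (b) a universal upper bound $\sum_i \sigma_i(F) \leq B(\eps, \tau)$ for a constant $B$ independent of the formula $F$. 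Given these, a convexity argument extends (b) to mixtures $C = \sum_j p_j F_j$, and combining with (a) applied to each of the $k$ variables on which $f$ depends yields $2 k \Delta \leq B$, so $k \leq B/(2\Delta) = O_{\eps,\tau,\Delta}(1)$.

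A natural candidate is $\sigma_i(F) := \max_{\mathbf c \in \{0,1\}^{n-1}} \bigl| \P[F(\mathbf c, X_i{=}0) = 1] - \P[F(\mathbf c, X_i{=}1) = 1] \bigr|$, the maximum discrete derivative of $F$ in direction $X_i$. Property (a) is then immediate: if $f$ depends on $X_i$ at some $\mathbf c$, the per-input advantage forces $F$'s expected outputs at $(\mathbf c, 0)$ and $(\mathbf c, 1)$ to differ by at least $2\Delta$. For (b) one sets up a recursion on the formula tree: writing $\mu_u(\mathbf x) := \P[u(\mathbf x) = 1]$ at each internal wire $u$, the gate rules read $\mu_{\land_\eps(u_1,u_2)} = \eps + (1-2\eps)\mu_{u_1}\mu_{u_2}$ and $\mu_{\oplus_\tau(u_1,u_2)} = \tau + (1-2\tau)(\mu_{u_1} + \mu_{u_2} - 2\mu_{u_1}\mu_{u_2})$, from which the discrete derivatives $D_i\mu_v$ at each node are bilinear combinations of the children's derivatives and biases. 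Iterating along the tree should yield a bound of the form $\sigma_i(F) \leq \sum_{\ell : \text{label}(\ell) = X_i} \alpha(\ell)$ for an appropriate path-attenuation $\alpha(\ell)$.

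The extension to mixtures is immediate: since $\sigma_i$ is a maximum of absolute values of linear functionals of the underlying stochastic map, it is convex, so $\sigma_i(C) \leq \sum_j p_j \sigma_i(F_j)$. The universal upper bound therefore transfers to any $C \in \conv \cC_{\eps, \tau}$, and the pigeonhole step concludes the proof.

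The main obstacle — and where the $1/6$ threshold genuinely enters — is the universal upper bound (b). Naive bounding gives only the per-gate attenuation factor $(1-2\eps)$ at an AND and $(1-2\tau)$ at an XOR, so summing $\sigma_i(F)$ over all root-to-leaf paths in a balanced binary tree of depth $d$ produces $(2\gamma)^d$ for $\gamma = \max(1-2\eps, 1-2\tau)$; at $\eps = 1/6$ this is $(4/3)^d \to \infty$, and the simple sum-of-paths estimate diverges. Overcoming this requires a sharper potential that couples the two children of each gate — exploiting, for AND, that if one input is strongly biased (so its derivative passes through with little attenuation), then it acts nearly as a constant and its own subformula signal is correspondingly small, with an analogous tradeoff for XOR. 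The argument must also use $\tau > 0$ in an essential way, since the lemma is false at $\tau = 0$ (parity of many inputs is exactly computable by noiseless XORs). Devising such a potential and proving the joint recursion contracts precisely at the $\eps > 1/6$, $\tau > 0$ boundary is the delicate technical core of the proof.
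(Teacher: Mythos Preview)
Your framework is sound and matches the paper in its broad outline: the lower bound (a) on $\sigma_i$ when $f$ depends on $X_i$, and the convexity step extending an upper bound from formulas to mixtures, are exactly right. You also correctly diagnose the central difficulty --- that a naive per-gate attenuation of $\max(1-2\eps,1-2\tau)$ is not summable over a binary tree once $\eps \le 1/6$ or $\tau$ is small.

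However, the proposal stops precisely where the real content begins. You defer the construction of the ``sharper potential'' and leave its existence as an assertion; this is the entire proof. Moreover, the specific target you aim for --- a \emph{uniform constant} bound $\sum_i \sigma_i(F)\le B(\eps,\tau)$ --- is not what the paper establishes and is not obviously true in general (the paper only needs, and only proves, a bound that is \emph{sublinear} in the number $k$ of relevant variables, which still suffices to pigeonhole one $\sigma_i$ down to $o(1)$).

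The paper supplies the missing idea in two pieces. First, it reduces to the \emph{univariate} case: fix all inputs but $X_i$ to the constants witnessing dependence, and track the single quantity $\delta_w = \mathbb{E}[w\mid X_i{=}1]-\mathbb{E}[w\mid X_i{=}0]$ at each wire~$w$. The key potential is the \emph{normalized} bias
\[
\Phi(w)\;=\;\frac{|\delta_w|}{1-|2w-1|},
\]
where $w$ also denotes $\tfrac12(\mathbb{E}[w\mid X_i{=}1]+\mathbb{E}[w\mid X_i{=}0])$. Lemma~\ref{lem:weight} shows that at every $\land_\eps$ or $\oplus_\tau$ gate with inputs $a,b$ and output $c$,
\[
\Phi(c)\;\le\;\theta\,\max\bigl\{\Phi(a),\Phi(b)\bigr\}
\]
for some $\theta=\theta(\eps,\tau)<1$ whenever $\eps\in(1/6,5/6)$ and $\tau\in(0,1)$. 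This is a \emph{max}-type contraction, not a sum-type one, and it is exactly the ``coupling of the two children'' you were looking for: the denominator $1-|2w-1|$ records how close a wire is to a constant, and the $\land$ case of the inequality is tight precisely at $\eps=1/6$. Iterating gives $|\delta(F,X_i)|\le \Phi(\text{root})\le \theta^{\rho(F,X_i)}$, where $\rho(F,X_i)$ is the minimum depth of a leaf labeled $X_i$.

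Second, to handle mixtures, the paper proves a Kraft-type inequality (Lemma~\ref{lem:depth}): for any $\theta\in(0,1)$ and any set $S$ of $k$ variables,
\[
\sum_{X_i\in S}\sum_j p_j\,\theta^{\rho(F_j,X_i)}\;\le\; O\!\left(k^{\log_2(2\theta)}\right),
\]
which is sublinear in $k$. Hence some $X_i$ has $\sum_j p_j\,\theta^{\rho(F_j,X_i)}=o(1)$, and therefore $\sigma_i(C)\le\sum_j p_j\,\theta^{\rho(F_j,X_i)}=o(1)$. For $k$ large enough this drops below $2\Delta$, contradicting~(a).

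In short: your plan is right, but the missing potential is $\Phi(w)=|\delta_w|/(1-|2w-1|)$ with its max-contraction, and the correct aggregate bound is sublinear (via a Kraft inequality on minimum depths) rather than uniformly constant.
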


To prove Lemma~\ref{claim:mainClaim}, we begin by showing in Lemma~\ref{lem:depth} below that in any probabilistic mixture of noisy formulas, there is some variable which is on average at high depth.  Next we show in Lemma~\ref{lem:weight} that a certain quantity related to the bias on a wire decays by a constant factor at each gate.  Finally we put these two lemmas together in Section~\ref{sec:pfMainClaim} to prove Lemma~\ref{claim:mainClaim}.  In Section~\ref{sec:pfMainThm} we show how to use Lemma~\ref{claim:mainClaim} and Theorem~\ref{thm:noiseThresholdsAreStrict} to prove Theorem~\ref{thm:main}.

\subsection{The ``Depth'' of Variables in Mixtures of Formulas}
In this section, we prove Lemma~\ref{lem:depth}, stated below, which roughly says that in any mixture of formulas, there is some variable $X_i$ with large depth.  Before we state the lemma, we introduce some notation.

Let $C = \sum_{j=1}^N p_j F_j$ be a mixture of formulas $F_j$.  Let $\mathbf{X} = \{X_1, \ldots, X_n\}$ be the set of inputs to $C$.  Thus, the set of inputs to each $F_j$ is some subset of the variables in $\mathbf{X}$.

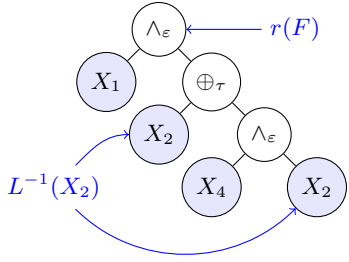
\begin{figure}
\begin{center}
\begin{tikzpicture}[scale=.7]
\footnotesize

\node[draw,circle](r) at (0,0) {$\land_\eps$};
\node[right=1cm of r,blue](rr) {$r(F)$};
\draw[->,blue] (rr) to (r);
\node[draw,circle,fill=blue!10](a) at (-1,-1) {$X_1$};
\node[draw,circle](b) at (1,-1) {$\oplus_\tau$};
\draw (r) -- (a);
\draw (r) -- (b);
\node[draw,circle, fill=blue!10](c) at (0, -2) {$X_2$};
\node[draw,circle](d) at (2,-2) {$\land_\eps$};
\draw (b) -- (c);
\draw (b) -- (d);
\node[draw,circle, fill=blue!10](e) at (3, -3) {$X_2$};
\node[draw,circle, fill=blue!10](f) at (1,-3) {$X_4$};
\draw (d) -- (e);
\draw (d) -- (f);
\node[blue](lbl) at (-2,-3) {$L^{-1}(X_2)$};
\draw[->,blue] (lbl) to[out=45,in=180] (c);
\draw[->,blue] (lbl) to[out=-45,in=-135] (e);
\end{tikzpicture}
\end{center}
\caption{An example of a formula $F$ in $\mathcal{F}_{\eps, \tau}$.  In this example, we have $\rho(F,X_2) = 2$, because the first occurence of $X_2$ is at depth $2$; $\rho(F,X_3) = \infty$, since $X_3$ never appears; and $\rho(F,X_4) = 3$.
} \label{fig:notation}
\end{figure}

For a formula $F = (G, L)$ and $X_i \in \mathbf{X}$,
Define the {\bf rank} $\rho(F, X_i)$ by
\[ \rho(F, X_i) =  
\min_{v\in L^{-1}(X_i)} d (v) 
\]
where by convention $\rho(F, X_i) = \infty$ if the set is empty.
Above, recall that $d (v)$ is the depth of vertex $v$ in the graph $G$, and $L^{-1}(X_i)$ is the set of leaves in $G$ that are labeled with the input variable $X_i$.  An example of the notation is given in Figure~\ref{fig:notation}.

\begin{lem}\label{lem:depth}
There is a function $f(\theta,k)$ so that
\[ \lim_{k\to\infty} f(\theta,k) = 0 \ \ \ \  \forall \theta \in (0,1) \]
and so that the following holds.
Fix any $\theta \in (0, 1)$.
Let $C = \sum_{j=1}^N p_j F_j$ be mixture of $N$ formulas, so that each $F_j$ is composed of two-input gates.
Suppose that $C$ takes as input the variables $\mathbf{X} = \{X_i : i \in [n]\}$.  
Choose any nonempty subset $S \subset \mathbf{X}$, and let $k = |S|$.
Then there exists a variable $X_i \in S$ for which
$$\sum_{j=1}^{N}{p_j}{\theta^{\rho(F_j, X_i)}} \leq f(\theta, k)$$
\end{lem}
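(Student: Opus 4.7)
The plan is to reduce the lemma to two classical facts about full binary trees: Kraft's inequality and Jensen's inequality applied to a concave power function. The argument splits cleanly into a per-formula bound and a min-versus-average bound across the mixture.

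First, I would fix a single formula $F_j$ and bound $\sum_{i \in S} \theta^{\rho(F_j, X_i)}$. For each $X_i \in S$ that actually appears in $F_j$, the leaf realizing $\rho(F_j, X_i)$ is one specific leaf of the underlying binary tree, and distinct variables give distinct leaves. Since each $F_j$ is a full binary tree (every internal node has exactly two children, per Definition~\ref{define:formula}), Kraft's inequality applied to this selected set of leaves yields
$$\sum_{i \in S} 2^{-\rho(F_j, X_i)} \leq 1,$$
where we use the convention $2^{-\infty} = 0$ for variables not appearing in $F_j$.

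Next, I would transform the $\theta$-exponent into a power of $2^{-\rho}$. Writing $\theta = 2^{-\alpha}$ with $\alpha = -\log_2 \theta > 0$, set $\beta = \min(\alpha, 1) \in (0, 1]$. Then for every $d \geq 0$, $\theta^d = 2^{-\alpha d} \leq 2^{-\beta d} = (2^{-d})^{\beta}$, so
$$\sum_{i \in S} \theta^{\rho(F_j, X_i)} \;\leq\; \sum_{i \in S} \bigl(2^{-\rho(F_j, X_i)}\bigr)^{\beta}.$$
Since $x \mapsto x^{\beta}$ is concave on $[0,\infty)$ for $\beta \in (0,1]$, Jensen's inequality combined with the Kraft bound gives
$$\sum_{i \in S} \bigl(2^{-\rho(F_j, X_i)}\bigr)^{\beta} \;\leq\; k \cdot \left( \frac{1}{k} \sum_{i \in S} 2^{-\rho(F_j, X_i)} \right)^{\beta} \;\leq\; k \cdot (1/k)^{\beta} \;=\; k^{1-\beta}.$$

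Finally, I would pass from the per-formula bound to the mixture by swapping sums and using the $\min$-vs-average inequality:
$$\min_{i \in S} \sum_{j=1}^N p_j \, \theta^{\rho(F_j, X_i)} \;\leq\; \frac{1}{k} \sum_{i \in S} \sum_{j=1}^N p_j \, \theta^{\rho(F_j, X_i)} \;=\; \frac{1}{k} \sum_{j=1}^N p_j \sum_{i \in S} \theta^{\rho(F_j, X_i)} \;\leq\; \frac{k^{1-\beta}}{k} \;=\; k^{-\beta}.$$
Setting $f(\theta, k) = k^{-\min(-\log_2 \theta, \, 1)}$ then satisfies $f(\theta,k) \to 0$ as $k \to \infty$ for every fixed $\theta \in (0,1)$.

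There is no serious obstacle; the only point requiring a bit of care is that the argument naturally splits into two regimes. For $\theta \leq 1/2$ (so $\alpha \geq 1$, $\beta = 1$) the bound collapses to Kraft's inequality $\leq 1$ and gives $f(\theta, k) = 1/k$; for $\theta > 1/2$ (so $\beta = \alpha < 1$) one genuinely needs the concavity of $x^{\alpha}$. The choice $\beta = \min(\alpha, 1)$ unifies the two cases into a single calculation.
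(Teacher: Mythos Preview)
Your proof is correct and takes a genuinely different route from the paper's.

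The paper bounds $W_\theta(F) = \sum_{X_i \in S} \theta^{\rho(F,X_i)}$ by writing it as $\sum_s n_s \theta^s$ with $n_s = |\{X_i \in S : \rho(F,X_i)=s\}|$, then solving the linear program $\max \sum_s n_s \theta^s$ subject to $\sum_s n_s = k$ and $0 \le n_s \le 2^s$. The optimum is attained by greedily packing the $k$ variables into the lowest levels, yielding the explicit bound $\frac{(2\theta)^{\log_2(k+1)}-1}{2\theta-1}$ (and $\log_2(k+1)$ at $\theta=1/2$), after which the same min-versus-average step you use finishes the argument. Your approach replaces this optimization by two standard inequalities: Kraft (which is exactly the observation $n_s \le 2^s$ summed up) and Jensen for the concave map $x \mapsto x^\beta$. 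The two arguments are equivalent in spirit and produce the same asymptotic rate $k^{\log_2\theta}$ in the regime $\theta>1/2$; yours is a bit cleaner in that it avoids the case split at $\theta=1/2$ and the explicit LP, while the paper's version has the minor advantage of identifying the extremal configuration (all variables sitting at the top of a complete binary tree) explicitly.
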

\begin{proof}
Both \cite{pippengerFirstUpperBound} and \cite{maxGateNoise} proved simpler statements of a similar nature for formulas, rather than for mixtures of formulas. For that purpose, the upper bound of $\sum_{i=1}^{k}\sum_{v\in L^{-1}(X_i)}{2^{-\rho(F, X_i)}} \leq 1$ sufficed. Since we allow convex combinations of formulas, and also may have $\theta > 1/2$, we require a new argument to obtain an $o(1)$ bound. 

Fix a formula $F$ and a nonempty subset $S \subset \mathbf{X}$, and let $k = |S|$. We define the sum $W_\theta$ as follows:
\begin{equation}\label{eq:defWsumThetaRanks}
W_\theta(F) = \sum_{X\in S}{\theta^{\rho(F, X)}}.
\end{equation}
We will upper bound $W_\theta(F)$.
Let $d = d(F)$ be the depth of $F$.
We may write $W_\theta(F)$ alternatively as
$$W_\theta(F) = \sum_{s=0}^{d}{n_s \theta^{i}}$$
in which
$n_s$ is the number of variables $X_i \in S$ so that $\rho(F,X_i) = s$.
Because there are at most $2^s$ variables at the depth $s$ level of the tree, $n_s \leq 2^s$.
Consider the optimization problem
\begin{align*}
\text{maximize }&\sum_{s=0}^{d}{n_s \theta^s}\\
\text{subject to }&\sum_{s=0}^{d}{n_s} = k\\
\text{and to }&0 \leq n_s \leq 2^s  \, \forall s \in \{0,...,d\}.
\end{align*}
Clearly $W_\theta(F)$ is bounded above by the optimal value of this problem. Moreover, it is not hard to see that the optimal value of this problem is attained by concentrating all weight of the $n_s$ on the lowest levels of $s$, subject to the constraint that $n_s\leq 2^s$. This implies that
\begin{align}
W_\theta(F)&\leq \begin{cases}
\frac{(2\theta)^{\log_2(k+1)} - 1}{2\theta - 1} & \theta \neq 1/2\\
\log_2(k+1) & \theta = 1/2
\end{cases}.
\label{eq:upperBoundWthetaF}
\end{align}

Now consider the convex combination of formulas $C = \sum_{j=1}^{N}{p_j F_j}$. We extend the formula for $W_\theta$ in the natural way:
$$W_\theta(C) = \sum_{j=1}^{N}{p_j \sum_{i=1}^{k}{\theta^{\rho(F_j, X_i)}}} = \sum_{X_i \in S}\sum_{j=1}^{N}{p_j}{\theta^{\rho(F_j, X_i)}}.$$
Since $\sum_{j=1}^{N}{p_j} = 1$, our upper bound still applies:
$$W_\theta(C) = \sum_{X_i \in S}\sum_{j=1}^{N}{p_j}{\theta^{\rho(F_j, X_i)}} \leq \begin{cases}
\frac{(2\theta)^{\log_2(k+1)} - 1}{2\theta - 1} & \theta \neq 1/2\\
\log_2(k+1) & \theta = 1/2
\end{cases}$$
which implies that for some $X_i \in S$,
$$\sum_{j=1}^{N}{p_j \theta^{\rho(F_j, X_i)}} \leq \begin{cases}
\frac{(2\theta)^{\log_2(k+1)} - 1}{k(2\theta - 1)} & \theta \neq 1/2\\
\frac{1}{k}\log_2(k+1) & \theta = 1/2
\end{cases}$$
Clearly the bound in the $\theta = 1/2$ case is $o(1)$.
The bound in the $\theta \neq 1/2$ case may be rearranged as
$$\frac{(2\theta)^{\log_2(k+1)} - 1}{k(2\theta - 1)} = \frac{(k+1)^{\log_2(2\theta)} - 1}{k(2\theta - 1)}.$$
Since $\theta \in (0, 1)$, $2\theta < 2$, $\log_2(2\theta) < 1$, this bound is $o(1)$ as well.

\end{proof}

\subsection{Bias Reduction by Deep Formulas}

In this section, we prove Lemma~\ref{lem:weight} below, which says roughly that the bias (that is, the amount of signal) on each wire decays at every noisy gate.

Fix a formula $F = (L, G) \in \mathcal{F}_{\epsilon, \tau}$ that is {\bf univariate}, meaning its input wires are labeled with only constants or a single variable $X_i$. That is, $L(V) \subseteq \{X_i, 0, 1\}\cup \{\land_\eps, \oplus_\tau\}$.

Let $v \in G$ denote some internal vertex of $F$, let $g = L(v)$ be the gate that occurs at $v$, let $A, B$ denote the input wires to $g$, and let $C$ denote its output wire. For $(w, W) \in \{(a, A), (b, B), (c, C)\}$, define
$$w(F, X_i):= \frac{1}{2} \mathbb{E}[W | X_i = 1] + \frac{1}{2} \mathbb{E}[W  | X_i = 0]$$
and
$$\delta_w(F, X_i) := \mathbb{E}[W  | X_i = 1] - \mathbb{E}[W | X_i = 0]$$
where the probabilities are over the randomness of the gates in the subformula below $W$ only. Notice that this makes the probabilities for $A, B$ independent since $F$ is a formula.
This notation is illustrated in Figure~\ref{fig:notation2}.
For notational clarity, we will  omit the arguments of all $\delta_w$ and $w$ when they are clear from context.

\begin{figure}
\centering
\begin{tikzpicture}[xscale=1.7]

\node[blue](l1) at (-3.7, .5) {$a = a(F,X_i) = \frac{1}{2} \mathbb{E}[A | X_i = 1] + \frac{1}{2} \mathbb{E}[A  | X_i = 0] $ };
\node[blue](l2) at (-4, -.3) {$\delta_a = \delta_a(F,X_i) = \mathbb{E}[A  | X_i = 1] - \mathbb{E}[A | X_i = 0] $ };

\node[draw,circle](g) at (0,0) {$g$};
\draw (g) to node[above,pos=0.5](A) {$A$} (-1,-1);
\draw (g) to node[above,pos=0.5] {$B$} (1,-1);
\draw (g) to node[left,pos=0.5] {$C$} (0,1);
\draw (-1,-1) -- (-1.8,-2) -- (-.2,-2) -- cycle;
\draw (1,-1) -- (1.8,-2) -- (.2,-2) -- cycle;
\draw[dashed] (0,2) -- (-2.5,-2.2) -- (2.5,-2.2) -- cycle;
\node at (1,.7) {$F$};
\draw[->,blue](l1) to[out=0,in=120] (A);
\draw[->,blue](l2) to (A);

\node[draw,circle,fill=blue!10](\i)(a) at (-1.6,-2.9) {\small $X_i$};
\node[draw,circle,fill=green!10](\i)(b) at (-1,-2.9) { $1$};
\node[draw,circle,fill=blue!10](\i)(c) at (-.4,-2.9) {\small $X_i$};

\draw (a) -- (-1.6, -2);
\draw (b) -- (-1, -2);
\draw (c) -- (-.4, -2);

\begin{scope}[xshift=2cm]
\node[draw,circle,fill=green!10](\i)(a) at (-1.6,-2.9) {$1$};
\node[draw,circle,fill=green!10](\i)(b) at (-1,-2.9) { $0$};
\node[draw,circle,fill=blue!10](\i)(c) at (-.4,-2.9) {\small $X_i$};

\draw (a) -- (-1.6, -2);
\draw (b) -- (-1, -2);
\draw (c) -- (-.4, -2);
\end{scope}

\end{tikzpicture}
\caption{The notation used for Lemma~\ref{lem:weight}.}\label{fig:notation2}
\end{figure}
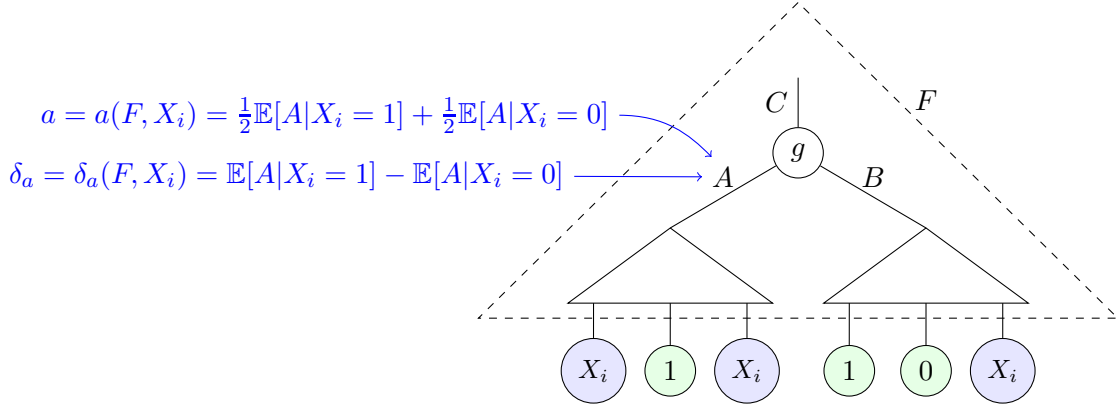

We shall now consider expressions for the quantities $\delta_c, c$ in terms of the quantities $\delta_a, a, \delta_b, b$. To simplify notation, we will rename $c$ to $d$ if $g = \land_\eps$ and rename $c$ to $e$ if $g = \oplus_\tau$. 
The following expressions (which hold for any $F,X_i$) are not hard to derive and follow from the independence of the noise on the subformulas beneath $A$ and $B$:
$$d = (1-2\epsilon)\left[ab + \frac{1}{4}\delta_a\delta_b\right] + \epsilon$$
$$\delta_d = (1-2\epsilon)(b\delta_a + a\delta_b)$$
$$e = (1-2\tau)\left[-2 a b  - \frac{1}{2}\delta_{a} \delta_{b} +  a  + b\right] + \tau$$
$$\delta_e = (1-2\tau)((1-2a) \delta_b+ (1- 2b)\delta_a ).$$

With this notation out of the way, we state the main lemma in this section, which implies that the bias on a wire decays with the depth of that wire in the circuit.  In particular, this will imply that inputs that are too deep cannot have a very big effect on the output of the circuit.

\begin{lem}[Weight Decay]\label{lem:weight}
For all $(\epsilon, \tau) \in (1/6,5/6)\times(0,1)$, there exists $\theta \in [0, 1)$ such that for all $a, b \in (0, 1), \delta_a, \delta_b \in [-1, 1], |\delta_a| + |2a-1| \leq 1, |\delta_b| + |2b-1| \leq 1$,
\begin{equation}\label{eq:weightDecay}
    \max\left\{\frac{|\delta_d|}{1-|2d-1|},  \frac{|\delta_e|}{1-|2e-1|}\right\}  \leq \theta \max\left\{ \frac{|\delta_a|}{1-|2a-1|}, \frac{|\delta_b|}{1-|2b-1|} \right\}.
\end{equation}
\end{lem}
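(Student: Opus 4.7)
The plan is to reduce the inequality to a bounded optimization problem in suitable coordinates and verify that the extremal value is strictly less than $1$ for each gate separately. Define the normalized quantity $\mu_w := |\delta_w|/(1-|2w-1|)$ for a wire $w$; the hypothesis $|\delta_w|+|2w-1|\le 1$ is exactly $\mu_w \in [0,1]$, so the parameters range over a compact set. Writing $M := \max(\mu_a,\mu_b)$, I will bound $\mu_d/M$ and $\mu_e/M$ separately by constants $\theta_{\AND}(\epsilon)<1$ and $\theta_{\XOR}(\tau)<1$, then take $\theta := \max(\theta_{\AND},\theta_{\XOR})$.

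For the AND gate, I reparametrize via $a_i := a + (-1)^{i+1}\delta_a/2 \in [0,1]$ for $i\in\{0,1\}$ (and analogously for $b$). A short computation using the given formulas then gives
\[ \mu_d \;=\; \frac{(1-2\epsilon)\,|a_1b_1-a_0b_0|}{1-(1-2\epsilon)\,|a_0b_0+a_1b_1-1|}. \]
A sign/monotonicity argument first reduces to the case $\delta_a,\delta_b\ge 0$ (opposite signs only cancel parts of $|\delta_d|$). A case analysis on the signs of $a_0+a_1-1$ and $b_0+b_1-1$ (with the symmetry $x\mapsto 1-x$ collapsing cases) then allows the absolute values to be dropped, after which calculus on $[0,1]^4$ identifies the critical regime as $a,b\to 1/2$ and $\delta_a,\delta_b\to 0$. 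A direct Taylor expansion there yields $\mu_d/M \to 2(1-2\epsilon)/(1+2\epsilon)$, which is $<1$ precisely when $\epsilon>1/6$; the rest of the region must be dominated by this boundary value by continuity and the case analysis.

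For the XOR gate, I substitute $u:=2a-1$, $v:=2b-1$, obtaining $\delta_e=-(1-2\tau)(u\delta_b+v\delta_a)$ and $2e-1=-(1-2\tau)(uv+\delta_a\delta_b)$. A parallel WLOG-sign reduction and case analysis on the sign of $uv+\delta_a\delta_b$ turns the bound into a smooth optimization problem. The worst case turns out to be $|u|=|v|=\tfrac{1}{2}$ and $|\delta_a|=|\delta_b|=\tfrac{1}{2}$ with matching signs (that is, $a=b\in\{1/4,3/4\}$ with maximal same-sign biases), at which point $\mu_a=\mu_b=1$ and direct computation gives $\mu_e=(1-2\tau)/(1+2\tau)$. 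This is $<1$ precisely when $\tau>0$. The hypothesis $\tau>0$ is essential: at $\tau=0$ the identity $\XOR(X_i,0)=X_i$ shows that $\mu_e$ can equal $1$.

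The main obstacle is rigorously justifying that the limiting values identified above are \emph{global} suprema over the full 4-dimensional domain, rather than merely local or boundary candidates. Because the ratios contain nested absolute values whose sign patterns cut the parameter space into several smooth regions, each region must be examined individually; within each, one must check both interior critical points and the relevant boundaries. My expectation is that once the signs are fixed, $\mu_d/M$ and $\mu_e/M$ are either monotone in each remaining coordinate or have an explicit critical point solvable in closed form, so that the argument is tedious but elementary; the clean conclusion is the sharp threshold $\eps>1/6$ and $\tau>0$.
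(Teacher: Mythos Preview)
Your overall plan—treat the two gates separately and bound $\mu_c/M$ by a gate-specific constant $<1$—is exactly the paper's structure, and your Taylor computation for the $\land_\eps$ gate correctly identifies the threshold $\eps>1/6$. But there is a concrete error in the $\oplus_\tau$ analysis: the point $|u|=|v|=|\delta_a|=|\delta_b|=\tfrac12$ is \emph{not} the supremum. Take instead $u=0$, $\delta_a=1$ (so $\mu_a=1$) and $v\to 1$, $\delta_b=0$ (so $\mu_b=0$); then $\delta_e=-(1-2\tau)v$ while $uv+\delta_a\delta_b=0$, giving $\mu_e=(1-2\tau)|v|\to 1-2\tau$, strictly larger than your claimed $(1-2\tau)/(1+2\tau)$. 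The correct supremum is $1-2\tau$, so the conclusion $\theta_{\oplus}<1$ for $\tau>0$ survives, but your extremal-point identification is wrong. The paper avoids guessing the optimizer: it applies the triangle inequality $|\delta_a\delta_b+x_ax_b|\le|\delta_a\delta_b|+|x_ax_b|$ to kill the absolute value in the denominator, then notes the resulting expression is affine in $\delta_b$, substitutes the boundary value $\delta_b=\delta_a(1-x_b)/(1-x_a)$, and the inequality collapses algebraically to $1\le 1$.

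For the $\land_\eps$ gate the gap is the one you flag yourself: ``by continuity and the case analysis'' does not promote a local limit to a global supremum, and the case analysis is not carried out. Your sign reduction ``opposite signs only cancel parts of $|\delta_d|$'' is also incomplete, because flipping the sign of $\delta_b$ changes $\delta_a\delta_b$ and hence $d$, so the denominator $1-|2d-1|$ moves as well. The paper's route is more mechanical and avoids this: it rewrites $\mu_d<\mu_a$ as $\frac{1-|2a-1|}{|\delta_a|}\,|b\delta_a+a\delta_b|+|2ab+\tfrac12\delta_a\delta_b-1|<\sigma$ with $\sigma=1/(1-2\eps)$, observes that once the sign pattern $(s_0,s_1,s_2)$ is fixed and $\delta_b$ is pushed to the constraint boundary, the left side is affine in $b$ and $|2b-1|$ and quadratic in $\delta_a$, and then checks all $72$ resulting corner substitutions. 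This yields $\sup\le 3/2$, hence the uniform bound $\mu_d/M\le 3(1-2\eps)/2<1$ for $\eps>1/6$—weaker than your conjectured $2(1-2\eps)/(1+2\eps)$, but rigorously established.
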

\begin{proof}

We prove Lemma~\ref{lem:weight} in two cases, depending on whether the maximum on the left hand side is attained by $d$ (an $\land_\eps$ gate) or by $e$ (a $\oplus_\tau$ gate). We assume that $(\epsilon, \tau) \in (1/6,1/2] \times (0,1/2]$; this is without loss of generality as explained in Remark~\ref{rmk:NOTsComeForFree}.

\paragraph{The $\land_\eps$ case.}
If $\eps = 1/2$, then $\delta_d = 0$ and we are done.  Thus, we assume that $\eps \in (1/6, 1/2)$.
Defining $\sigma := \frac{1}{1-2\epsilon}$, and noting that for any $\epsilon \in (1/6, 1/2)$ we have $\frac{1}{1-2\epsilon} > \frac{3}{2}$, it suffices to show that for all $\sigma > 3/2$,
\[
\sup_{\frac{|\delta_b|}{1-|2b-1|} \leq \frac{|\delta_a|}{1-|2a-1|} \leq 1 } \frac{|b\delta_a + a\delta_b|}{\sigma - |2 ab + \frac{1}{2}\delta_a\delta_b - 1 |}  -  \frac{|\delta_a|}{1-|2a-1|} < 0. \]
This holds if and only if
\[ \sup_{\frac{|\delta_b|}{1-|2b-1|} \leq \frac{|\delta_a|}{1-|2a-1|} \leq 1 } \frac{1-|2a-1|}{|\delta_a|}|b\delta_a + a\delta_b| + \left|2 ab + \frac{1}{2}\delta_a\delta_b - 1 \right|  < \sigma \]
which in turn holds if and only if
\begin{equation} \sup_{\frac{|\delta_b|}{1-|2b-1|} \leq \frac{|\delta_a|}{1-|2a-1|} \leq 1 } \max_{s_0, s_1 \in \{-1, 1\}}\left\{(1-|2a-1|)s_0\left(b + a\frac{\delta_b}{\delta_a}\right) + s_1\left(2 ab + \frac{1}{2}\delta_a\delta_b - 1 \right)  \right\} \leq \frac{3}{2}.\label{eq:optimizationANDepsRefined}
\end{equation}
For fixed $s_0, s_1, \delta_a, a$, this is an affine function in $b, \delta_b$ whose extrema are obtained on the feasible set boundary, which is the surface defined by
$$\frac{|\delta_b|}{1-|2b-1|} = \frac{|\delta_a|}{1-|2a-1|},$$
and so for some $s_2 \in \{-1, 1\}$,
the optimal point of \eqref{eq:optimizationANDepsRefined} has
$$\delta_b = s_2 \frac{1-|2b-1|}{1-|2a-1|}\delta_a.$$
Plugging this back in to \eqref{eq:optimizationANDepsRefined} we must show
\begin{multline}
\sup_{\frac{|\delta_b|}{1-|2b-1|} \leq \frac{|\delta_a|}{1-|2a-1|} \leq 1 } \max_{s_0, s_1, s_2 \in \{-1, 1\}}\left\{(1-|2a-1|)s_0\left(b + as_2 \frac{1-|2b-1|}{1-|2a-1|}\right) + \right. \\
\left. s_1\left(2 ab + \frac{1}{2}\delta_a^2 s_2 \frac{1-|2b-1|}{1-|2a-1|} - 1 \right)  \right\} \leq \frac{3}{2}.
\label{eq:toshow}
\end{multline}
The criterion \eqref{eq:toshow} can be established numerically by checking many cases.  More precisely,
we notice that for fixed $a, \delta_a, s_0, s_1, s_2$, this is an affine function of $b, |2b-1|$ whose extrema with respect to $b$ must occur at either a critical point (which only occurs at $2b-1=0$) or an endpoint of the interval $b \in [0, 1]$. The function depends on $\delta_a$ only with a quadratic term, so its maximum with respect to $\delta_a$ must occur at either a critical point (which only occurs at $\delta_a=0$) or an endpoint of the interval $\delta_a\in [-(1-|2a-1|), 1-|2a-1|]$. Evaluating the 72 expressions obtained by substituting
$$(b, \delta_a, s_0, s_1, s_2)\in \{ 0, 1/2,  1\} \times \{-(1-|2a-1|), 0, 1-|2a-1|\} \times \{-1, +1\}^3,$$
we find 7 distinct functions of $a$ (up to overall sign):
\begin{align}
\pm \left( -2 \, a + \frac{1}{2} \, {\left| 2 \, a - 1 \right|} + \frac{1}{2} \right) \label{eq:first}\\
\pm 1 \\
\pm \left( 2 \, a - {\left| 2 \, a - 1 \right|} \right) \\
\pm \left( -2 \, a - \frac{1}{2} \, {\left| 2 \, a - 1 \right|} + \frac{3}{2} \right) \\
\pm \left( 2 \, a + {\left| 2 \, a - 1 \right|} - 2 \right) \\
\pm \left( \frac{1}{2} \, {\left| 2 \, a - 1 \right|} - \frac{3}{2} \right) \\
\pm \left( \frac{1}{2} \, {\left| 2 \, a - 1 \right|} + \frac{1}{2} \right)\label{eq:nth}
\end{align}
Since these are all affine in $a$ and $|2a-1|$, the maximum with respect to $a$ must occur at either a critical point (which only occurs at $2a-1=0$) or at an endpoint of the interval $a \in [0, 1]$. Checking the 21 cases resulting from substituting $a\in \{0, 1/2, 1\}$ into the seven equations \eqref{eq:first}-\eqref{eq:nth}, we have that the maximum is $\frac{3}{2}$, when $a=1/2$. Therefore \eqref{eq:optimizationANDepsRefined} holds, and we are done with this case.

\paragraph{The $\oplus_\tau$ case.}
If $\tau = 1/2$, then $\delta_e = 0$ and we are done.  Thus we assume that $\tau \in (0,1/2)$.
Defining $\mu := \frac{1}{1-2\tau}$, and noting that $\forall \tau \in (0, 1/2), \frac{1}{1-2\tau} > 1$, it suffices to show that for all $\mu > 1$,
\[ \sup_{\frac{|\delta_b|}{1-|2b-1|} \leq \frac{|\delta_a|}{1-|2a-1|} \leq 1 } \frac{|(2a-1) \delta_b+ (2b-1) \delta_a |}{\mu-|\delta_a \delta_b + (2a-1)(2b-1) |}  - \frac{|\delta_a|}{1-|2a-1|} < 0. \]
Reparameterizing with $x_a := 2a - 1$ and $x_b := 2b - 1$, this becomes:
\[ \sup_{\frac{|\delta_b|}{1-|x_b|} \leq \frac{|\delta_a|}{1-|x_a|} \leq 1 } \frac{|x_a \delta_b+ x_b \delta_a |}{\mu-|\delta_a \delta_b + x_a x_b |}  - \frac{|\delta_a|}{1-|x_a|} < 0. \]
Using the triangle inequality, along with the fact that $|\delta_a\delta_b| + |x_ax_b| \leq 1 \leq \mu$, it suffices to show
\[ \sup_{\frac{|\delta_b|}{1-|x_b|} \leq \frac{|\delta_a|}{1-|x_a|} \leq 1 } \frac{|x_a \delta_b|+ |x_b \delta_a |}{\mu-|\delta_a \delta_b| - |x_a x_b |}  - \frac{|\delta_a|}{1-|x_a|} < 0 \]
which is equivalent to
\begin{align}
\sup_{x_a, \delta_a, x_b, \delta_b \in [0, 1], \frac{\delta_b}{1-x_b} \leq \frac{\delta_a}{1-x_a} \leq 1 } \frac{x_a \delta_b+ x_b \delta_a }{\mu - \delta_a \delta_b - x_a x_b }  - \frac{\delta_a}{1 - x_a} &< 0 \notag\\
 \sup_{x_a, \delta_a, x_b, \delta_b \in [0, 1], \frac{\delta_b}{1-x_b} \leq \frac{\delta_a}{1-x_a} \leq 1 } (x_a \delta_b+ x_b \delta_a)\frac{1-x_a}{\delta_a} + \delta_a \delta_b +  x_a x_b &<  \mu \label{eq:optimizationXORkappaReducedTriangleAbsfree1} 
\end{align}
The expression
 \eqref{eq:optimizationXORkappaReducedTriangleAbsfree1} is affine in $\delta_b$, so the extrema must occur for $\delta_b$ on the boundary of the feasible set; that is, the extrema occur at $\delta_b \in \left\{0,  \delta_a\frac{1-x_b}{1-x_a}\right\}$.  The case where $\delta_b = 0$ simplifies the expression to $x_b \leq 1$, which holds trivially by the constraints.  Therefore, it suffices to show that
\[ \sup_{x_a, \delta_a, x_b \in [0, 1], \frac{\delta_a}{1-x_a} \leq 1 } x_a (1-x_b)+ x_b(1-x_a) + \delta_a^2 \frac{1-x_b}{1-x_a} + x_a x_b \leq 1 \]
Since $\frac{1 - x_b}{1 - x_a} \geq 0$, the expression above is maximized when $\delta_a = 1 - x_a$ is as large as possible, which means that it suffices to show that:
\[ \sup_{x_a, x_b \in [0, 1] } x_a (1-x_b)+ x_b(1-x_a) + (1-x_a)^2 \frac{1-x_b}{1-x_a} + x_a x_b \leq 1.\]
Finally, this last expression simplifies to read
\[ \sup_{x_a, x_b \in [0, 1] } 1 \leq 1 \]
which is true.

This finishes the proof of this case, and of the lemma.
\end{proof}

\begin{rmk}\label{rmk:NOTsComeForFree}
Notice that Equation~\eqref{eq:weightDecay} of Lemma~\ref{lem:weight} is invariant under mapping $(w, \delta_w) \mapsto  (1-w, -\delta_w)$, for any $w\in \{a,b,c, d,e\}$.
This implies that the proof goes through even if we are allowed to apply unary noise-free $\neg$ gates.
In particular, this implies that checking the case of $(\epsilon, \tau) \in (1/6,1/2] \times (0,1/2]$ suffices to complete the proof of Lemma~\ref{lem:weight}.
\end{rmk}

\subsection{Proof of Lemma~\ref{claim:mainClaim}}\label{sec:pfMainClaim}
Now we put Lemmas~\ref{lem:depth} and \ref{lem:weight} together to prove Lemma~\ref{claim:mainClaim}.
\begin{proof}[Proof of Lemma~\ref{claim:mainClaim}]
Fix $\Delta > 0$.
Fix $\epsilon  \in (1/6, 1/2], \tau \in (0,1/2]$.
For $n=2,3,\ldots,$ let $f^{(n)}: \{0,1\}^n\rightarrow \{0,1\}$ denote a sequence of Boolean functions so that $f^{(n)}$ depends on all $n$ inputs.
Let $C^{(n)} \in \conv \mathcal{F}_{\epsilon, \tau}$ denote a sequence of mixtures over $N^{(n)}$ formulas on $n$ inputs in $\mathcal{F}_{\epsilon, \tau}$.
Now fix $n$ and write
$$C^{(n)} = C = \sum_{j=1}^{N}{p_j F_{j}},$$
where as above we shall drop the dependence on $n$ for clarity.
By Lemma~\ref{lem:depth}, there exists a choice of input $X_i$ such that $\forall \theta \in [0,1)$,
\begin{equation}\label{eq:FjThetao1Bd}
\sum_{j}{p_j}{\theta^{\rho(F_j, X_i)}} \leq o(1).
\end{equation}
Fix this $i$.
Let $\mathbf{q}\in\{0,1\}^{n-1}$ 
denote some bitstring such that
$$f\left(q_1, \ldots, q_{i-1}, 0, q_{i+1}, \ldots, q_{n}\right) \neq f\left(q_1, \ldots, q_{i-1}, 1, q_{i+1}, \ldots, q_{n}\right).$$
Note that such a bitstring $\mathbf{q}$ exists since $f$ depends on all of its inputs.
Let
$$Q = C\left(q_1, \ldots, q_{i-1}, X_i, q_{i+1}, \ldots, q_{n}\right).$$
Let $E_j = F_j\left(q_1, \ldots, q_{i-1}, X_i, q_{i+1}, \ldots, q_{n}\right)$ so that
$$Q = \sum_{j=1}^{N}{p_j E_j}.$$
Define
$$\delta(E_j, X_i) = \mathbb{E}[E_j|X_i=1] - \mathbb{E}[E_j|X_i=0].$$
Now, since there are $\rho(E_j,X_i)$ gates between each input labeled $X_i$ and the output $r(E_j)$, Lemma~\ref{lem:weight} implies
that there exists $\theta \in [0, 1)$ so that
\begin{equation}\label{eq:biasEjUpperBd}
|\delta(E_j, X_i)| \leq \theta^{\rho(E_j,X_i)}.
\end{equation}
Now 
\begin{align*}
\left|\delta(Q, X_i)\right| &= \left|\sum_{j=1}^{N}{p_j\delta(E_j, X_i)}\right|\\
&\leq \sum_{j=1}^{N}{p_j|\delta(E_j,X_i)|}\\
&\leq \sum_{j=1}^{N}{p_j\theta^{\rho(E_j,X_i)}}\\
&= \sum_{j=1}^{N}{p_j\theta^{\rho(F_j,X_i)}}\\
&\leq o(1),
\end{align*}
where above we have used the triangle inequality, Equation~\eqref{eq:biasEjUpperBd}, the fact that $\rho(E_j, X_i) = \rho(F_j, X_i)$, and Equation~\eqref{eq:FjThetao1Bd}.

By the definition of $\delta$ and $Q$ (and un-fixing $n$ and $i$), this implies that for sufficiently large $n$, there is some $i$ and $\mathbf{q}$ so that
$$\left|\mathbb{E}\left[C^{(n)}\left(q_1, \ldots, q_{i-1}, 1, q_{i+1}, \ldots, q_{n}\right)\right] - \mathbb{E}\left[C^{(n)}\left(q_1, \ldots, q_{i-1}, 0, q_{i+1}, \ldots, q_{n}\right)\right]\right| < \Delta,$$
which implies that $f^{(n)}$ is not reliably computed by these $C^{(n)}$ with advantage $\Delta$.

Finally, we conclude Lemma~\ref{claim:mainClaim}.  Indeed, suppose that $f^{(n)}$ is a sequence of functions which depends on any super-constant number of inputs, and let $g^{(n)}$ denote the restriction of $f^{(n)}$ to the inputs on which it depends.  Then $g^{(n)}$ is a family of functions that depends on all of its inputs, and the argument above applies.
Therefore, any sequence $f^{(n)}$ of functions which is reliably computed by a sequence of formula mixtures $C^{(n)} \in \conv \mathcal{F}_{\epsilon, \tau}$ depends on at most a constant number of inputs.
\end{proof}

\subsection{Proof of Theorem~\ref{thm:main}}\label{sec:pfMainThm}
\begin{conj}\label{conj:AppliesToCircuits}
The bound of Lemma~\ref{claim:mainClaim} applies to circuits as well as formulas. That is, letting $\mathcal{C}_{\eps, \tau}$ denote the class of circuits on $\{\land_\eps, \oplus_\tau\}$, suppose that $\eps \in (1/6, 5/6)$ and $\tau \in (0, 1)$.
Fix $\Delta >0$ and let $f: \{0,1\}^n \to \{0,1\}$ be a function that is computable with probability at least $1/2 + \Delta$ by functions in $\conv \mathcal{C}_{\eps, \tau}$.
Then $f$ depends on at most a constant number of inputs.
\end{conj}
Finally we prove Theorem~\ref{thm:main}, assuming Conjecture~\ref{conj:AppliesToCircuits} and Theorem~\ref{thm:noiseThresholdsAreStrict}, which we prove in Section~\ref{section:proofNoiseThresholdsStrict}.  

Lemma~\ref{claim:mainClaim} and Conjecture~\ref{conj:AppliesToCircuits} imply that $\conv \mathcal{C}_{\eps,\tau}$ does not support reliable computation for any $(\eps, \tau) \in (1/6,5/6) \times (0,1)$.
Now for all $\tau\in(0,1)$, Theorem~\ref{thm:noiseThresholdsAreStrict} applied to $\cC_{\epsilon,\tau}$ with respect to the noisy gate $\land_\epsilon$ implies that $\cC_{\epsilon,\tau}$ does not support reliable computation for all $\epsilon\in[1/6,5/6]$.
Now, for all $\epsilon\in[1/6,5/6]$, Theorem~\ref{thm:noiseThresholdsAreStrict} applied to $\cC_{\epsilon,\tau}$ with respect to the noisy gate $\oplus_\tau$ implies that $\cC_{\epsilon,\tau}$ does not support reliable computation for all $\tau\in[0,1]$.
Thus, $\cC_{\eps, \tau}$ does not support reliable computation for all $\eps, \tau \in [1/6,5/6]\times[0,1]$, which proves the theorem. \begin{flushright}$\qed$\end{flushright}

\section{Proof of Theorem~\ref{thm:AmplifierEquivToFTCC_andOverhead}: Equivalence between reliable computation and amplification}\label{sec:pfEquiv}
In this section, we prove Theorem~\ref{thm:AmplifierEquivToFTCC_andOverhead}, which we restate below.

\begin{thm*}[Theorem \ref{thm:AmplifierEquivToFTCC_andOverhead}, restated]
Let $\mathcal{C}$ denote a circuit model closed under composition. Then $\conv\mathcal{C}$ supports reliable computation if and only if $\conv\mathcal{C}$ contains both an amplifier and a $\neg_\kappa$ gate for $\kappa < 1/2$.

Further,
given a circuit model $\mathcal{C}$ such that $\conv \mathcal{C}$ supports reliable computation, there exists a constant $s$ such that for any function $f:\{0,1\}^n \to \{0,1\}$ computable by a depth-$d$ circuit of noiseless $\NAND$ gates, $f$ can be computed by a depth-$(s \cdot d)$ circuit in $\conv \mathcal{C}$ with failure probability bounded away from $1/2$.
\end{thm*}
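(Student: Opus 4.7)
My plan follows the outline suggested in the introduction, proving the two directions of the equivalence in turn and then deriving the depth bound.

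For the forward direction, assume $\conv\mathcal{C}$ supports reliable computation. To extract a $\neg_\kappa$ gate, I would reliably compute the single-variable NOT function; Definition~\ref{define:reliableComputationDefinition} then gives a circuit in $\conv\mathcal{C}$ that outputs $\neg X$ correctly with probability at least $1/2 + \delta_0/2$, which is precisely a $\neg_\kappa$ gate for some $\kappa < 1/2$. To obtain an amplifier, I would reliably compute the majority function $\Maj^{(2k+1)}$ for a large odd $k$ and then symmetrize the resulting circuit $M$ using the $\neg_\kappa$ just obtained, outputting $M(\mathbf{x})$ and $\neg(M(\neg \mathbf{x}))$ with equal probability, so that $A_{\tilde{M}}(1/2) = 1/2$ is guaranteed to be a fixed point. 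A direct computation then shows $A_{\tilde{M}}'(1/2) = \Omega(\delta_0 \sqrt{k})$, which exceeds $1$ for $k$ large enough and produces an amplifier in $\conv\mathcal{C}$.

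For the harder backward direction, assume $\conv\mathcal{C}$ contains an amplifier $\Amp : \{0,1\}^d \to \{0,1\}$ amplifying away from $p_0$ and a $\neg_\kappa$ gate. The plan is to construct a constant-depth reliable NAND in $\conv\mathcal{C}$ and then apply NAND-completeness. \emph{Step 1}: Use $\neg_\kappa$ on constants together with convex mixtures to produce biased random bits with expectations reaching all of $[\kappa, 1-\kappa]$, and apply iterated $\Amp$ to push these beyond $p_0$ in both directions, making biases arbitrarily close to $0$ and to $1$ available. \emph{Step 2}: Construct a constant-depth biased-NAND circuit $g: \{0,1\}^2 \to \{0,1\}$ by feeding $\Amp$ a specific combination of the inputs $X, Y$ (optionally inverted via $\neg_\kappa$) together with suitably biased independent random bits on the remaining wires, so that the resulting multilinear polynomial $\mathbb{E}[g(X,Y)]$ lies below $p_0 - \delta$ at $(1,1)$ and above $p_0 + \delta$ at the other three inputs, for some margin $\delta > 0$. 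This is made possible because $A_{\Amp}'(p_0) > 1$ forces a nontrivial multilinear interaction in the polynomial representation of $\Amp$. \emph{Step 3}: Define $\widehat{N}(X, Y) = \Amp^{(k)}(g_1(X,Y), \ldots, g_{d^k}(X,Y))$ with $g_i$ independent copies of $g$ and $\Amp^{(k)}$ the $k$-fold nested amplifier, choosing $k$ a large enough constant that $A_{\Amp^{(k)}}$ sends biases past $p_0 \pm \delta$ to within some $\eta < 1/2$ of the correct value. Then $\widehat{N}$ is a reliable NAND.

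To establish the ``Further'' clause, I would replace each noise-free NAND in a depth-$d$ circuit computing $f$ by $\widehat{N}$, obtaining a $\conv\mathcal{C}$ circuit of depth at most $s \cdot d$ with $s$ the constant depth of $\widehat{N}$. To show that the overall failure probability remains bounded away from $1/2$, I would prove that $\widehat{N}$ is \emph{error-reducing}: there exists $\eta < 1/2$ such that whenever both inputs to $\widehat{N}$ carry error at most $\eta$, the output has error at most $\eta$ as well. This follows from a continuity argument, since $\mathbb{E}[g(X,Y)]$ under noisy inputs is a continuous perturbation of its noise-free value, staying on the correct side of $p_0$ for $\eta$ below some threshold, after which iterated amplification restores the output bias to within $\eta$ of correct.

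The main obstacle I anticipate is Step~2 above, the explicit construction of the biased NAND $g$: one must exhibit how to combine $X, Y$, $\neg_\kappa$-generated bits, and $\Amp$ so that the resulting multilinear polynomial has the required sign pattern. This is especially delicate when $p_0 \neq 1/2$ or $p_0 \notin [\kappa, 1-\kappa]$, since the biases reachable from $\neg_\kappa$ alone need not naturally straddle $p_0$, and one must leverage $\Amp$ to transport biases across $p_0$ before the construction can proceed. A secondary technical point is verifying the error-reducing property quantitatively, as the tolerable input error $\eta$ depends on both the margin $\delta$ and the contraction rate of $\Amp^{(k)}$ near its unstable fixed point.
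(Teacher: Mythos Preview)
Your high-level plan matches the paper's, and your depth argument via an error-reducing NAND gadget is essentially the paper's $I_+$ invariant. There is one minor slip and one genuine gap.

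The minor slip is in the forward direction: symmetrizing with the noisy $\neg_\kappa$ does \emph{not} force $A_{\tilde M}(1/2)=1/2$. With $\tilde M = \tfrac12 M + \tfrac12(\neg_\kappa\circ M\circ\neg_\kappa)$ one computes $A_{\tilde M}(1/2) = \kappa\,A_M(1/2) + (1-\kappa)/2$, which hits $1/2$ only if $\kappa=0$ or $A_M(1/2)=1/2$ already. The paper instead balances by mixing $M$ with the constant circuits $\mathbf 0,\mathbf 1$ (available since leaves may carry constants), which fixes $A(1/2)=1/2$ exactly without any NOT.

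The genuine gap is your Step~2. You argue that $A_{\Amp}'(p_0)>1$ forces a nontrivial multilinear term in $\psi_{\Amp}$, and you are right that $\psi_{\Amp}$ cannot be affine. But that does not tell you that placing $X,Y$ on two inputs of $\Amp$ and biased coins on the rest yields the NAND sign pattern: you would need a nonzero $XY$ coefficient of the right sign and enough magnitude to separate $(1,1)$ from the other three corners across the threshold $p_0$, and nothing in the amplifier hypothesis controls that coefficient. You correctly flag this as the main obstacle and do not resolve it. The paper's resolution is quite different from what you sketch. It first reduces, via an explicit construction with $\neg_\kappa$ and noise gates, to a \emph{self-dual} amplifier away from $1/2$ (Lemma~\ref{lem:existsSelfDualAmplfiier}); this kills the ``$p_0\neq 1/2$'' headaches you anticipate. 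Then the AND-like gadget is built \emph{without using the amplifier at all}: it is the purely affine mixture
\[
D_{r,s} \;=\; r\,\mathbf 0 \;+\; s\,\mathbf 1 \;+\; \tfrac{1-r-s}{2}\bigl(\mathbf x_1+\mathbf x_2\bigr),
\]
whose output bias $\tfrac{1-r-s}{2}(p_1+p_2)+s$ is linear in $(p_1,p_2)$. The point is that one can choose $r,s$ so that this linear function crosses $1/2$ exactly between the case $(p_1,p_2)\in I_+\times I_+$ and the other three cases; the nonlinearity needed for NAND comes from the \emph{threshold}, not from a quadratic term. Composing with $\neg_\kappa$ flips the sign, and the self-dual amplifier (now at $p_0=1/2$) is used only afterwards to widen the gap. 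This affine-plus-threshold idea is the missing ingredient in your Step~2.
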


Note that Theorem~\ref{thm:AmplifierEquivToFTCC_andOverhead} does not apply to formulas, which are not closed under composition. This is the reason that cannot apply Theorem~\ref{thm:noiseThresholdsAreStrict} to tighten the threshold from Lemma~\ref{claim:mainClaim} directly, but must first move to the realm of circuits via Conjecture~\ref{conj:AppliesToCircuits}.

Before we prove Theorem~\ref{thm:AmplifierEquivToFTCC_andOverhead}, we state one definition which generalizes the amplification function to a multivariate polynomial.  That is, we substitute each variable with a possibly differently biased coin.
\begin{define}[Amplification function for multiple biases]\label{def:phi}
For a gate $g: \F_2^k \to \F_2$,
let
\[ \psi_g(p_1,\ldots, p_k) = \mathbb{P}_{X_i \sim \Ber(p_i)} \left[ g(X_1,\ldots,X_k) = 1\right]. \]
\end{define}

\begin{proof}[Proof of Theorem~\ref{thm:AmplifierEquivToFTCC_andOverhead}]

Fix $\gamma > 0$, and suppose that $\conv \mathcal{C}$ supports reliable computation with advantage $\gamma$. We wish to show that $\conv \mathcal{C}$ contains an amplifier and a $\neg_\kappa$ gate for $\kappa < 1/2$. Letting $\kappa = \frac{1}{2} (1 - \delta_0) < 1/2$, we see that $\neg_\kappa \in \mathcal{C} \subseteq \conv\mathcal{C}$ by Definition \ref{define:reliableComputationDefinition}.

Next we show that $\conv \mathcal{C}$ contains an amplifier. Since $\conv \mathcal{C}$ supports reliable computation with advantage $\gamma > 0$, for all odd $n$ there exists $c_n \in \conv \mathcal{C}$ such that for all $\mathbf{x} \in \F_2^n$,

\begin{equation}\label{eq:prMajComputedIsBoundedBelow}
\P_{c_n}[\text{Maj}_n(\mathbf{x}) = c_n(\mathbf{x})] \geq \frac{1}{2} + \gamma
\end{equation}
where the probability is taken over the stochastic behavior of $c_n$. Letting $|\mathbf{x}|$ denote the weight of bitstring $\mathbf{x}$, we may write
$$A_{c_n}(p) = \sum_{\mathbf{x} \in \mathbb{F}_2^n}{p^{|\mathbf{x}|}(1-p)^{n-|\mathbf{x}|}\P_{c_n}[c_n(\mathbf{x}) = 1]}$$
which implies that the derivative $A_{c_n}'(p)$ of $A_{c_n}(p)$ satisfies
\begin{align*}
A_{c_n}'(p) &= 
\sum_{\mathbf{x} \in \mathbb{F}_2^n} \left( |\mathbf{x}| p^{|\mathbf{x}| - 1} (1 - p)^{n - |\mathbf{x}|} - p^{|\mathbf{x}|} (n - |\mathbf{x}|) (1 - p)^{n - |\mathbf{x}| -1 }\right) \P_{c_n}\left[ c_n(\mathbf{x}) = 1 \right] \\
&= 
\sum_{\mathbf{x} \in \mathbb{F}_2^n} p^{|x|-1}(1-p)^{n-|x|-1}\left( |\mathbf{x}|(1  -p) + (n - |\mathbf{x}|) p \right) \P_{c_n}\left[ c_n(\mathbf{x}) = 1 \right] 
\end{align*}
and hence plugging in $p=1/2$,
\begin{equation}\label{eq:ampFuncDerivOnehalf}
A'_{c_n}(1/2) = \sum_{\mathbf{x} \in \mathbb{F}_2^n}{\frac{|\mathbf{x}| - n/2 }{2^{n - 2}}\P_{c_n}[c_n(\mathbf{x}) = 1]}.
\end{equation}

We split the sum over all bitstrings into those above and below weight $n/2$ and apply inequality \eqref{eq:prMajComputedIsBoundedBelow} to find
\begin{align*}
A_{c_n}'(1/2) &\geq  \sum_{k =0}^{ (n-1)/2}{\frac{k - n/2  }{2^{n - 2}}{n\choose k}\left(\frac{1}{2} -\gamma\right)} +  \sum_{k = (n+1)/2}^{n}{\frac{k - n/2 }{2^{n - 2}}{n\choose k}\left(\frac{1}{2} + \gamma\right)}\\
&\geq  \sum_{k =0}^{ (n-1)/2}{\frac{k - n/2  }{2^{n - 2}}{n\choose k}\left(\frac{1}{2} -\gamma\right)} -  \sum_{k = 0}^{(n-1)/2}{\frac{k - n/2 }{2^{n - 2}}{n\choose k}\left(\frac{1}{2} + \gamma\right)}\\
&=   \gamma \sum_{k=0}^{(n-1)/2}{{n\choose k}\frac{n/2 - k}{2^{n-1}}}.
\end{align*}
Using $k{n\choose k} = n{n-1\choose k -1}$, this reads
\begin{align*}
A_{c_n}'(1/2) &\geq  \frac{n\gamma }{2^n}\left[ 1 + \sum_{k=1}^{(n-1)/2}\left({n\choose k}  -  2{n-1 \choose k - 1}\right) \right]\\
&=  \frac{n\gamma }{2^n}\left[ 1 + \sum_{k=1}^{(n-1)/2}\left({n-1\choose k}  -  {n-1 \choose k - 1}\right) \right]\\
&=  \frac{n\gamma}{2^n} {n-1 \choose (n-1)/2 } .
\end{align*}
For large $n$, 
this lower bound is asymptotic to
\begin{equation}\label{eq:asymptoticLowerBd}
A_{c_n}'(1/2) \gtrsim \gamma \sqrt{\frac{n}{2\pi}}.
\end{equation}
To ensure that our amplifier is appropriately balanced, we define the mixture $b_n$ for each $n$ as:
\begin{equation}\label{eq:balancedMixtureExistsAmplifier}
b_n := \begin{cases}
\frac{1}{2A_{c_n}(1/2)} c_n + \left(1 - \frac{1}{2A_{c_n}(1/2)}\right) \mathbf{0} & 1/2 < A_{c_n}(1/2)  \leq 1 \\
c_n & A_{c_n}(1/2) = 1/2\\
\frac{1}{2(1-A_{c_n}(1/2))} c_n + \left(1 - \frac{1}{2(1-A_{c_n}(1/2))} \right) \mathbf{1} & 0 \leq A_{c_n}(1/2) < 1/2.
\end{cases}
\end{equation}
It is easy to show from this piecewise definition that for all $n$,
$$A_{b_n}(1/2) = \frac{1}{2}$$
as desired, and that the derivative of the amplification function of \eqref{eq:balancedMixtureExistsAmplifier} satisfies
$$\frac{1}{2}A_{c_n}'(1/2) \leq A_{b_n}'(1/2) \leq A_{c_n}'(1/2).$$
By equation \eqref{eq:asymptoticLowerBd}, since $\gamma$ is strictly greater than $0$, there exists finite $n$ such that $A_{c_n}'(1/2) > 2$, implying $A_{b_n}'(1/2) > 1$. Therefore there exists an amplifier away from $1/2$ in $\conv \mathcal{C}$.

To prove the other direction, we must show that given an amplifier away from $1/2$, as well as the  $\neg_\kappa$ gate for $\kappa < 1/2$, we may compute any Boolean function $f: \F_2^n \rightarrow \F_2$ with bounded error independent of $n$.  We will first introduce some notation. Recall from Definition~\ref{def:phi} that for a stochastic map $M$, $\psi_M:[0,1]^k \to [0,1]$ is defined by
\[ \psi_M(p_1, ..., p_k) = \P_{X_i \sim \Ber(p_i)}[M(X_1, ..., X_k) = 1] \]

Now consider a tree $T$ of $\NAND$ gates with leaves labeled by constant bits and variables (later we will use the fact that any Boolean function can be represented by such a $T$). We call $T$ a {\it NAND tree}. Our strategy will be to replace each $\NAND$ gate in $T$ with a stochastic map $\mathbf{N}$ that behaves like a $\NAND$ gate; Claim~\ref{claim:NMapExists} below guarantees that an appropriate map exists. 

\begin{claim}\label{claim:NMapExists}
Let $\mathcal{C}$ denote a circuit model closed under composition. Suppose $\mathcal{C}$ contains an amplifier and a $\neg_\kappa$ gate for some $\kappa < 1/2$. Then there exists $\beta \in (0, 1/2]$, an integer $m \geq 1$, and a map $\mathbf{N} \in \conv \mathcal{C}$ such that $\mathbf{N}$ takes $2m$ inputs and so that the following holds.  Letting
\begin{align}
I_- &= \left[\frac{1}{2}-\beta, \frac{1}{2}- \frac{\beta}{2}\right]\notag \\
I_+ &= \left[\frac{1}{2} + \frac{\beta}{2},  \frac{1}{2} + \beta\right], \notag
\end{align}
we have
\begin{align}
\psi_\mathbf{N}( (I_-)^m \times (I_-)^m) &\subseteq I_+ \label{eq:intervalMappingsNoisyNANDsimulator} \\
\psi_\mathbf{N}( (I_+)^m \times (I_-)^m) &\subseteq I_+  \notag \\
\psi_\mathbf{N}( (I_-)^m \times (I_+)^m) &\subseteq I_+  \notag \\
\psi_\mathbf{N}( (I_+)^m \times (I_+)^m) &\subseteq I_-. \notag
\end{align}
\end{claim}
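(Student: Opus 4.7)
The plan is to construct $\mathbf{N}$ in three stages. The crucial extra ingredient beyond $A$ and $\neg_\kappa$ is that formulas admit constant leaves $\mathbf{0},\mathbf{1}$, so $\conv \mathcal{C}$ contains biased coins of arbitrary bias via mixtures of constant circuits.

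\textbf{Stage 1 (amplifier around $1/2$).} First I would build $A^\star\in\conv\mathcal{C}$ with $\psi_{A^\star}(1/2)=1/2$ and $\psi_{A^\star}'(1/2)>1$. Compose $A$ with itself $k$ times to form $A^{(k)}$; its univariate amplification function $A_A^{\circ k}$ is a sharpening ``step'' through $p_0$ whose derivative at $p_0$ equals $(A_A'(p_0))^k$. Then replace a $(1-\alpha)$-fraction of its inputs by biased coins of bias $\beta_0$ (hard-wiring them to $\mathbf{0}$ or $\mathbf{1}$ in the appropriate mixture) and feed the variable $x$ into the remaining $\alpha$-fraction; the amplification function becomes $\psi_{A^\star}(p)=A_A^{\circ k}(\alpha p+(1-\alpha)\beta_0)$. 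I would choose $\alpha,\beta_0$ so that $\alpha/2+(1-\alpha)\beta_0$ equals the preimage $p^*$ of $1/2$ under $A_A^{\circ k}$ lying in the transition region near $p_0$. This forces $\psi_{A^\star}(1/2)=1/2$, and the derivative $\psi_{A^\star}'(1/2)=\alpha(A_A^{\circ k})'(p^*)$ grows essentially as $\alpha(A_A'(p_0))^k$, so it exceeds $1$ once $k$ is taken large. If $1/2$ is not in the range of $A_A^{\circ k}$, composing with $\neg_\kappa$ (a bijection on $[0,1]$ fixing $1/2$) first shifts the range appropriately.

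\textbf{Stage 2 (cleanup).} Let $\bar A=(A^\star)^{(k')}\in\conv\mathcal{C}$ with $m=(d^\star)^{k'}$ inputs. Since $1/2$ is a repelling fixed point of $\psi_{A^\star}$, for $\beta\in(0,1/2]$ small enough the orbit of every $p\in I_+$ (resp. $I_-$) under $\psi_{A^\star}$ drifts monotonically away from $1/2$ toward the upper (resp. lower) attractor of $A_{A^\star}$; choosing $k'$ large gives $\psi_{\bar A}((I_+)^m)\subseteq[1-\eta,1]$ and $\psi_{\bar A}((I_-)^m)\subseteq[0,\eta]$ for any prescribed $\eta>0$. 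Applied independently to the two input groups of $\mathbf{N}$, this produces bits $u,v$ each $\eta$-close to $0$ or $1$ depending on whether the corresponding group lies in $I_-$ or $I_+$.

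\textbf{Stage 3 (combine and de-amplify).} It remains to merge $u,v$ into an output landing in $I_+$ when at least one of $u,v$ is (approximately) $0$ and in $I_-$ when both are $1$. Purely affine mixtures of $\{u,v,\neg_\kappa u,\neg_\kappa v,\mathbf{0},\mathbf{1}\}$ give only affine functions of $(u,v)$, so the three ``high'' outputs of NAND and the one ``low'' output would sit in arithmetic progression---incompatible with separating the cases. The amplifier $A$ supplies the needed bilinear term: because $A_A(p)=\psi_A(p,\ldots,p)$ is nonlinear while $\psi_A$ is multilinear, some coefficient of a pairwise monomial $p_ip_j$ in $\psi_A$ must be nonzero. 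Substituting $u$ into input $i$, $v$ into input $j$, and constants into the remaining inputs of $A$ yields a circuit with amplification function $a+bu+cv+d\,uv$ for some $d\neq 0$. Mixing this with $u$, $v$, $\neg_\kappa u$, $\neg_\kappa v$ and constants, with weights tuned to cancel the linear-in-$u$ and linear-in-$v$ parts and to set the constant term correctly, produces a circuit $g$ with amplification function of the form $1/2+\delta(1-2uv)$ for an adjustable $\delta>0$. Finally, a convex combination with a $\mathrm{Ber}(1/2)$ source rescales $\delta$ into $[\beta/2,\beta]$; choosing $\eta$ small enough prevents the residual deviation of $u,v$ from $\{0,1\}$ from pushing the output outside $I_\pm$, yielding the four inclusions of the claim.

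The main obstacle is Stage 1: producing an amplifier around $1/2$ from an amplifier around an arbitrary $p_0$. Naive iteration of $A$ kills the derivative at $1/2$ whenever $1/2$ lies far from $p_0$ and the attracting fixed points of $A_A$ have derivative less than $1$ (the typical case). The decisive observation is that constants are available in the formula model, which lets us slide the input variable into the transition region of $A^{(k)}$---where the derivative is enormous---and then rescale by $\alpha$ to attain any desired slope greater than $1$.
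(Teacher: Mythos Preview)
There is a real gap in Stage~1. Your proposed fix when $1/2\notin A_A^{\circ k}([0,1])$ does not work: $A_{\neg_\kappa}$ is the affine map $p\mapsto 1-\kappa-(1-2\kappa)p$, which fixes $1/2$ and reflects about it with contraction factor $1-2\kappa$, so if the range of $A_A^{\circ k}$ lies entirely above $1/2$, composing with $\neg_\kappa$ (on either side) merely reflects it to lie entirely below $1/2$. Nothing in the hypotheses prevents the given amplifier from having both attracting fixed points on the same side of $1/2$ (e.g.\ amplifying away from $p_0=0.8$ toward $0.7$ and $0.9$, with $A_A([0,1])\subseteq[0.65,0.95]$), in which case no $p^*$ near $p_0$ has $A_A^{\circ k}(p^*)=1/2$. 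The paper's remedy (Lemma~\ref{lem:existsSelfDualAmplfiier}) is a genuine symmetrization: take the mixture $\tfrac12\big[\neg_\kappa\circ(\cdot)\circ\neg_\kappa+\mathcal{N}_{1-2\kappa}\circ(\cdot)\circ\mathcal{N}_{1-2\kappa}\big]$, which forces $A_c(1/2)=1/2$ and self-duality regardless of where the original fixed points sit. A related issue propagates to Stage~2: even granting an amplifier $A^\star$ about $1/2$, its attractors need not be near $0$ and $1$, so your Stage~3 target $1/2+\delta(1-2uv)$ will not place the four cases in $I_\pm$ as claimed. (Minor: the assertion that some \emph{pairwise} monomial $p_ip_j$ has nonzero coefficient in $\psi_A$ is false in general---take $\psi=p_1p_2p_3$---though your actual extraction, substituting constants into all but two inputs, does recover a nonzero $uv$ term whenever $A_A$ is nonlinear.)

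More interestingly, your architecture is the reverse of the paper's in a way worth noting. You amplify each half first and then combine the near-Boolean $u,v$ with a \emph{bilinear} gadget, and you correctly observe that once $u,v$ are clean an affine combiner cannot realize NAND. The paper instead applies a purely affine combiner $D_{r,s}=r\mathbf{0}+s\mathbf{1}+\tfrac{1-r-s}{2}(\mathbf{x}_1+\mathbf{x}_2)$ directly to the raw biases in $I_\pm$, and only \emph{then} amplifies. On inputs with signed biases $\kappa_1,\kappa_2\in\pm[\beta/2,\beta]$ one has $\kappa_1+\kappa_2\ge\beta$ iff both lie in $I_+$ and $\kappa_1+\kappa_2\le\beta/2$ otherwise, so suitable $r,s$ already separate the $(+,+)$ case from the other three by sign relative to $1/2$; the self-dual amplifier then supplies all the nonlinearity. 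This sidesteps both the bilinear extraction and the attractor-location issue, yielding $\mathbf{N}=\big[\mathcal{N}_{\ell_1}\circ c^{\circ k}\circ\mathcal{N}_{\ell_0}\circ\neg_\kappa\circ D_{r,s}\big]\circ\mathcal{R}_{m}$.
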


We prove Claim~\ref{claim:NMapExists} in Appendix~\ref{appendix:NMapExists}, and for the rest of the current proof we will use this map $\mathbf{N}$ with the associated $\beta$ and $2m$ inputs, and take $I_+$ and $I_-$ as in the statement of the claim.

Let $f_T$ denote the Boolean function computed by $\NAND$ tree $T$. Using $\mathbf{N}$, we recursively define a transformation $\mathcal{J}$ that takes $T$ to a stochastic map $\mathcal{J}(T) \in \conv\mathcal{C}$.
For $\ell \in [0,1]$, let $\mathcal{N}_\ell$ denote the map
\begin{equation}\label{eq:noisegate}
 \mathcal{N}_\ell := \ell \mathbf{x} + \frac{1-\ell}{2}(\mathbf{0} + \mathbf{1}). 
\end{equation}
For a depth 0 tree $T$ we define $\mathcal{J}(T) = \mathcal{N}_{2\beta}$. (Notice that a depth 0 $\NAND$ tree has no $\NAND$ gates at all, and thus is either a constant $\mathbf{0}$ or $\mathbf{1}$, or is the literal Boolean variable $x$.)
Then we define $\mathcal{J}(T)$ recursively according to the process shown in 
Figure~\ref{fig:JellyfishSketch}.  That is, given a depth-$n$ $\NAND$ tree $T$, we write $T = \NAND( A, B )$, where $A, B$ are $\NAND$ trees of depth at most $n-1$.  Then we recursively define
\begin{equation}\label{eq:jellyfish}
 \mathcal{J}(T) = \mathbf{N}( \mathcal{J}(A), \mathcal{J}(A), \ldots, \mathcal{J}(A), \mathcal{J}(B), \ldots, \mathcal{J}(B) ) 
\end{equation}
where there are $m$ copies each of $\mathcal{J}(A)$ and $\mathcal{J}(B)$.

\begin{remark}[Depth complexity]\label{comment:depthFactorInrease}
Note that given $d(\mathbf{N})$, the maximum circuit depth of the map $\mathbf{N}$, it is clear that
$$d(\mathcal{J}(T)) = d(\mathbf{N}) d(T).$$
Therefore, for fixed $\kappa$, there is only a constant factor increase in depth complexity for the fault-tolerant circuit over the original $\NAND$ tree.  This establishes the second part of the theorem.
\end{remark}

\begin{figure}

\centering
\begin{tikzpicture}
\node[anchor=west](J) at (-1,0) {\Large $\mathcal{J} $};
\node[draw, semicircle](nand) at (2,1) {$\NAND$};
\node[draw,regular polygon, regular polygon sides=3](A) at (1,-1) {A};
\node[draw,regular polygon, regular polygon sides=3](B) at (3,-1) {B};
\draw (A.north) -- (nand);
\draw (B.north) -- (nand);
\draw[thick] (.2,2) to[out=180+70,in=180-70] (.2,-2);
\draw[thick] (3.8,2) to[out=-70,in=70] (3.8,-2);
\node at (5,0) {\Huge $=$};
\begin{scope}[xshift=9cm]
\node[draw,semicircle](N) at (0,1) {\Large $\mathbf{N}$ };
\node[draw](J1) at (-3,-1) {$\mathcal{J}(A)$};
\node[draw](J2) at (-1,-1) {$\mathcal{J}(A)$};
\node[draw](J3) at (1,-1) {$\mathcal{J}(B)$};
\node[draw](J4) at (3,-1) {$\mathcal{J}(B)$};
\node at (-2, -1) {$\cdots$};
\node at (2, -1) {$\cdots$};
\draw (N) to (J1);
\draw (N) to (J2);
\draw (N) to (J3);
\draw (N) to (J4);
\foreach \i in {1,1.2,1.4,1.6,1.8,2}{
\draw (N) to (-\i,-.5);
\draw (N) to (\i,-.5);
}

\end{scope}
\end{tikzpicture}
\caption{The $\mathcal{J}$ function is defined recursively, by replacing the top $\NAND$ gate in a $\NAND$ tree $T$ by a map $\mathbf{N}$. Since $\mathbf{N}$ takes $2m$ inputs, we must duplicate the input subtrees $m$ times and apply $\mathcal{J}$ to each copy.}
\label{fig:JellyfishSketch}
\end{figure}
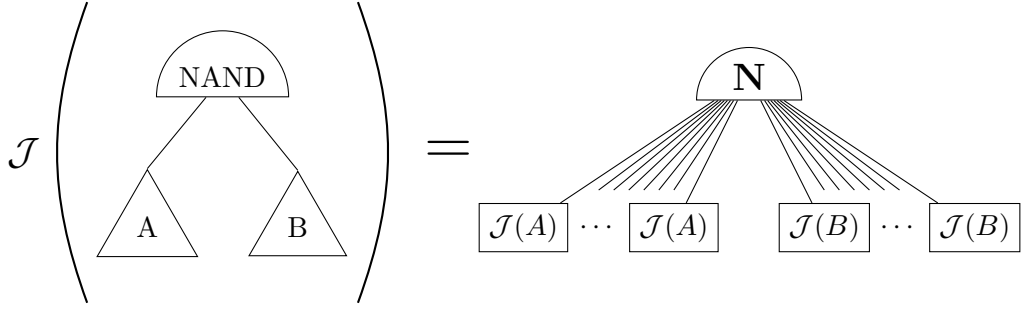

Now we prove the theorem by induction, with the inductive hypothesis that for all depth-$n$ $\NAND$ trees $T$,
\begin{equation}\label{eq:NANDTreeInductiveHypothesis}
\forall \mathbf{X} \in \F_2^{\text{number of inputs to $f_T$}} \quad  \P_{\mathcal{J}(T)}[\mathcal{J}(T)(\mathbf{X}) = f_T(\mathbf{X})] \in I_+.
\end{equation}

First we prove the base case, for depth $n=0$.
Using the base case of our recursive construction, we see that for any depth-0 $T$ that $\mathcal{J}(T) = \mathcal{N}_{2\beta}$.
It is not hard to see that
$$\psi_{\mathcal{N}_{2\beta}}(0) \in I_-$$
and
$$\psi_{\mathcal{N}_{2\beta}}(1) \in I_+$$
which establishes \eqref{eq:NANDTreeInductiveHypothesis} for $n=0$.

For the inductive step, assume that \eqref{eq:NANDTreeInductiveHypothesis} holds for all $\NAND$ trees of depth at most $n$.
Let $T$ be a depth-$(n+1)$ $\NAND$ tree, so that $T = \NAND(A,B)$ where $A$ and $B$ are both $\NAND$ trees.
By our definition \eqref{eq:jellyfish} of $\mathcal{J}$, 
\begin{align*}
\PR{ \mathcal{J}(T)(\mathbf{X}) = f_T(\mathbf{X}) }
&= \PR{ \mathbf{N}( \mathcal{J}(A), \ldots, \mathcal{J}(A), \mathcal{J}(B), \ldots, \mathcal{J}(B) ) = f_T(\mathbf{X})}.
\end{align*}

Suppose that 
\[f_T(\mathbf{X}) = \NAND( A(\mathbf{X}), B(\mathbf{X}) ) = 0,\] 
which means that $(A(\mathbf{X}), B(\mathbf{X})) = (1,1)$.
Then by the inductive hypothesis \eqref{eq:NANDTreeInductiveHypothesis},
\[ \PR{ \mathcal{J}(A)(\mathbf{X}) = 1 } \in I_+ \]
and the same for $B$, and so
\begin{align*}
1 - \PR{ \mathcal{J}(T)(\mathbf{X}) = f_T(\mathbf{X}) }
&= 1 - \PR{ \mathbf{N}( \mathcal{J}(A), \ldots, \mathcal{J}(A), \mathcal{J}(B), \ldots, \mathcal{J}(B) )(\mathbf{X}) = f_T(\mathbf{X})}  \\
&= 1 - \PR{ \mathbf{N}( \mathcal{J}(A), \ldots, \mathcal{J}(A), \mathcal{J}(B), \ldots, \mathcal{J}(B) )(\mathbf{X}) = 0 } \\
&= \PR{ \mathbf{N}( \mathcal{J}(A), \ldots, \mathcal{J}(A), \mathcal{J}(B), \ldots, \mathcal{J}(B) )(\mathbf{X}) = 1 } \\
&\in \psi_{\mathbf{N}}( (I_+)^m \times (I_+)^m ) \\
&\subseteq I_-  
\end{align*}
by the definition of $\mathbf{N}$ in Claim~\ref{claim:NMapExists}.
Aobve, we are using the fact that $T$ is a tree to say that each copy $\mathcal{J}(A)(\mathbf{X})$ and $\mathbf{J}(B)(\mathbf{X})$ are independent.  Notice that the only randomness here is over the noisy gates, and so it does not matter that the (deterministic) inputs $\mathbf{X}$ are the same for each copy.
This implies that in the case that $f_T(\mathbf{X}) = 0$, 
\[ \PR{ \mathcal{J}(T)(\mathbf{X}) = f_T(\mathbf{X}) } \in I_+. \]

On the other hand, suppose that
\[ f_T(\mathbf{X}) = 
 \NAND( A(\mathbf{X}), B(\mathbf{X}) ) = 1,\] 
which means that $(A(\mathbf{X}), B(\mathbf{X})) \in \{(0,0), (0,1), (1,0)\}$.
If it is, for example, $(0,1)$, then we have
\[ \PR{ \mathcal{J}(A)(\mathbf{X}) = 1 } \in I_- \qquad \text{and} \qquad \PR{ \mathcal{J}(B)(\mathbf{X}) = 1 } \in I_+,\]
so
\[ \PR{ \mathbf{N}( \mathcal{J}(A), \ldots, \mathcal{J}(A), \mathcal{J}(B), \ldots, \mathcal{J}(B) ) } 
\in \psi_{\mathbf{N}}( (I_-)^m \times (I_+)^m ) 
\subseteq I_+  \]
and a similar statement holds for $(1,0)$ or $(0,0)$, by the definition of $\mathbf{N}$.  So in this case as well we have
\[ \PR{ \mathcal{J}(A)(\mathbf{X}) = f_T(\mathbf{X}) } \in I_+. \]
This establishes the inductive hypothesis for $n+1$.

By induction, we conclude that \eqref{eq:NANDTreeInductiveHypothesis} holds for all $\NAND$ trees $T$ of any depth. 
But this immediately implies that $\conv \mathcal{C}$ supports reliable computation with advantage $\beta$. Moreover, by Remark~\ref{comment:depthFactorInrease}, it does so with constant factor overhead in depth complexity (for fixed $\kappa$).
This proves the theorem.
\end{proof}

\section{Proof of Theorem~\ref{thm:noiseThresholdsAreStrict}: The set that does not support reliable computation is closed}\label{section:proofNoiseThresholdsStrict}
In this section we prove Theorem~\ref{thm:noiseThresholdsAreStrict}, which we restate below.

\begin{thm*}[Theorem \ref{thm:noiseThresholdsAreStrict}, restated]
Let $\mathcal{C}_\epsilon$ denote a circuit model closed under composition on a gate set $\mathcal{G}$ which includes a noisy gate $g_\eps$.
Let $I \subseteq [0, 1]$ denote the set of $\epsilon$ for which $\conv \mathcal{C}_\epsilon$ does not support reliable computation (varying the noise on $g_\eps$ and keeping all other gates in $\mathcal{G}$ fixed). Then $I$ is closed.
\end{thm*}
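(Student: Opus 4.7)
The plan is to prove the contrapositive: show that the complement
$I^c = \{\epsilon \in [0,1] : \conv \mathcal{C}_\epsilon \text{ supports reliable computation}\}$
is open in $[0,1]$. The main tool is Theorem~\ref{thm:AmplifierEquivToFTCC_andOverhead}, which says $\conv \mathcal{C}_\epsilon$ supports reliable computation iff it contains both an amplifier and a $\neg_\kappa$ gate for some $\kappa < 1/2$. The high-level strategy is that these are finite objects whose defining properties depend polynomially on $\epsilon$, so each of the two defining conditions persists under a sufficiently small perturbation.

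Fix $\epsilon_0 \in I^c$. By Theorem~\ref{thm:AmplifierEquivToFTCC_andOverhead} there is an amplifier $A^{(\epsilon_0)} \in \conv \mathcal{C}_{\epsilon_0}$ and a $\neg^{(\epsilon_0)}_{\kappa_0}$ gate in $\conv \mathcal{C}_{\epsilon_0}$ with $\kappa_0 < 1/2$. Each is a finite probabilistic mixture of finite circuits, involving only finitely many occurrences of $g_\epsilon$ (together with other gates from $\mathcal{G}$ whose noise profiles are fixed). Replacing $\epsilon_0$ by a real variable $\epsilon$ at every such $g_\epsilon$ slot produces a family of circuits $A^{(\epsilon)}, \neg^{(\epsilon)}_{\kappa(\epsilon)} \in \conv \mathcal{C}_\epsilon$. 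By linearity of expectation and the polynomial form of \eqref{eq:AmplificationFunctionDefinition}, the amplification function $A_{A^{(\epsilon)}}(p)$ is jointly polynomial in $(p,\epsilon)$, and the effective error probability $\kappa(\epsilon)$ of the NOT-like circuit is polynomial in $\epsilon$.

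Since $A^{(\epsilon_0)}$ is an amplifier, there is $p_0 \in (0,1)$ with $A_{A^{(\epsilon_0)}}(p_0) = p_0$ and $A'_{A^{(\epsilon_0)}}(p_0) > 1$. Hence
\[
\left.\partial_p\!\left[A_{A^{(\epsilon)}}(p) - p\right]\right|_{(p,\epsilon)=(p_0,\epsilon_0)} = A'_{A^{(\epsilon_0)}}(p_0) - 1 > 0,
\]
so the implicit function theorem produces a $C^1$ map $\epsilon \mapsto p_0(\epsilon)$, defined on a neighborhood $U$ of $\epsilon_0$ in $[0,1]$, with $p_0(\epsilon_0) = p_0$ and $A_{A^{(\epsilon)}}(p_0(\epsilon)) = p_0(\epsilon)$. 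Joint continuity of $\partial_p A_{A^{(\epsilon)}}(p)$ then guarantees $A'_{A^{(\epsilon)}}(p_0(\epsilon)) > 1$ on a (possibly smaller) neighborhood, and continuity of $p_0(\epsilon)$ keeps $p_0(\epsilon) \in (0,1)$. Thus $A^{(\epsilon)}$ remains an amplifier in $\conv \mathcal{C}_\epsilon$. By continuity of $\kappa(\epsilon)$, we similarly have $\kappa(\epsilon) < 1/2$ on a neighborhood of $\epsilon_0$. Applying Theorem~\ref{thm:AmplifierEquivToFTCC_andOverhead} in the converse direction at each such $\epsilon$, we conclude that $\conv \mathcal{C}_\epsilon$ supports reliable computation on an open neighborhood of $\epsilon_0$ in $[0,1]$, so $I^c$ is open and $I$ is closed.

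The main obstacle is the bookkeeping around the convex-combination structure: one must verify that taking finite probabilistic mixtures and composing gates from $\mathcal{G}$ really does yield polynomial dependence of $A_{A^{(\epsilon)}}(p)$ and $\kappa(\epsilon)$ on $\epsilon$ (so that the implicit function theorem applies), and one must note that the transversality condition $A'_{A^{(\epsilon_0)}}(p_0) - 1 \neq 0$ is built directly into Definition~\ref{define:amplifier} through the strict inequality. Once these are in place, the argument is essentially a one-line invocation of continuity.
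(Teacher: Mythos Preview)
Your approach is essentially the paper's own: show $I^c$ is open by invoking Theorem~\ref{thm:AmplifierEquivToFTCC_andOverhead} at $\epsilon_0$, then argue that the finite amplifier and the noisy $\neg$ gate survive a small perturbation in $\epsilon$ because everything depends polynomially on $(p,\epsilon)$. The amplifier part is fine; using the implicit function theorem in place of the paper's hand-rolled Lemma~\ref{lem:nudgeCrossing} is a cosmetic difference.

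There is, however, a small but genuine gap in your treatment of the $\neg_\kappa$ gate. You write that ``the effective error probability $\kappa(\epsilon)$ of the NOT-like circuit is polynomial in $\epsilon$'' and then invoke continuity of $\kappa(\epsilon)$. But after perturbing $\epsilon$, the circuit $N^{(\epsilon)}$ need not be a $\neg_\kappa$ gate for \emph{any} single $\kappa$: the two error probabilities $1-\mathbb{P}[N^{(\epsilon)}(0)=1]$ and $\mathbb{P}[N^{(\epsilon)}(1)=1]$ are each polynomial in $\epsilon$, but their equality at $\epsilon_0$ is generically destroyed by the perturbation. Since Theorem~\ref{thm:AmplifierEquivToFTCC_andOverhead} requires a literal $\neg_\kappa$ gate (symmetric error), you cannot simply read off a $\kappa(\epsilon)$. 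The paper handles this explicitly: it characterises $\neg_\kappa$ via $A_N(1/2)=1/2$ and $A_N'(1/2)<0$, uses Lemma~\ref{lem:nudgeCrossing} to find a nearby $p'$ at which the perturbed circuit satisfies $A_{N^{(\epsilon)}}(p')=1/2$ with negative slope, and then \emph{rebalances} by taking a convex combination of $N^{(\epsilon)}$ with a constant function so that the fixed point moves back to $1/2$. This last step is where membership in $\conv\mathcal{C}_\epsilon$ (rather than $\mathcal{C}_\epsilon$) is actually used, and it is the piece your argument is missing.
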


The basic idea is to make use of Theorem~\ref{thm:AmplifierEquivToFTCC_andOverhead}.
In particular, if $\cC_\eps$ supports reliable computation, then $\conv \cC_\eps$ contains an amplifier and a noisy $\neg$ gate.  We will show that if $\eps$ is perturbed slightly to $\eps'$, then the amplifier remains an amplifier and the noisy $\neg$ gate remains a noisy $\neg$ gate.  We will conclude that $\cC_{\eps'}$ also supports reliable computation.

In order to make this intuition precise, we will need a few basic analytical lemmas, which we prove in Section~\ref{sec:analysis}.  Then we prove Theorem~\ref{thm:noiseThresholdsAreStrict} in Section~\ref{sec:thm2pf}.

\subsection{Analysis Lemmas}\label{sec:analysis}
\begin{lem}\label{lem:infimumIsContinuous}
Let $f: \R^2\rightarrow \R$ be Lipschitz continuous. Fix $a, b \in \R$. Then the function $g$ defined by
$$g(y) = \min_{x\in [a, b]}{f(x, y)}$$
is continuous.
\end{lem}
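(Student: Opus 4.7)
The plan is to show directly that $g$ inherits Lipschitz continuity from $f$, which is strictly stronger than what the statement asks for but requires no extra work. Let $L$ denote a Lipschitz constant for $f$ with respect to the $\ell^1$ (or Euclidean) metric on $\R^2$. First I would note that $g$ is well-defined: for each fixed $y$, the map $x \mapsto f(x,y)$ is continuous on the compact interval $[a,b]$, hence attains its minimum there, so the ``$\min$'' is legitimate.

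Next, fix $y_1, y_2 \in \R$ and let $x_1, x_2 \in [a,b]$ be minimizers with $g(y_1) = f(x_1, y_1)$ and $g(y_2) = f(x_2, y_2)$. The key step is the standard envelope-style comparison: since $x_2$ is feasible (though not necessarily optimal) for the minimization defining $g(y_1)$,
\begin{equation*}
g(y_1) \;\leq\; f(x_2, y_1) \;\leq\; f(x_2, y_2) + L\,|y_1 - y_2| \;=\; g(y_2) + L\,|y_1 - y_2|,
\end{equation*}
where the middle inequality uses Lipschitz continuity of $f$ applied to the two points $(x_2, y_1)$ and $(x_2, y_2)$. Swapping the roles of $y_1$ and $y_2$ gives the symmetric bound, and together they yield $|g(y_1) - g(y_2)| \leq L\,|y_1 - y_2|$. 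Thus $g$ is $L$-Lipschitz on $\R$, hence continuous.

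There is essentially no obstacle here; the only point worth flagging is that one must argue the minimum is attained (rather than merely an infimum) before comparing two minimizers, which is why compactness of $[a,b]$ together with continuity of $f$ is invoked at the outset. The same argument would go through if we only assumed $f$ continuous in $x$ and Lipschitz in $y$ uniformly in $x \in [a,b]$, but since the hypothesis gives joint Lipschitz continuity the proof is as clean as possible.
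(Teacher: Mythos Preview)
Your proof is correct and follows essentially the same approach as the paper: both arguments show that $g$ is Lipschitz with the same constant as $f$ by using the pointwise Lipschitz bound in the $y$-variable and then passing to the minimum over $x\in[a,b]$. The paper phrases this as taking $\min_x$ of the two-sided inequality $f(x,y)-K|\Delta y|\le f(x,y+\Delta y)\le f(x,y)+K|\Delta y|$, while you spell out the envelope comparison via explicit minimizers; these are the same idea.
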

\begin{proof}
Since $f$ is Lipschitz continuous, there exists a constant $K > 0$ such that $\forall \Delta y, y \in \R$ and $x\in[a,b]$,
$$-  K|\Delta y| + f(x, y) \leq f(x, y + \Delta y)  \leq K|\Delta y| + f(x, y)$$
Therefore,
$$\left| \min_{x \in[a,b]}{f(x, y + \Delta y)}- \min_{x \in[a,b]}{f(x, y)} \right| \leq  K|\Delta y|$$
and
$$\left| g(y + \Delta y) - g(y) \right| \leq  K|\Delta y|$$
which implies the limit $\lim_{\Delta y\rightarrow 0}g(y + \Delta y) = g(y)$.
\end{proof}

\begin{lem}\label{lem:nudgeCrossing}
Let $f \in \R[x, y]$ be a bivariate polynomial such that for some $x_0 \in (a, b)$ and $y_0 \in \R$, $f(x_0, y_0) = 0$, and $(\partial_x f)(x_0, y_0) > 0$.
Then $\exists \delta > 0$ such that
for all $y'$ with $|y' - y_0| < \delta$, there is some $x' \in (a,b)$ so that
\[ f(x', y') = 0 \text{ and } (\partial_x f)(x', y') > 0. \]
\end{lem}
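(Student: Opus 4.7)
The plan is to prove Lemma~\ref{lem:nudgeCrossing} by a straightforward continuity argument that amounts to a one-sided version of the Implicit Function Theorem. The key intuition is that at $(x_0, y_0)$, $f$ is strictly increasing in $x$ locally in $x$; this forces a sign change of $f(\cdot, y_0)$ across $x_0$, and that sign change is preserved as $y$ is perturbed, yielding a root $x'$ by the Intermediate Value Theorem.

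First, I would use the fact that $\partial_x f$ is a polynomial, hence continuous on $\R^2$. Since $(\partial_x f)(x_0, y_0) > 0$, there exist $\eta > 0$ with $[x_0 - \eta, x_0 + \eta] \subseteq (a,b)$ and $\delta_1 > 0$ such that
\[
(\partial_x f)(x, y) > 0 \quad \text{for all } (x,y) \text{ with } |x - x_0| \leq \eta, \ |y - y_0| \leq \delta_1.
\]
Next, because $f(\cdot, y_0)$ has positive derivative on $[x_0 - \eta, x_0 + \eta]$ and satisfies $f(x_0, y_0) = 0$, it is strictly increasing on this interval, so $f(x_0 - \eta, y_0) < 0 < f(x_0 + \eta, y_0)$.

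Then I would invoke continuity of $f$ at $(x_0 \pm \eta, y_0)$ to obtain $\delta_2 > 0$ with $f(x_0 - \eta, y') < 0$ and $f(x_0 + \eta, y') > 0$ whenever $|y' - y_0| < \delta_2$. Setting $\delta = \min(\delta_1, \delta_2)$, for any such $y'$ the Intermediate Value Theorem (applied to the continuous function $x \mapsto f(x, y')$) produces some $x' \in (x_0 - \eta, x_0 + \eta) \subseteq (a,b)$ with $f(x', y') = 0$. Since $|x' - x_0| < \eta$ and $|y' - y_0| < \delta \leq \delta_1$, the choice of $\eta$ and $\delta_1$ above gives $(\partial_x f)(x', y') > 0$, as required.

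The argument presents no real obstacle; the main subtlety is simply to pick the neighborhood in $x$ \emph{first}, around $x_0$, so that $\partial_x f$ is uniformly positive on a closed rectangle, and \emph{then} to shrink the neighborhood in $y$ enough to preserve both the sign of $f$ at the endpoints $x_0 \pm \eta$ and the positivity of $\partial_x f$. Note that polynomiality of $f$ is only used to guarantee continuity of $f$ and $\partial_x f$, so the lemma in fact holds for any $C^1$ function; this mild overkill is fine for the application in Section~\ref{sec:thm2pf}.
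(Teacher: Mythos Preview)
Your proof is correct and follows essentially the same approach as the paper: pick an $x$-interval around $x_0$ on which $\partial_x f$ is positive and $f$ changes sign, then use continuity to preserve both properties for nearby $y$ and apply the Intermediate Value Theorem. The only minor difference is that the paper first fixes the interval at $y=y_0$ and then invokes Lemma~\ref{lem:infimumIsContinuous} to propagate positivity of $\min_x \partial_x f$ to nearby $y$, whereas you use joint continuity of $\partial_x f$ to get positivity on a rectangle directly; your route is slightly cleaner and avoids the auxiliary lemma.
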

\begin{proof}
We will make use of the fact that $f$ and all of its derivatives are Lipschitz continuous.
Since $(\partial_x f)(x_0, y_0) > 0$, by continuity of $\partial_x f$ there exist $x_1 \in (a, x_0)$ and $x_2 \in (x_0, b)$ such that
$$f(x_1, y_0) < 0 < f(x_2, y_0)$$
and
$$\min_{x\in [x_1, x_2]} (\partial_x f)(x, y_0) > 0.$$
Then by continuity of $f$ there exists $\delta_1 > 0$ such that
for all $y'$ with $|y' - y_0| < \delta_1$,  $f(x_1, y) < 0 < f(x_2, y).$
The intermediate value theorem implies $\exists x' \in (x_1, x_2)$ such that $f(x', y') = 0$.
By Lemma~\ref{lem:infimumIsContinuous}, there exists $\delta_2 > 0$ such that
for all $y'$ with $|y' - y_0| < \delta_2$,  $\min_{x\in [x_1, x_2]} (\partial_x f)(x, y) > 0.$
Then setting $\delta = \min(\delta_1, \delta_2)$ we are done.
\end{proof}

\subsection{Proof of Theorem~\ref{thm:noiseThresholdsAreStrict}}\label{sec:thm2pf}

Before proving Theorem~\ref{thm:noiseThresholdsAreStrict}, we need one more observation.
\begin{observation}\label{prop:polynomial}
Let $\mathcal{C}_\eps$ and $g_\eps \in \mathcal{G}$ be as in the statement of Theorem~\ref{thm:noiseThresholdsAreStrict}.
Let $C = \sum_{j=1}^N p_j F_j$ be a mixture of circuits $C_j \in \cC_\eps$, so $C \in \conv \cC_\eps$.
Then viewing the amplification function $A_C(p)$ as a function of $\epsilon$ as well as $p$, $A_C$ is a polynomial in $\epsilon$ and $p$.
\end{observation}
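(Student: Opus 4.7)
The plan is to prove Observation~\ref{prop:polynomial} by induction on the structure of each formula $F_j$, establishing that for any fixed input $\mathbf{x} \in \{0,1\}^n$, the probability $\mathbb{P}[F_j(\mathbf{x}) = 1]$ is a polynomial in $\eps$; then combining with the expansion in $p$ from Definition~\ref{define:amplificationFunction} yields the claim, and convex combinations preserve polynomiality.

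First I would fix a formula $F_j \in \cC_\eps$ and a deterministic input $\mathbf{x}$. I claim that for every wire $w$ in $F_j$, the probability $P_w := \mathbb{P}[\text{value on wire } w = 1]$ is a polynomial in $\eps$. The base case is trivial: on input leaves, $P_w \in \{0, 1\}$, which is constant in $\eps$. For the inductive step, consider a gate $h \in \mathcal{G}$ with input wires $u, v$ and output wire $w$. Because the subformulas feeding $u$ and $v$ are disjoint (formulas are trees) and the gate noise is independent, the pair of input values is distributed as independent Bernoullis with parameters $P_u, P_v$, which by the inductive hypothesis are polynomials in $\eps$. If $h$ is a noise-free gate, then $P_w = \psi_h(P_u, P_v)$ is a fixed polynomial in $P_u, P_v$ (degree at most $2$, as in the formulas for $d, e$ derived in Section~\ref{sec:proofOneSixthFormulas}), hence a polynomial in $\eps$. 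If $h = g_\eps$, then $P_w = (1-\eps)\psi_g(P_u, P_v) + \eps \bigl(1 - \psi_g(P_u, P_v)\bigr)$, which is affine in $\eps$ and polynomial in $P_u, P_v$, hence again a polynomial in $\eps$. Applying this at the root wire gives $\mathbb{P}[F_j(\mathbf{x}) = 1] \in \mathbb{R}[\eps]$.

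Next I would assemble this into the amplification function. By Definition~\ref{define:amplificationFunction},
\begin{equation*}
A_{F_j}(p) = \sum_{\mathbf{x}\in\{0,1\}^n} p^{|\mathbf{x}|}(1-p)^{n-|\mathbf{x}|}\,\mathbb{P}[F_j(\mathbf{x}) = 1],
\end{equation*}
and since each of the finitely many coefficients $\mathbb{P}[F_j(\mathbf{x}) = 1]$ lies in $\mathbb{R}[\eps]$, the right-hand side is a polynomial in $p$ and $\eps$. Finally, $A_C(p) = \sum_{j=1}^N p_j A_{F_j}(p)$ is a finite $\mathbb{R}$-linear combination of such polynomials and is therefore itself a polynomial in $p$ and $\eps$.

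There is really no serious obstacle here; the only thing to be careful about is the independence argument that lets the recursion go through cleanly, which is precisely where the formula (tree) structure is used. This is exactly the same independence exploited in the derivation of the $d, e, \delta_d, \delta_e$ formulas in Section~\ref{sec:proofOneSixthFormulas}, so no new machinery is needed.
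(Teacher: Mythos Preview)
Your proof is correct and follows essentially the same outline as the paper: show that $\mathbb{P}[F_j(\mathbf{x})=1]$ is a polynomial in $\eps$ for each fixed $\mathbf{x}$, then plug into the expansion $A_{F_j}(p)=\sum_{\mathbf{x}} p^{|\mathbf{x}|}(1-p)^{n-|\mathbf{x}|}\mathbb{P}[F_j(\mathbf{x})=1]$ and take the convex combination.

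The one place you diverge is in how you establish that $\mathbb{P}[F_j(\mathbf{x})=1]\in\mathbb{R}[\eps]$. You argue by structural induction up the tree, using independence of the two subformulas at each gate. The paper instead expands directly over error patterns: if $F$ has $d$ occurrences of $g_\eps$, then
\[
\mathbb{P}[F(\mathbf{x})=1]=\sum_{\mathbf{e}\in\{0,1\}^d}\eps^{|\mathbf{e}|}(1-\eps)^{d-|\mathbf{e}|}\,\mathbb{P}[F|_{\mathbf{e}}(\mathbf{x})=1],
\]
where $F|_{\mathbf{e}}$ replaces each $g_\eps$ by $g$ or $\neg g$ according to $\mathbf{e}$; since the coefficients $\mathbb{P}[F|_{\mathbf{e}}(\mathbf{x})=1]$ do not depend on $\eps$, this is visibly a polynomial of degree at most $d$. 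The paper's argument has the small advantage that it never appeals to the tree structure, so it works verbatim for general circuits as well as formulas; your induction genuinely needs the subformulas to be disjoint. For the observation as stated (formulas only), either route is fine.
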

\begin{proof}
First, we write
\begin{align*}
A_C(p) &= \mathbb{P}_{X_i \sim Ber(p)}[ C(X_1, \ldots, X_n) = 1 ] \\
&= \sum_{\mathbf{x} \in \{0,1\}^n} p^{|\mathbf{x}|}(1-p)^{n-|\mathbf{x}|} \mathbb{P}_C[ C(\mathbf{x}) = 1], 
\end{align*}
where $|\mathbf{x}|$ denotes the weight of $\mathbf{x}$,
to see that this is indeed a polynomial in $p$.  
Next, we claim that for any circuit $D \in \cC_\eps$ with at most $d$ $g_\eps$ gates, and for any fixed $\mathbf{x} \in \{0,1\}^n$, $\mathbb{P}[ D(\mathbf{x}) = 1 ]$ is a polynomial in $\eps$ of degree at most $d$.
Indeed, 
\[ \mathbb{P}[ D(\mathbf{x}) = 1 ] = \sum_{\mathbf{e} \in \{0,1\}^d } \epsilon^{|\mathbf{e}|} (1 - \epsilon)^{d - |\mathbf{e}|} \mathbb{P}[ F|_{\mathbf{e}}(\mathbf{x}) = 1 ], \]
where $F|_{\mathbf{e}}$ means the circuit $D$ where every $g_\eps$ gate has been replaced with either $g$ or $\neg g$ according to $\mathbf{e}$.
Returning to $C = \sum_{j=1}^N p_j F_j$, we have
 \[ \mathbb{P}_C[ C(\mathbf{x}) = 1] = \sum_{j=1}^N p_j \mathbb{P}[ C_j(\mathbf{x}) = 1] \]
which is again a polynomial in $\eps$ of degree at most $d$.  
Thus, $A_C(p)$ is a polynomial in $p$ of degree at most $n$ and a polynomial in $\eps$ of degree at most $d$.
\end{proof}

Finally, we prove Theorem~\ref{thm:noiseThresholdsAreStrict} using the lemmas above.
\begin{proof}[Proof of Theorem~\ref{thm:noiseThresholdsAreStrict}]

Let $\mathcal{C}_\eps$ and $g_\eps \in \mathcal{G}$ be as in the statement of the theorem.
Suppose that $\conv \mathcal{C}_{\eps_0}$ supports reliable computation.  We will show that for any $\eps$ sufficiently close to $\eps_0$, $\conv \cC_\eps$ also supports reliable computation.

By Theorem~\ref{thm:AmplifierEquivToFTCC_andOverhead}, $\conv \mathcal{C}_{\epsilon_0}$ supports reliable computation if and only if it contains an amplifier and a $\neg_{\kappa_0}$ gate for some $\kappa_0 < 1/2$.
Let $M_{\eps_0}, N_{\eps_0} \in \conv\mathcal{C}_{\epsilon_0}$ denote these mixtures of formulas, respectively, and define $M_\eps$ and $N_\eps$ in the natural way by replacing the $g_{\eps_0}$ gates in $M_{\eps_0}$ and $N_{\eps_0}$ with $g_\eps$ gates.
We will show that there exists $\delta > 0$ such that for all $\epsilon'$ satisfying $|\epsilon' - \epsilon_0| < \delta$, $M_{\eps'}$ remains an amplifier and $N_{\eps'}$ provides a noisy $\neg$ gate.

Since $M_{\eps_0}, N_{\eps_0}$ are mixtures of circuits, Observation~\ref{prop:polynomial} implies that the amplification functions $A_{M_\eps}(p)$ and $A_{N_\eps}(p)$ are polynomials of finite degree in $p, \epsilon$.
Thus, Lemma~\ref{lem:nudgeCrossing} will apply.

First, we show that $N_{\eps}$ remains a noisy $\neg$ gate for $\eps$ sufficiently close to $\eps_0$.
Notice that a stochastic map $f:\{0,1\} \to \{0,1\}$ represents $\neg_\kappa$ for $\kappa < 1/2$ if and only if
\begin{equation}\label{eq:equiv_to_not}
 A_f(1/2) = 1/2 \qquad \text{and} \qquad \frac{\partial}{\partial p} A_f(1/2) < 0. 
\end{equation}
Indeed, letting $p_0 := \mathbb{P}[f(0) = 1]$ and $p_1 := \mathbb{P}[f(1) = 1]$, a computation shows that
\begin{equation}\label{eq:equiv2}
 A_f(1/2) = \frac{1}{2}(p_0 + p_1) \qquad \text{and} \qquad \frac{\partial}{\partial p} A_f(1/2) = p_1 - p_0.
\end{equation}
What it means to represent $\neg_\kappa$ for $\kappa < 1/2$ is precisely that that $p_0 = 1 - \kappa$, $p_1 = \kappa$ for $\kappa < 1/2$, which given \eqref{eq:equiv2} is equivalent to \eqref{eq:equiv_to_not}.

By Lemma~\ref{lem:nudgeCrossing}, with $f(\eps, p)=-A_{N_\eps}(p) + 1/2$ (thought of as a bivariate polynomial in $p$ and $\eps$), there exists $\delta_1 > 0$ such that
for all $\eps'$ with $|\eps' - \eps_0| < \delta_1$, there is some $p' \in (0,1)$ so that
\[ A_{N_{\eps'}}(p') = 1/2 \text{ and } \frac{\partial}{\partial p} A_{N_{\eps'}}(p') < 0. \]
Choose $c \in \{0,1\}$ and $\lambda \in [0,1]$ so that
\[ p' = \frac{ 1 - (-1)^c \lambda}{2}, \]
and let $\mathbf{1}_c:\{0,1\} \to \{0,1\}$ denote the constant-$c$ function.
Now consider the mixture 
\[ N' = \lambda ( N_{\eps'} \circ \mathbf{1}_c ) + (1 - \lambda) N_{\eps'}, \]
where $\circ$ denotes composition.
That is, with probability $\lambda$, $N'$ behaves like a $N_{\eps'}$ gate with its input fixed to $c$, and with probability $1 - \lambda$, $N'$ behaves like a $N_{\eps'}$ gate.  It's not hard to see that
\[ A_{N'}(p) = A_{N_{\eps'}}(\lambda c + (1 - \lambda)p ), \]
which given our choice of $c$ and $\lambda$ implies that $A_{N'}(1/2) = A_{N_{\eps'}}(p')$, and in particular
\[ A_{N_{\eps'}}(1/2) = 1/2 \text{ and } \frac{\partial}{\partial p} A_{N_{\eps'}}(1/2) < 0. \]
Therefore by the equivalence above, $N'$ is equivalent to $\neg_{\kappa'}$ for some $\kappa' < 1/2$.

Next we show that $M_{\eps}$ remains an amplifier for $\eps$ close to $\eps_0$.
By definition, $M_\eps$ is an amplifier
if and only if there exists $p_0 \in (0, 1)$ such that
$$A_{M_{\eps}}(p_0) - p_0 = 0$$
and
$$ \frac{\partial}{\partial p} A_{M_\eps}(p_0) - 1 > 0. $$
By Lemma~\ref{lem:nudgeCrossing}, with $f=A_M -  p$, there exists $\delta_2 > 0$ such that
for all
$|\epsilon' - \epsilon_0| < \delta_2$, 
there is some $p' \in (0, 1)$ so that
$$ A_{M_{\eps'}}(p') = p' \text{ and } \frac{\partial}{\partial p} A_{M_\eps}(p') > 1.$$
Therefore, 
for any $\eps'$ so that $|\eps' - \eps_0| < \min(\delta_1,\delta_1)$,
$\conv \mathcal{C}_{\eps'}$ 
contains an amplifier and a gate $\neg_{\kappa'}$ for some $\kappa' < 1/2$.  By Theorem~\ref{thm:AmplifierEquivToFTCC_andOverhead} again, $\conv \cC_{\eps'}$
supports reliable computation for all $\epsilon'\in[0,1]$ such that $|\epsilon' -  \epsilon_0| < \min(\delta_1,  \delta_2)$.  This implies that the set of $\eps$ so that $\conv \mathcal{C}_{\eps}$ supports reliable computation is the intersection of an open set and the interval $[0,1]$, and hence the set of $\eps\in[0,1]$ so that $\conv \mathcal{C}_{\eps}$ does not support reliable computation is closed.
\end{proof}

\section{Proof of Theorem~\ref{thm:newNonlocalGame}: A game whose quantum value is the threshold for nontrivial communication complexity}\label{sec:newNLGame}

In this section, we prove Theorem~\ref{thm:newNonlocalGame}, which we restate below.

\begin{thm*}[Theorem \ref{thm:newNonlocalGame}, restated]

\end{thm*}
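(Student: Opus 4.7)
My plan is to set $G_k := q\,\mathrm{Amp}_k + (1-q)\,\mathrm{MS}$, a mixture of an amplification game $\mathrm{Amp}_k$ with the Mermin--Peres magic square game $\mathrm{MS}$ (for which $\omega_C(\mathrm{MS}) = 8/9$ and $\omega_Q(\mathrm{MS}) = 1$) for some fixed $q \in (0,1)$. Since values decompose under mixtures as $\omega_S(q G_1 + (1-q) G_2) = q\,\omega_S(G_1) + (1-q)\,\omega_S(G_2)$ (players are told which subgame they play each round), this construction cleanly separates the amplifier-equivalence machinery (in $\mathrm{Amp}_k$) from the classical-quantum gap (provided by $\mathrm{MS}$).

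I define $\mathrm{Amp}_k$ so that Alice and Bob receive $\mathbf{x},\mathbf{y}\in\{0,1\}^{2k+1}$ drawn from a distribution that makes the coordinates of $\mathbf{z}:=\mathbf{x}\oplus\mathbf{y}$ i.i.d.\ (with $\mathbf{x}$ uniform given $\mathbf{z}$), and their goal is to output $a,b$ with $a\oplus b = \Maj(\mathbf{z})$. Applying the $\mathcal{T}$-transform (Definition~\ref{define:BBLMTU}) turns any $s\in S$ into a circuit $c=\mathcal{T}(s)\in\mathcal{T}(S)$ with winning probability $\mathbb{P}_{\mathbf{z}}[c(\mathbf{z})=\Maj(\mathbf{z})]$. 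Because $\mathrm{Amp}_k$ is an XOR game whose target depends only on $\mathbf{x}\oplus\mathbf{y}$ under a symmetric product input distribution, a Tsirelson/Fourier argument (using that $\sum_S \hat g(S)\langle U_S, V_S\rangle \leq \max_S|\hat g(S)|$ for any vector Fourier amplitudes satisfying $\sum_S\|U_S\|^2=\sum_S\|V_S\|^2=1$) yields $\omega_Q(\mathrm{Amp}_k)=\omega_C(\mathrm{Amp}_k)$. In particular $\omega_Q(\mathrm{Amp}_k)<1$ since $\Maj(\mathbf{z})$ is undetermined by Alice's marginal input, and $\omega_Q(\mathrm{Amp}_k)\to\tfrac{1}{2}$ as $k\to\infty$.

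Property (1) then follows immediately: for any $q\in(0,1)$,
\[
\omega_Q(G_k)-\omega_C(G_k) = (1-q)\bigl(\omega_Q(\mathrm{MS}) - \omega_C(\mathrm{MS})\bigr) = \tfrac{1-q}{9} > 0,
\]
and $\omega_Q(G_k)=q\,\omega_Q(\mathrm{Amp}_k)+(1-q)<1$. For property (2), $\omega_S(G_k)>\omega_Q(G_k)$ combined with $\omega_S(\mathrm{MS})=\omega_Q(\mathrm{MS})=1$ (since $S\supseteq Q$) forces $\omega_S(\mathrm{Amp}_k)>\omega_Q(\mathrm{Amp}_k)$; by the designed input distribution, this yields $c\in\mathcal{T}(S)$ that, after a balancing step (mixing with the constant $\mathbf{0}$ and $\mathbf{1}$ maps, as in the proof of Theorem~\ref{thm:AmplifierEquivToFTCC_andOverhead}), is an amplifier. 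Since $C\subseteq S$, $\mathcal{T}(S)$ trivially contains a noiseless $\neg$ gate, so Theorem~\ref{thm:AmplifierEquivToFTCC_andOverhead} gives that $\mathcal{T}(S)$ supports reliable computation; by Proposition~\ref{prop:thepoint} and Definition~\ref{define:TPCC} this is exactly triviality of $S$'s probabilistic communication complexity. For property (3), reliable computation in $\mathcal{T}(S)$ with advantage $\delta_0>0$ gives, for every $k$, a circuit computing $\Maj^{(2k+1)}$ with $\mathbb{P}[c(\mathbf{z})=\Maj(\mathbf{z})]\geq\tfrac{1}{2}+\delta_0$ on every $\mathbf{z}$; undoing $\mathcal{T}$ via Proposition~\ref{proposition:closedUnderWiringsImpliesBBLMTUisElementwise} produces a strategy for $\mathrm{Amp}_k$ of value $\geq\tfrac{1}{2}+\delta_0$, which combined with a perfect quantum $\mathrm{MS}$ strategy gives $\omega_S(G_k)\geq q(\tfrac{1}{2}+\delta_0)+(1-q)$. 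Since $\omega_Q(G_k)=q\,\omega_Q(\mathrm{Amp}_k)+(1-q)$ with $\omega_Q(\mathrm{Amp}_k)\to\tfrac{1}{2}$, taking $k$ large enough forces $\omega_S(G_k)>\omega_Q(G_k)$.

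The main obstacle is the amplifier-extraction step in property (2): choosing the input distribution for $\mathrm{Amp}_k$ so that \emph{any} strategy with value strictly exceeding $\omega_Q(\mathrm{Amp}_k)$ yields, after balancing, a function satisfying $A(p_0)=p_0$ and $A'(p_0)>1$. Because the advantage $\omega_S(\mathrm{Amp}_k)-\omega_Q(\mathrm{Amp}_k)$ may be arbitrarily small, a single evaluation at one bias $p_0$ will not suffice; instead the distribution should average over a family of biases so that a supra-quantum advantage forces the Fourier correlations of $c$ with $\Maj$ to concentrate on the low-weight characters that dominate $A'(p_0)$. Once this amplifier-extraction lemma is established, the remaining structural arguments follow routinely from the earlier theorems.
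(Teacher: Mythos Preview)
Your high-level architecture is right and in one respect cleaner than the paper's: you take a straight mixture $G_k = q\,\mathrm{Amp}_k + (1-q)\,\mathrm{MS}$, whereas the paper goes through a more elaborate construction $(q\,\mathrm{Amp}_k + (1-q)G_T)\wedge M$ using conjunction with the trivial game and the magic square. Since $\omega_Q(\mathrm{MS})=1$ and values are affine under mixtures, your simpler combination does deliver Property~(1) and transfers Properties~(2)--(3) between $G_k$ and $\mathrm{Amp}_k$ just as well.

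The substantive gap is exactly where you flag it: the amplifier-extraction step in Property~(2). Your proposed input distribution for $\mathrm{Amp}_k$ (i.i.d.\ coordinates of $\mathbf{z}=\mathbf{x}\oplus\mathbf{y}$) does \emph{not} make the game value measure $A_c'(1/2)$, and your suggested remedy (average over a family of biases, force Fourier concentration on low-degree characters) is not worked out and is not how the paper proceeds. The paper's key trick is to abandon product distributions entirely and set
\[
\pi(\mathbf{x},\mathbf{y}) \;\propto\; \bigl|\,n - 2|\mathbf{x}\oplus\mathbf{y}|\,\bigr|, \qquad n=2k+1.
\]
A direct computation of $\dfrac{d}{dp}\,\mathbb{P}_{\mathbf{z}\sim\mathrm{Ber}(p)^n}[\BBLMTU(c)(\mathbf{z})=1]$ at $p=1/2$ shows that these weights are \emph{exactly} the signed coefficients appearing in $A_{\BBLMTU(c)}'(1/2)$; after absorbing the signs into the predicate $a\oplus b=\Maj(\mathbf{z})$, one obtains the exact affine identity
\[
A_{\BBLMTU(c)}'(1/2) \;=\; \frac{n\binom{n-1}{(n-1)/2}}{2^{n-1}}\,\bigl(2\,\Omega_c(\mathrm{Amp}_k)-1\bigr).
\]
Hence $\omega_S(\mathrm{Amp}_k)>\omega_Q(\mathrm{Amp}_k)$ is literally equivalent to $\sup_{c\in\BBLMTU(S)_k} A_c'(1/2)>1$, and one symmetrization step (mix $c$ with its dual $\neg\circ c\circ\neg$) yields an amplifier away from $1/2$. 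This also makes Property~(3) exact rather than asymptotic: if $\BBLMTU(S)$ supports reliable computation, Theorem~\ref{thm:AmplifierEquivToFTCC_andOverhead} plus self-dualization gives some $c\in\BBLMTU(S)_k$ with $A_c'(1/2)>1$, and the same identity immediately gives $\omega_S(\mathrm{Amp}_k)>\omega_Q(\mathrm{Amp}_k)$ for that $k$---no need for your limit argument $\omega_Q(\mathrm{Amp}_k)\to 1/2$. So the missing idea is this tailored non-product distribution; once you have it, the ``arbitrarily small advantage'' worry evaporates because the game value is not an evaluation of $A_c$ at some bias but rather a direct readout of its derivative.
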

First we will construct a game that satisfies properties  (2-3) but not (1). Then we will apply a technical manipulation to produce a game that satisfies (1-3).

\subsection{The Amplification Game}

We begin with a game that we can the \em Amplification Game, \em which 
satisfies properties (2-3) of Theorem~\ref{thm:newNonlocalGame}.

\begin{define}
Fix $k\geq 1$.
Let $n=2k+1$. Let $X=Y= \{0,1\}^{n}$. Let $A=B=\{0,1\}$. Let the decision predicate $D: X\times Y\times A\times B\rightarrow \{0,1\}$ be defined by $D(\mathbf{x},\mathbf{y},a,b) = \mathbf{1}\left[(a\oplus b) = \Maj_n(\mathbf{x} \oplus \mathbf{y})\right]$.
Let $\pi: X\times Y\rightarrow [0,1]$ denote the following probability distribution:
$$\pi(\mathbf{x}, \mathbf{y}) =\frac{|n-2|\mathbf{x}\oplus\mathbf{y}||}{2^{n+1}n{n-1\choose (n-1)/2} }.$$
Then the {\bf Amplification Game} denoted $\Amp_{k}$ is the nonlocal game $(A,B,X,Y,\pi,D)$.
\end{define}

Recall that in Definition~\ref{define:nonlocalGame} the supremum is restricted implicitly to correlations whose input and output alphabets match those of the game. We now introduce some new notation to make this explicit.
For a set $S$ of bipartite nonsignalling correlations, and for an integer $k\geq 1$, we will denote by $S_k$ the set of correlations in $S$ with input alphabets $X=Y=\{0,1\}^k$ and output alphabets $A=B=\{0,1\}$. We will denote by $\BBLMTU(S)_k$ the set of circuits in $\BBLMTU(S)$ which take $k$ input bits and produce one output bit.

Recall that for a set $S$ of bipartite nonsignalling correlations we have defined $\BBLMTU(S)$ as a circuit model:
$$\BBLMTU(S) = \conv\{\text{circuits from gates in }\{\BBLMTU(c) : c \in S \}\}.$$
This is done to ensure the applicability of Theorem~\ref{thm:AmplifierEquivToFTCC_andOverhead} as a black box.
Now in the case that $S\supseteq Q \supseteq C$, and $S$ is closed under wirings, by Proposition~\ref{proposition:closedUnderWiringsImpliesBBLMTUisElementwise}, $\BBLMTU(S)$ may be more simply expressed as
\begin{equation}\label{eq:BBLMTU_S_interchangeable}
\BBLMTU(S) = \{\BBLMTU(c) : c \in S\}.
\end{equation}
Further, it is a trivial consequence of Equation~\eqref{eq:BBLMTU_S_interchangeable} that in our new notation,
\begin{equation}\label{eq:BBLMTU_S_interchangeable_k}
\BBLMTU(S)_k = \{\BBLMTU(c) : c \in S_k\}.
\end{equation}
In the following proofs we will consider sets $S$ which satisfy these properties and hence, for which Equations~\eqref{eq:BBLMTU_S_interchangeable} and~\eqref{eq:BBLMTU_S_interchangeable_k} apply.

\begin{define}\label{define:cValueG}
Let $c=(A,B,X,Y,p)$ be a bipartite nonsignalling correlation, and let $G=(A,B,X,Y,\pi,D)$ be a nonlocal game. Define the $c$-value of $G$ as
$$\Omega_c(G) := \P_{x, y \sim \pi}[D(x, y, c(x, y)) = 1]$$
where the probability is also over the randomness of $c$.
\end{define}
Note that under Definition~\ref{define:cValueG}, given a game $G$ with input alphabets $X=Y=\{0,1\}^k$ and output alphabets $A=B=\{0,1\}$ we may rewrite the $S$-value (from Definition~\ref{define:nonlocalGame}) with our new notation as:
$$\omega_S(G) = \sup_{c\in S_k}{\Omega_c(G)}.$$

\begin{define}[Bernoulli Distribution over Distributed Bits]\label{define:bernoulliDistributed}
Let $\mathbf{x} \oplus \mathbf{y} \sim \text{Ber}(p)^n$ denote random variables $\mathbf{x}, \mathbf{y}\in\{0,1\}^n$ sampled from the following probability distribution:
$$\P[\mathbf{x}, \mathbf{y} | \mathbf{x} \oplus \mathbf{y} \sim \text{Ber}(p)^n] = \frac{1}{2^n}p^{|\mathbf{x}\oplus\mathbf{y}|}(1-p)^{n - |\mathbf{x}\oplus\mathbf{y}|}.$$
\end{define}

Notice that Definition~\ref{define:BBLMTU} and Definition~\ref{define:bernoulliDistributed} imply that for a bipartite nonsignalling correlation $c$ with binary outputs we have
$$A_{\BBLMTU(c)}(p) = \P_{\mathbf{z}\sim\Ber(p)^n}[\BBLMTU(c)(\mathbf{z}) = 1] = \P_{\mathbf{x}\oplus\mathbf{y}\sim\Ber(p)^n}[\XOR(c(\mathbf{x}, \mathbf{y})) = 1].$$

In the following Claim~\ref{claim:existsDistributedIdentityNotGates} and its Corollary~\ref{cor:translateLowerbdToSup}, we explain what will turn out to be an optimal classical strategy for playing $\Amp_k$.

\begin{claim}[Distributed Identity and $\neg$ Gates from Classical Correlations]\label{claim:existsDistributedIdentityNotGates}
There exist correlations $c, d\in C$ with input alphabets $X=Y=\{0,1\}$ and output alphabets $A=B=\{0,1\}$ such that
\begin{equation}\label{eq:distributedIdentityAmplificationDerivative}
A_{\BBLMTU(c)}'(1/2) = 1
\end{equation}
and 
\begin{equation}\label{eq:distributedNotGate}
A_{\BBLMTU(d)}(p) = 1-p
\end{equation}
\begin{proof}
Let $c \in S_1$ denote the local correlation in which Alice and Bob output their input bits $x$ and $y$. Then
\begin{align*}
A_{\BBLMTU(c)}(p) &= \P_{x\oplus y \sim \text{Ber}(p)}[\XOR(c(x, y)) = 1] \\
&= \P_{x_1\oplus y_1\sim\text{Ber}(p)}[\XOR(x_1, y_1)=1]\\
&= p.
\end{align*}
Therefore
$$A_{\BBLMTU(c)}'(p)= \frac{dp}{dp}= 1$$
and so $c$ satisfies Equation~\eqref{eq:distributedIdentityAmplificationDerivative}, and clearly $c\in S_1$.

Now consider the classical correlation $d$ in which Alice outputs $\neg x$ and Bob outputs $y$. Then
\begin{align*}
A_{\BBLMTU(d)}(p) &= \P_{\mathbf{x} \oplus \mathbf{y} \sim \text{Ber}(p)^n}[\XOR(c(\mathbf{x}, \mathbf{y})) = 1] \\
&= \P_{\mathbf{x} \oplus \mathbf{y} \sim \text{Ber}(p)^n}[x_1 \oplus y_1 \oplus 1 = 1] \\
&= 1-p,
\end{align*}
and so $d$ satisfies Equation~\eqref{eq:distributedNotGate}, and clearly $d\in S_1$. Note that this implies that $\BBLMTU(d)$ is a noise-free $\neg$ gate.
\end{proof}
\end{claim}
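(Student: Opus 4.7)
The plan is simply to exhibit two deterministic local strategies and directly compute the stochastic maps $\BBLMTU(c)$ and $\BBLMTU(d)$ they induce via Definition~\ref{define:BBLMTU}.

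For the first correlation $c$, I would take the local strategy in which Alice outputs her input $x$ and Bob outputs his input $y$. This is a product of two deterministic local functions, so it is obviously a classical correlation in $C_1$. By the definition of $\BBLMTU$, on input $z$ the box internally samples $x$ uniformly and sets $y = x \oplus z$, so the output is $x \oplus y = z$ with probability $1$. Hence $\BBLMTU(c)$ is the (noiseless) identity map, and
\[
A_{\BBLMTU(c)}(p) \;=\; \mathbb{P}_{Z \sim \mathrm{Ber}(p)}[Z = 1] \;=\; p,
\]
so $A_{\BBLMTU(c)}'(1/2) = 1$, establishing \eqref{eq:distributedIdentityAmplificationDerivative}.

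For the second correlation $d$, I would take the local strategy where Alice outputs $x \oplus 1$ and Bob outputs $y$ (equivalently, apply a classical $\neg$ locally on Alice's side and the identity on Bob's). Again this is in $C_1$. On input $z$, the output of $\BBLMTU(d)$ is $(x \oplus 1) \oplus y = z \oplus 1$ with probability $1$, so $\BBLMTU(d)$ is a noise-free $\neg$ gate and
\[
A_{\BBLMTU(d)}(p) \;=\; \mathbb{P}_{Z \sim \mathrm{Ber}(p)}[Z \oplus 1 = 1] \;=\; 1-p,
\]
which is precisely \eqref{eq:distributedNotGate}.

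There is essentially no obstacle here; the only thing to be careful about is bookkeeping the $\BBLMTU$ construction, making sure that the uniform internal randomization $x \sim \{0,1\}$, $y = x \oplus z$ does not affect the deterministic output bit after the final XOR. Since Alice's output is a function of $x$ alone, Bob's of $y$ alone, and both are affine over $\mathbb{F}_2$, the internal randomness cancels in the XOR and the induced map on $z$ is deterministic, as required.
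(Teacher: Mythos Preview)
Your proposal is correct and follows essentially the same approach as the paper: both exhibit the deterministic strategies ``Alice outputs $x$, Bob outputs $y$'' for $c$ and ``Alice outputs $\neg x$, Bob outputs $y$'' for $d$, then compute $A_{\BBLMTU(c)}(p)=p$ and $A_{\BBLMTU(d)}(p)=1-p$ directly. Your added remark that the uniform internal bit cancels in the final XOR is a nice touch of bookkeeping but not a different idea.
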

\begin{cor}\label{cor:translateLowerbdToSup}
Let $S$ be a set of bipartite nonsignalling correlations such that $S\supseteq C$. Fix any integer $k\geq 1$. Then
$$\sup_{u\in \BBLMTU(S)_k}{A_u'(1/2)} \geq 1.$$
\begin{proof}

Let $S$ be as above. Because any gate in a circuit model is itself a valid circuit, for any $c\in S$, $\BBLMTU(c) \in \BBLMTU(S)$. In particular we will fix the $c\in S$ such that $A_{\BBLMTU(c)}'(1/2) = 1$, which is guaranteed to exist by Claim~\ref{claim:existsDistributedIdentityNotGates} because $S\supseteq C$. Then since this $c$ has input and output alphabets $X=Y=A=B=\{0,1\}$, the gate $\BBLMTU(c)\in\BBLMTU(S)_1$. If $k=1$, we simply set $v= \BBLMTU(c) \in \BBLMTU(S)$, and noting that by definition of $\sup$, $A_v'(1/2)$ provides the desired lower bound:
$$\sup_{u\in \BBLMTU(S)_k}{A_u'(1/2)}\geq A_v'(1/2) = 1.$$
If $k> 1$, then since $\BBLMTU(S)$ is a circuit model we may build a circuit $b$ which takes input variables $X_1, ..., X_k$, throws away inputs $X_2, .. X_k$, and returns $\BBLMTU(c)(X_1)$. This yields
$$A_b'(1/2) = A_{\BBLMTU(c)}'(1/2),$$
and since $b \in \BBLMTU(S)_k$ this implies
$$\sup_{u\in \BBLMTU(S)_k}{A_u'(1/2)} \geq 1.$$
\end{proof}
\end{cor}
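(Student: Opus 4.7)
The plan is to simply exhibit a single circuit $u \in \BBLMTU(S)_k$ whose amplification derivative at $1/2$ equals $1$; then the supremum over all such $u$ is automatically at least $1$. Because Claim~\ref{claim:existsDistributedIdentityNotGates} has already done the hard analytic work, the argument reduces to a routine dimension-matching step: we take the single-bit ``distributed identity'' gate produced by that claim and pad it out to act on $k$-bit inputs.

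First, invoke Claim~\ref{claim:existsDistributedIdentityNotGates} to obtain a classical bipartite correlation $c \in C$ with $1$-bit inputs and outputs such that $A_{\BBLMTU(c)}'(1/2) = 1$. Since $S \supseteq C$ by hypothesis, we have $c \in S_1$, and hence $\BBLMTU(c) \in \BBLMTU(S)_1$. This immediately settles the case $k=1$: choose $u = \BBLMTU(c)$ to see that the supremum defining the left-hand side is at least $A_{\BBLMTU(c)}'(1/2) = 1$.

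For $k \geq 2$, I would define a circuit $b$ in the circuit model $\BBLMTU(S)$ that takes variables $X_1,\ldots,X_k$, routes only $X_1$ into the single gate $\BBLMTU(c)$, and outputs the result of that gate (the variables $X_2,\ldots,X_k$ appear as unused input leaves). Because $\BBLMTU(S)$ is a circuit model over the gate set $\{\BBLMTU(c') : c' \in S\}$, and since this circuit uses only the gate $\BBLMTU(c)$, we have $b \in \BBLMTU(S)_k$. Since $b$ implements the stochastic map $(X_1,\ldots,X_k) \mapsto \BBLMTU(c)(X_1)$, its amplification function coincides with that of $\BBLMTU(c)$, i.e.\ $A_b(p) = A_{\BBLMTU(c)}(p)$ for all $p \in [0,1]$. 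Differentiating gives $A_b'(1/2) = A_{\BBLMTU(c)}'(1/2) = 1$.

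Taking $u = b$ in the supremum yields $\sup_{u \in \BBLMTU(S)_k} A_u'(1/2) \geq A_b'(1/2) = 1$, which is the desired inequality. There is no real obstacle here; the only subtlety is the bookkeeping needed to ensure that ignoring the extra inputs $X_2,\ldots,X_k$ still produces a legitimate element of $\BBLMTU(S)_k$, which is transparent from the circuit-model definition.
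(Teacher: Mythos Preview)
Your proposal is correct and follows essentially the same approach as the paper: invoke Claim~\ref{claim:existsDistributedIdentityNotGates} to obtain the distributed identity gate with $A_{\BBLMTU(c)}'(1/2)=1$, handle $k=1$ directly, and for $k>1$ pad by ignoring the extra inputs $X_2,\ldots,X_k$. The paper's proof is the same argument, presented with nearly identical wording.
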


\begin{claim}\label{claim:informationTheoreticTwoPartyAmplification}
Let $S$ be a set of bipartite nonsignalling correlations closed under wirings and such that $S\supseteq C$. Then $S$ has TPCC if and only if there exists $c\in \BBLMTU(S)$ such that
\begin{equation}\label{eq:ampDistributedBits}
A_{c}'(1/2) > 1
\end{equation}
\begin{proof}

Recall that by Claim~\ref{claim:existsDistributedIdentityNotGates}, any such set $S\supseteq C$ satisfies $\neg \in \BBLMTU(S)$. Also recall that by definition, the circuit model $\BBLMTU(S)$ is closed under composition and also under convex combinations.

Now first suppose there exists $c\in \BBLMTU(S)$ satisfying Equation~\eqref{eq:ampDistributedBits}.
Then by Lemma~\ref{lem:ampViaSymmetrization}, since $\neg \in \BBLMTU(S)$, there also exists $m \in \BBLMTU(S)$ such that $m$ is an amplifier away from $1/2$. By Theorem~\ref{thm:AmplifierEquivToFTCC_andOverhead}, this implies that $\BBLMTU(S)$ supports reliable computation, which implies that $S$ has TPCC by Definition~\ref{define:TPCC}.

Conversely, suppose that $S$ has TPCC. By Definition~\ref{define:TPCC}, this implies that $\BBLMTU(S)$ supports reliable computation. By Theorem~\ref{thm:AmplifierEquivToFTCC_andOverhead}, this implies that there exists $u \in \BBLMTU(S)$ such that $u$ is an amplifier. By Lemma~\ref{lem:existsSelfDualAmplfiier}, since $\neg \in \BBLMTU(S)$, there also exists $c\in \BBLMTU(S)$ such that $c$ is an amplifier away from $1/2$, and so by definition $c$ satisfies Equation~\eqref{eq:ampDistributedBits}.
\end{proof}
\end{claim}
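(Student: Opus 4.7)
The plan is to derive Claim~\ref{claim:informationTheoreticTwoPartyAmplification} essentially as a reformulation of Theorem~\ref{thm:AmplifierEquivToFTCC_andOverhead}, using Claim~\ref{claim:existsDistributedIdentityNotGates} to supply a noise-free $\neg$ gate. Since $S \supseteq C$ and $S$ is closed under wirings, Claim~\ref{claim:existsDistributedIdentityNotGates} gives $\neg \in \BBLMTU(S)$, which in particular is a $\neg_\kappa$ gate for $\kappa = 0 < 1/2$. Theorem~\ref{thm:AmplifierEquivToFTCC_andOverhead} then reduces the statement ``$\BBLMTU(S)$ supports reliable computation'' (equivalently, $S$ has TPCC, by Definition~\ref{define:TPCC}) to ``$\BBLMTU(S)$ contains an amplifier.'' So it suffices to prove that $\BBLMTU(S)$ contains an amplifier if and only if it contains some $c$ with $A_c'(1/2) > 1$.

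For the ``if'' direction, I would take $c \in \BBLMTU(S)$ with $A_c'(1/2) > 1$. The subtlety is that $c$ itself may fail to be an amplifier because $A_c(1/2)$ need not equal $1/2$. The natural fix is to symmetrize using the available noise-free $\neg$ gate: form the probabilistic mixture $m := \tfrac{1}{2} c + \tfrac{1}{2}(\neg \circ c \circ \neg^{\otimes n})$, which lies in $\BBLMTU(S)$ because $\BBLMTU(S)$ is closed under composition and convex combinations. A direct calculation using Definition~\ref{define:amplificationFunction} gives $A_m(p) = \tfrac{1}{2} A_c(p) + \tfrac{1}{2}(1 - A_c(1-p))$, so $A_m(1/2) = 1/2$ automatically and $A_m'(1/2) = A_c'(1/2) > 1$. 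Thus $m$ is an amplifier away from $1/2$, and Theorem~\ref{thm:AmplifierEquivToFTCC_andOverhead} yields TPCC. I would factor this construction out as a lemma (the one cited in the paper as Lemma~\ref{lem:ampViaSymmetrization}).

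For the ``only if'' direction, suppose $S$ has TPCC, so by Theorem~\ref{thm:AmplifierEquivToFTCC_andOverhead} there exists an amplifier $u \in \BBLMTU(S)$ with $A_u(p_0) = p_0$ and $A_u'(p_0) > 1$ for some $p_0 \in (0,1)$. The goal is to produce a $c \in \BBLMTU(S)$ whose relevant behavior occurs at $p = 1/2$. I expect this to be the main obstacle of the argument: moving the fixed point from an arbitrary $p_0$ to $1/2$ without losing the amplification slope, using only the noise-free $\neg$ gate, composition, and convex mixtures. The plan is to package this as a ``self-dual amplifier'' lemma (cited in the paper as Lemma~\ref{lem:existsSelfDualAmplfiier}), which takes an arbitrary amplifier together with $\neg$ and manufactures $c \in \BBLMTU(S)$ with $A_c(1/2) = 1/2$ and $A_c'(1/2) > 1$. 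Once that lemma is in hand, the condition $A_c'(1/2) > 1$ follows immediately, completing the equivalence.
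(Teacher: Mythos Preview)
Your proposal is correct and follows essentially the same approach as the paper's proof: both use Claim~\ref{claim:existsDistributedIdentityNotGates} to obtain a noise-free $\neg$ in $\BBLMTU(S)$, invoke Lemma~\ref{lem:ampViaSymmetrization} (the symmetrization you describe explicitly) for the forward direction, and invoke Lemma~\ref{lem:existsSelfDualAmplfiier} for the reverse direction, with Theorem~\ref{thm:AmplifierEquivToFTCC_andOverhead} providing the bridge to TPCC in both cases. Your reorganization---first reducing via Theorem~\ref{thm:AmplifierEquivToFTCC_andOverhead} to the equivalence ``contains an amplifier $\Leftrightarrow$ contains $c$ with $A_c'(1/2)>1$''---is a clean way to frame it, but the underlying argument is identical.
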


\begin{claim}\label{cl:relateToAmp}
Let $k\geq 1$, let $n=2k+1$, and let $c = (A,B,X,Y,p)$ be a bipartite nonsignalling correlation such that $A=B=\{0,1\}^n$ and $X=Y=\{0,1\}$.
Then the following holds:
\begin{equation}\label{eq:relationDerivativeToAmplificationGameValue}
A_{\BBLMTU(c)}'(1/2) = \frac{n{n-1 \choose (n-1)/2}}{2^{n-1}} \times[2 \Omega_c(\Amp_k) - 1].
\end{equation}
\end{claim}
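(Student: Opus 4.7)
The plan is to prove the identity by directly computing both sides and matching them. First, I would expand $A_{\BBLMTU(c)}(p)$ using the definition of $\BBLMTU$ (Definition~\ref{define:BBLMTU}) together with Definition~\ref{define:bernoulliDistributed}. Setting $q(\mathbf{x},\mathbf{y}) := \mathbb{P}[\XOR(c(\mathbf{x},\mathbf{y})) = 1]$ and $w := |\mathbf{x}\oplus\mathbf{y}|$, this gives
\[
A_{\BBLMTU(c)}(p) \;=\; \sum_{\mathbf{x},\mathbf{y}} \frac{p^{w}(1-p)^{n-w}}{2^{n}}\, q(\mathbf{x},\mathbf{y}).
\]
Differentiating in $p$ and plugging in $p = 1/2$, a short calculation of $\frac{d}{dp}[p^w(1-p)^{n-w}]$ at $p=1/2$ yields the factor $2^{-(n-1)}(2w - n)$, so
\[
A_{\BBLMTU(c)}'(1/2) \;=\; \frac{1}{2^{2n-1}} \sum_{\mathbf{x},\mathbf{y}} (2w-n)\,q(\mathbf{x},\mathbf{y}).
\]

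Next I would expand $\Omega_c(\Amp_k)$ from Definition~\ref{define:cValueG}. The key identity for converting the success probability to a bias form is $2\mathbb{P}[\XOR(c(\mathbf{x},\mathbf{y})) = m] - 1 = (2m-1)(2q(\mathbf{x},\mathbf{y})-1)$, which together with $\sum_{\mathbf{x},\mathbf{y}}\pi(\mathbf{x},\mathbf{y}) = 1$ gives
\[
2\Omega_c(\Amp_k) - 1 \;=\; \sum_{\mathbf{x},\mathbf{y}} \pi(\mathbf{x},\mathbf{y})\,(2\Maj_n(\mathbf{x}\oplus\mathbf{y}) - 1)\,(2q(\mathbf{x},\mathbf{y}) - 1).
\]
The crucial simplification comes from pairing the majority sign with the weighting in $\pi$: since $n$ is odd, $(2\Maj_n(\mathbf{x}\oplus\mathbf{y})-1) = \sgn(2w-n)$, and therefore $(2\Maj_n-1)\,|n-2w| = 2w-n$. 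Substituting the definition of $\pi$ collapses the sign and absolute value to yield
\[
\pi(\mathbf{x},\mathbf{y})(2\Maj_n(\mathbf{x}\oplus\mathbf{y}) - 1) \;=\; \frac{2w - n}{2^{n+1}\,n\binom{n-1}{(n-1)/2}}.
\]

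Finally, I would observe that $\sum_{\mathbf{x},\mathbf{y}}(2w - n) = 0$ by elementary symmetry of the binomial coefficients (or directly, $\sum_w \binom{n}{w}(2w-n) = 0$), so the $-1$ in $(2q-1)$ drops out. Combining and comparing with the expression for $A_{\BBLMTU(c)}'(1/2)$ above gives
\[
A_{\BBLMTU(c)}'(1/2) \;=\; \frac{n\binom{n-1}{(n-1)/2}}{2^{n-1}}\,[2\Omega_c(\Amp_k) - 1],
\]
which is exactly \eqref{eq:relationDerivativeToAmplificationGameValue}. There is no real conceptual obstacle here; the proof is a bookkeeping exercise. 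The main place to be careful is the identification $(2\Maj_n-1)|n-2w| = 2w-n$ (which is where the specific choice of the weighting $|n - 2|\mathbf{x}\oplus\mathbf{y}||$ in the distribution $\pi$ is designed precisely so that the $p=1/2$ derivative of $A$ aligns with the game value). Verifying that $\pi$ is actually a probability distribution, i.e.\ that $\sum_{\mathbf{x},\mathbf{y}}|n-2w| = 2^{n+1} n \binom{n-1}{(n-1)/2}$, uses the identity $w\binom{n}{w} = n\binom{n-1}{w-1}$ together with $\sum_{j \le k-1}\binom{2k}{j} = 2^{2k-1} - \tfrac{1}{2}\binom{2k}{k}$, and is the one place where one must track normalizations precisely.
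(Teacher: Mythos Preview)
Your proof is correct and follows essentially the same direct computation as the paper's own proof. The only cosmetic differences are that the paper changes variables to $\mathbf{z}=\mathbf{x}\oplus\mathbf{y}$ and case-splits on $|\mathbf{z}|\lessgtr n/2$, whereas you keep the double sum and use the sign identity $(2\Maj_n-1)|n-2w|=2w-n$ together with $\sum(2w-n)=0$; both arrive at the same expression after matching normalizations.
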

Less formally, Claim~\ref{cl:relateToAmp} relates Alice and Bob's win probability for the game $\Amp_k$ played using correlation $c$ (on the right hand side of \eqref{eq:relationDerivativeToAmplificationGameValue}), to their ability to amplify $n$ distributed bits (away from $1/2$) by feeding them through $c$ (on the left hand side of \eqref{eq:relationDerivativeToAmplificationGameValue}).
\begin{proof}

We begin with the expression on the left:

\begin{align*}
A_{\BBLMTU(c)}'(1/2)&=\left.\frac{d}{dp}\P_{\mathbf{x} \oplus \mathbf{y} \sim \text{Ber}(p)^n}[\XOR(c(\mathbf{x}, \mathbf{y}))  = 1]\right|_{p=1/2} \\
&=\left.\frac{d}{dp} \sum_{\mathbf{x}, \mathbf{y}\in \{0,1\}^n}{\frac{1}{2^n}p^{|\mathbf{x}\oplus\mathbf{y}|}(1-p)^{n-|\mathbf{x}\oplus\mathbf{y}|} \P[\XOR(c(\mathbf{x}, \mathbf{y})) = 1]} \right|_{p=1/2}\\
&=\left.\frac{d}{dp} \sum_{\mathbf{z}\in \{0,1\}^n}{\frac{1}{2^n}p^{|\mathbf{z}|}(1-p)^{n-|\mathbf{z}|} \sum_{\mathbf{x}\in \{0,1\}^n} \P[\XOR(c(\mathbf{x}, \mathbf{z} \oplus \mathbf{x})) = 1]} \right|_{p=1/2}\\
&= \left.\sum_{\mathbf{z}\in \{0,1\}^n}{\frac{1}{2^n}(|\mathbf{z}|-pn)p^{|\mathbf{z}|-1}(1-p)^{n-|\mathbf{z}|-1} \sum_{\mathbf{x}\in \{0,1\}^n} \P[\XOR(c(\mathbf{x}, \mathbf{z} \oplus \mathbf{x})) = 1]} \right|_{p=1/2}\\
&= \sum_{\mathbf{z}\in \{0,1\}^n}{\frac{2|\mathbf{z}|-n}{2^{n-1}} \sum_{\mathbf{x}\in \{0,1\}^n}\frac{1}{2^n} \P[\XOR(c(\mathbf{x}, \mathbf{z} \oplus \mathbf{x})) = 1]}\\
&= \sum_{\mathbf{z}\in \{0,1\}^n}{\frac{2|\mathbf{z}|-n}{2^{n-1}} \P_{\mathbf{x}\sim\{0,1\}^n}[\XOR(c(\mathbf{x}, \mathbf{z} \oplus \mathbf{x})) = 1]}  \\
&= \sum_{\substack{\mathbf{z} \in \{0,1\}^n\\ |\mathbf{z}|<n/2}}{\frac{2|\mathbf{z}| - n }{2^{n - 1}}\left(1-\P_{\mathbf{x}\sim\{0,1\}^n}[\XOR(c(\mathbf{x}, \mathbf{z} \oplus \mathbf{x})) = \Maj_n(\mathbf{z})] \right)} \\
& \,\,\,\,\,\,\,\,\,\,\,\,\,\,\,\,\,\,\,\,\,\,\,\,\,\,\,\,\,\,\,\,\,\,\,\,\,\,\,\,\,\,\,\,\,\,\,\,+ \sum_{\substack{\mathbf{z} \in \{0,1\}^n\\ |\mathbf{z}|>n/2}}{\frac{2|\mathbf{z}| - n }{2^{n - 1}}\P_{\mathbf{x}\sim\{0,1\}^n}[\XOR(c(\mathbf{x}, \mathbf{z} \oplus \mathbf{x})) = \Maj_n(\mathbf{z})]} \\
&= \sum_{\substack{\mathbf{z} \in \{0,1\}^n\\ |\mathbf{z}|<n/2}}{\frac{2|\mathbf{z}| - n }{2^{n - 1}}} + \sum_{\mathbf{z}\in \{0,1\}^n}{\frac{|n-2|\mathbf{z}||}{2^{n-1}}\P_{\mathbf{x}\sim\{0,1\}^n}[\XOR(c(\mathbf{x}, \mathbf{z} \oplus \mathbf{x})) = \Maj_n(\mathbf{z})] }  \\
&= \frac{1}{2^{n - 1}} \left(\sum_{\mathbf{z}\in \{0,1\}^n}{|n-2|\mathbf{z}||\P_{\mathbf{x}\sim\{0,1\}^n}[\XOR(c(\mathbf{x}, \mathbf{z} \oplus \mathbf{x})) = \Maj_n(\mathbf{z})] } + \sum_{\substack{\mathbf{z} \in \{0,1\}^n\\ |\mathbf{z}|<n/2}}{(2|\mathbf{z}| - n )}\right),
\end{align*}
and noting that $\sum_{\substack{\mathbf{z} \in \{0,1\}^n\\ |\mathbf{z}|<n/2}}{(2|\mathbf{z}| - n )} = -n{n-1\choose (n-1)/2}$, this may be expressed as
\begin{align*}
&= \frac{n{n-1 \choose (n-1)/2}}{2^{n - 1}} \left( 2 \sum_{\mathbf{z}\in \{0,1\}^n}{\frac{|n-2|\mathbf{z}||}{2n{n-1\choose (n-1)/2} }\P_{\mathbf{x}\sim\{0,1\}^n}[\XOR(c(\mathbf{x}, \mathbf{z} \oplus \mathbf{x})) = \Maj_n(\mathbf{z})] } - 1 \right)\\
&= \frac{n{n-1 \choose (n-1)/2}}{2^{n - 1}} \left( 2 \sum_{\mathbf{x}, \mathbf{y}\in \{0,1\}^n}{\frac{|n-2|\mathbf{x}\oplus\mathbf{y}||}{2^{n+1}n{n-1\choose (n-1)/2} }\P[\XOR(c(\mathbf{x}, \mathbf{y})) = \Maj_n(\mathbf{x}\oplus\mathbf{y})] } - 1 \right)\\
&= \frac{n{n-1 \choose (n-1)/2}}{2^{n - 1}} \left( 2 \sum_{\mathbf{x}, \mathbf{y}\in \{0,1\}^n}{\pi(\mathbf{x},\mathbf{y})\P[D(\mathbf{x}, \mathbf{y}, c(\mathbf{x}, \mathbf{y}))=1]} - 1 \right)\\
&= \frac{n{n-1 \choose (n-1)/2}}{2^{n - 1}} \left( 2 \Omega_c(\Amp_k) - 1 \right).
\end{align*}
\end{proof}

\begin{claim}\label{claim:superquantumCollapse}
The game $\Amp_k$ satisfies conditions (2) and (3) of Theorem~\ref{thm:newNonlocalGame}.
\begin{proof}
Let $S$ be any set of bipartite nonsignalling correlations such that $S\supseteq Q$ and $S$ is closed under wirings. (Note that this includes the case where $S=Q$, and this case will be useful below.)

Now rearranging Equation~\eqref{eq:relationDerivativeToAmplificationGameValue}, we have that
\begin{align}
\omega_S(\Amp_k) = \sup_{c\in S_k}{\Omega_c(\Amp_k)} &= \frac{1}{2} + \frac{2^{n-2}}{n{n-1\choose (n-1)/2}} \sup_{c\in S_k}{A_{\BBLMTU(c)}'(1/2)}\nonumber \\
&= \frac{1}{2} + \frac{2^{n-2}}{n{n-1\choose (n-1)/2}} \sup_{u\in \BBLMTU(S)_k}{A_{u}'(1/2)}. \label{eq:supValueFromSupAmplificationDeriv}
\end{align}
in which we have used Equation~\eqref{eq:BBLMTU_S_interchangeable_k}.
Since $S\supseteq C$, Corollary~\ref{cor:translateLowerbdToSup} applies to give the lower bound
$$\sup_{u\in \BBLMTU(S)}A_u'(1/2) \geq 1,$$
which implies by \eqref{eq:supValueFromSupAmplificationDeriv} that
\begin{equation}\label{eq:lowerBdAllPostClassicalModelsAmpkValue}
\omega_S(\Amp_k) \geq \frac{1}{2} + \frac{2^{n-2}}{n{n-1\choose (n-1)/2}}.
\end{equation}

Now we would like to compute the quantum value $\omega_Q(\Amp_k)$.
The arguments above apply to the specific case where $S=Q$, and so \eqref{eq:lowerBdAllPostClassicalModelsAmpkValue} gives the lower bound
$$\omega_Q(\Amp_k) \geq \frac{1}{2} + \frac{2^{n-2}}{n{n-1\choose (n-1)/2}}.$$
On the other hand, suppose for contradiction that $\omega_Q(\Amp_k) > \frac{1}{2} + \frac{2^{n-2}}{n{n-1\choose (n-1)/2}}$. Then there exists $c\in Q$ such that $\Omega_c(\Amp_k) > \frac{1}{2} + \frac{2^{n-2}}{n{n-1\choose (n-1)/2}}.$ Substituting this $\Omega_c$ into Equation~\eqref{eq:relationDerivativeToAmplificationGameValue} immediately gives
$$A_{\BBLMTU(c)}'(1/2) > 1.$$
Since $\BBLMTU(c)\in \BBLMTU(Q)$, Claim~\ref{claim:informationTheoreticTwoPartyAmplification} implies that $Q$ has trivial probabilistic communication complexity. However, $Q$ does not have trivial probabilistic communication complexity.\footnote{This is well-known; \cite{innerProductCommComplexity} provides a specific proof for the inner product function.}
This provides a contradiction, so we must have the upper bound $\omega_Q(\Amp_k) \leq \frac{1}{2} + \frac{2^{n-2}}{n{n-1\choose (n-1)/2}}$. Combining this upper bound with our lower bound, we have computed the exact quantum value\footnote{Since $\Amp_k$ is a nonlocal computation game, we could have also computed $\omega_Q(\Amp_k)$ using the technique in \cite{noQAdvantage}, in which it was shown that quantum mechanics gives no advantage for nonlocal computation.}:
\begin{equation}\label{eq:quantumValueAmpk}
\omega_Q(\Amp_k) = \frac{1}{2} + \frac{2^{n-2}}{n{n-1\choose (n-1)/2}}.
\end{equation}

Now that we have computed the quantum value $\omega_Q(\Amp_k)$ we return to the case of a general set $S\supseteq Q$ which satisfies the conditions of Theorem~\ref{thm:newNonlocalGame}. We will show that the sequence of games $\Amp_k$ satisfies Property 2 and Property 3 of Theorem~\ref{thm:newNonlocalGame}.

{\bf Proof that $\Amp_k$ satisfies Property 2 of Theorem~\ref{thm:newNonlocalGame}:} Fix any $k\geq 1$, and let $n=2k+1$. Suppose $\omega_S(\Amp_k) > \omega_Q(\Amp_k)$. Then by Equation~\eqref{eq:quantumValueAmpk}, this implies
$$\omega_S(\Amp_k) > \frac{1}{2} + \frac{2^{n-2}}{n{n-1\choose (n-1)/2}}.$$
Then by definition there exists $c\in S$ such that
$$\Omega_c(\Amp_k) > \frac{1}{2} + \frac{2^{n-2}}{n{n-1\choose (n-1)/2}}.$$
Substituting this $\Omega_c$ into Equation~\eqref{eq:relationDerivativeToAmplificationGameValue} immediately gives
$$A_{\BBLMTU(c)}'(1/2) > 1.$$
Since $\BBLMTU(c)\in \BBLMTU(S)$, Claim~\ref{claim:informationTheoreticTwoPartyAmplification} implies that $S$ has trivial probabilistic communication complexity.

{\bf Proof that $\Amp_k$ satisfies Property 3 of Theorem~\ref{thm:newNonlocalGame}:} 
Conversely, suppose that $S$ has trivial probabilistic communication complexity. Then by Claim~\ref{claim:informationTheoreticTwoPartyAmplification}, there exists $k\geq 1$, $c\in \BBLMTU(S)_k$ such that 
\begin{equation}\label{eq:fixedcDerivGreater1fromTPCC}
A_c'(1/2) > 1.
\end{equation}
Fix this $k$ and $c$.
Equation~\eqref{eq:fixedcDerivGreater1fromTPCC} implies that
$$\sup_{c\in \BBLMTU(S)_k}{A_c'(1/2)} > 1,$$
which by \eqref{eq:supValueFromSupAmplificationDeriv} gives
$$\omega_S(\Amp_k) > \frac{1}{2} + \frac{2^{n-2}}{n{n-1\choose (n-1)/2}}.$$
Then by \eqref{eq:quantumValueAmpk}, for this $k$,
$$\omega_S(\Amp_k) > \omega_Q(\Amp_k).$$

\end{proof}
\end{claim}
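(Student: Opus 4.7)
The plan is to prove Claim~\ref{claim:superquantumCollapse} by leveraging Claim~\ref{cl:relateToAmp} to convert questions about $\omega_S(\Amp_k)$ into questions about $\sup_{u \in \BBLMTU(S)_k} A_u'(1/2)$, and then use Claim~\ref{claim:informationTheoreticTwoPartyAmplification} to convert statements about amplification into statements about trivial probabilistic communication complexity.

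First, I will pin down the value of $\omega_Q(\Amp_k)$ exactly. Rearranging Claim~\ref{cl:relateToAmp} and taking a supremum gives
\[
\omega_S(\Amp_k) \;=\; \frac{1}{2} + \frac{2^{n-2}}{n\binom{n-1}{(n-1)/2}} \sup_{u \in \BBLMTU(S)_k} A_u'(1/2),
\]
where I have used Equation~\eqref{eq:BBLMTU_S_interchangeable_k} to replace $S_k$ with $\BBLMTU(S)_k$. Applying Corollary~\ref{cor:translateLowerbdToSup} (which uses only $S \supseteq C$) yields the lower bound $\omega_S(\Amp_k) \geq \tfrac{1}{2} + 2^{n-2}/(n\binom{n-1}{(n-1)/2})$ for every $S$ satisfying the hypotheses, including $S = Q$. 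For the matching upper bound on $\omega_Q(\Amp_k)$, I will argue by contradiction: if $\omega_Q(\Amp_k)$ strictly exceeded this value, then there would be some $c \in Q_n$ with $\Omega_c(\Amp_k)$ exceeding the same value, and then Claim~\ref{cl:relateToAmp} would give $A_{\BBLMTU(c)}'(1/2) > 1$. Claim~\ref{claim:informationTheoreticTwoPartyAmplification} applied to the set $Q$ (which is closed under wirings and contains $C$) would then conclude that $Q$ has trivial probabilistic communication complexity, contradicting the known fact that quantum communication complexity is nontrivial (e.g.\ for the inner product function).

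With $\omega_Q(\Amp_k) = \tfrac{1}{2} + 2^{n-2}/(n\binom{n-1}{(n-1)/2})$ in hand, properties (2) and (3) follow by essentially the same dictionary in opposite directions. For property (2), fix $k$ and assume $\omega_S(\Amp_k) > \omega_Q(\Amp_k)$; by definition of supremum there exists $c \in S_n$ with $\Omega_c(\Amp_k) > \tfrac{1}{2} + 2^{n-2}/(n\binom{n-1}{(n-1)/2})$, which via Claim~\ref{cl:relateToAmp} gives $A_{\BBLMTU(c)}'(1/2) > 1$, and then Claim~\ref{claim:informationTheoreticTwoPartyAmplification} (whose hypotheses are met since $S \supseteq Q \supseteq C$ and $S$ is closed under wirings) delivers trivial probabilistic communication complexity. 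For property (3), suppose $S$ has trivial probabilistic communication complexity; Claim~\ref{claim:informationTheoreticTwoPartyAmplification} produces some $k \geq 1$ and some $c \in \BBLMTU(S)_k$ with $A_c'(1/2) > 1$, whence the displayed identity forces $\omega_S(\Amp_k) > \tfrac{1}{2} + 2^{n-2}/(n\binom{n-1}{(n-1)/2}) = \omega_Q(\Amp_k)$ for that same $k$.

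The only nontrivial step is the determination of $\omega_Q(\Amp_k)$, and within that the principal obstacle is justifying the upper bound; this reduces cleanly to citing that $Q$ does not have trivial probabilistic communication complexity, which is a well-known fact (the paper points to \cite{innerProductCommComplexity}). Everything else is bookkeeping: translating the $\Omega_c$--$A'$ identity back and forth, and carefully invoking the closure-under-wirings hypothesis on $S$ so that Proposition~\ref{proposition:closedUnderWiringsImpliesBBLMTUisElementwise} (and hence Equations~\eqref{eq:BBLMTU_S_interchangeable} and \eqref{eq:BBLMTU_S_interchangeable_k}) applies, allowing me to freely interchange $\BBLMTU(S)_k$ with $\{\BBLMTU(c) : c \in S_k\}$ throughout the argument.
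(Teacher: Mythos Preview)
Your proposal is correct and follows essentially the same approach as the paper: you rearrange Claim~\ref{cl:relateToAmp} into the identity for $\omega_S(\Amp_k)$, use Corollary~\ref{cor:translateLowerbdToSup} for the lower bound, pin down $\omega_Q(\Amp_k)$ by contradiction via Claim~\ref{claim:informationTheoreticTwoPartyAmplification} and the nontriviality of quantum communication complexity, and then run the dictionary in both directions to obtain Properties~(2) and~(3). The steps, the lemmas invoked, and the order of the argument all match the paper's proof.
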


It is not hard to see that the approach we have just taken to calculate $\omega_Q(\Amp_k)$ also works out for calculating $\omega_C(\Amp_k)$, and that it turns out to be the case that
$$\omega_C(\Amp_k) = \omega_Q(\Amp_k).$$
In fact, this equality also follows directly from a previous result by \cite{noQAdvantage}. Therefore it turns out that $\Amp_k$ does not satisfy property (1) of Theorem~\ref{thm:newNonlocalGame}, which is that there should be some quantum advantage over classical correlations for playing the game.

To complete the proof of Theorem~\ref{thm:newNonlocalGame}, we will use the following Lemma~\ref{lem:addQAdvantage}, which is proven in Section~\ref{ssection:addQuantumAdvantagetoAnyNonlocalGame}:
\begin{lem}\label{lem:addQAdvantage}
Let $G$ denote a 2-player nonlocal game. There exists another game $G'$ which satisfies the following:
\begin{enumerate}
\item For any set of correlations $S\supset Q$ closed under composition and restriction, 
\[ \omega_S(G') > \omega_Q(G') \Leftrightarrow \omega_{S}(G) > \omega_Q(G). \]
\item $\omega_Q(G') > \omega_C(G')$.
\end{enumerate}

\end{lem}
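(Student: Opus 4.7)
The plan is to define $G'$ as a probabilistic mixture of $G$ with a game $H$ whose quantum value is maximal but whose classical value is strictly smaller. A natural choice is the Mermin--Peres magic square game mentioned in Section~\ref{sec:overview_magicG}, which satisfies $\omega_Q(H) = 1 > \omega_C(H)$. Because $\omega_S(H) \leq 1$ trivially and $\omega_S(H) \geq \omega_Q(H) = 1$ whenever $S \supseteq Q$, the quantum value of $H$ is stable under any superquantum extension: $\omega_S(H) = \omega_Q(H) = 1$.

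Fix any $q \in (0,1)$ and set $G' := qG + (1-q)H$ in the sense of Definition~\ref{define:NLMixture}. The key fact I would establish first is additivity of the $S$-value under mixtures: for any $S$ closed under composition and restriction,
\[
\omega_S\bigl(qG_1 + (1-q)G_2\bigr) \;=\; q\,\omega_S(G_1) + (1-q)\,\omega_S(G_2).
\]
The upper bound follows by restricting any strategy for the mixture to the disjoint input alphabets $X_1 \sqcup X_2$ and $Y_1 \sqcup Y_2$ to obtain strategies for $G_1$ and $G_2$ separately; the lower bound follows by composing optimal strategies for $G_1$ and $G_2$ with a locally-computed branch on which subgame the parties have been asked to play (which requires only shared randomness, freely available since $C \subseteq Q \subseteq S$).

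Once additivity is in hand, the two claimed properties reduce to arithmetic. For property (1), the equality $\omega_S(H) = \omega_Q(H)$ gives
\[
\omega_S(G') - \omega_Q(G') \;=\; q\bigl[\omega_S(G) - \omega_Q(G)\bigr],
\]
which is positive if and only if $\omega_S(G) > \omega_Q(G)$, since $q > 0$. For property (2),
\[
\omega_Q(G') - \omega_C(G') \;=\; q\bigl[\omega_Q(G) - \omega_C(G)\bigr] + (1-q)\bigl[\omega_Q(H) - \omega_C(H)\bigr] \;>\; 0,
\]
since the first bracketed term is nonnegative (as $Q \supseteq C$) and the second is strictly positive by the choice of $H$. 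The main delicacy, such as it is, is formalizing additivity precisely under the lemma's ``closed under composition and restriction'' hypothesis; but the branch-on-subgame operation is purely local and can be implemented by shared randomness plus a conditional invocation of the correlations, so there should be no real obstacle.
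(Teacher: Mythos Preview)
Your approach is correct and in fact cleaner than the paper's. You take $G' = qG + (1-q)H$ for a pseudo-telepathy game $H$ and rely only on additivity of the $S$-value under mixtures (the paper's Observation~\ref{observation:mixtureValue}). The paper instead builds $G_q := (qG + (1-q)G_T)\land M$, i.e.\ a \emph{conjunction} with the magic square game, with the trivial game $G_T$ mixed in as a tunable dilution. Because conjunction values are only sandwiched (Observation~\ref{observation:conjunctionValue}) rather than additive, the paper must then argue that $\omega_S(M)=1$ collapses the sandwich to an equality for property~(1), and for property~(2) must carry out a case split on whether $\omega_C(G)>\omega_C(M)$, choosing $q$ differently in each case. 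Your pure-mixture construction sidesteps all of this: any $q\in(0,1)$ works, and property~(2) follows from the single line
\[
\omega_Q(G')-\omega_C(G') \;=\; q\bigl[\omega_Q(G)-\omega_C(G)\bigr] + (1-q)\bigl[1-\omega_C(H)\bigr] \;>\; 0,
\]
using only $\omega_Q(G)\ge\omega_C(G)$ and $\omega_C(H)<1$. The paper's conjunction-based construction does not appear to buy anything here; your argument is strictly more elementary.
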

Less formally, this says that there is a game $G'$ with nonzero quantum advantage, such that correlations giving superquantum advantage at the game $G'$ can be used along with classical correlations to give superquantum advantage at playing the original game $G$.

We now complete the proof of Theorem~\ref{thm:newNonlocalGame} using Lemma~\ref{lem:addQAdvantage}.

\begin{proof}[Proof of Theorem~\ref{thm:newNonlocalGame}]

For $k\geq 1$, let $\text{MagicAmp}_k$ be the game defined by modifying $\Amp_k$ according to Lemma~\ref{lem:addQAdvantage}.
Let $S$ be any set of bipartite nonsignalling correlations such that $S\supseteq Q$ and $S$ is closed under wirings.

{\bf Proof of Property 1:} By property (2) of Lemma~\ref{lem:addQAdvantage}, for all $k\geq 1$, $\text{MagicAmp}_k$ satisfies
$$\omega_Q(\text{MagicAmp}_k) > \omega_C(\text{MagicAmp}_k).$$
Therefore the sequence of games $G_k = \text{MagicAmp}_k$ satisfies condition (1) of Theorem~\ref{thm:newNonlocalGame}.

{\bf Proof of Property 2:} Fix any $k\geq 1$. Suppose that $\omega_S(\text{MagicAmp}_k) > \omega_Q(\text{MagicAmp}_k)$. 
We will now use property (1) of Lemma~\ref{lem:addQAdvantage}, which implies that
\begin{equation}\label{eq:MagicAmpSuperQiffAmpSuperQ}
\omega_S(\text{MagicAmp}_k) > \omega_Q(\text{MagicAmp}_k) \Leftrightarrow \omega_S(\Amp_k) > \omega_Q(\Amp_k).
\end{equation}
By Equation~\eqref{eq:MagicAmpSuperQiffAmpSuperQ}, $\omega_S(\Amp_k) > \omega_Q(\Amp_k)$. By Claim~\ref{claim:superquantumCollapse}, $\Amp_k$ satisfies property (2) of Theorem~\ref{thm:newNonlocalGame} and therefore $S$ has trivial probabilistic communication complexity. Therefore the sequence of games $G_k = \text{MagicAmp}_k$ satisfies property (2) of Theorem~\ref{thm:newNonlocalGame}.

{\bf Proof of Property 3:} Finally, suppose that $S$ has trivial probabilistic communication complexity. By Claim~\ref{claim:superquantumCollapse}, $\Amp_k$ satisfies property (3) of Theorem~\ref{thm:newNonlocalGame} and therefore there exists some $k\geq 1$ such that $\omega_S(\Amp_k) > \omega_Q(\Amp_k)$. By Equation~\eqref{eq:MagicAmpSuperQiffAmpSuperQ}, for this $k$ we have that $\omega_S(\text{MagicAmp}_k) > \omega_Q(\text{MagicAmp}_k)$. Therefore the sequence of games $G_k = \text{MagicAmp}_k$ satisfies property (3) of Theorem~\ref{thm:newNonlocalGame}.

\end{proof}

\if{false}
\subsection{Miscellaneous Discouraging Calculations}
\begin{thm}\label{thm:QValueMajNL}
\begin{equation}\label{eq:QValueMajNL}
\omega_Q(\text{MajNL}_{2k+1}) = \frac{1}{2} + \frac{1}{2^{2k+1}}{2k\choose k}
\end{equation}
\end{thm}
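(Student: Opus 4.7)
The plan is to show $\omega_Q(\mathrm{MajNL}_{2k+1}) = \tfrac{1}{2} + \binom{2k}{k}/2^{2k+1}$ by sandwiching between a classical lower bound and a matching upper bound, using throughout the notation $n:=2k+1$. The two bounds will coincide because we ultimately appeal to the theorem of Linden--Popescu--Short--Winter~\cite{noQAdvantage}, which asserts that quantum strategies give no advantage over classical for ``nonlocal computation'' games of the form $a\oplus b = f(\mathbf{x}\oplus\mathbf{y})$ with $(\mathbf{x},\mathbf{y})$ uniform; hence it suffices to exhibit an optimal classical strategy and a matching classical upper bound.

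For the lower bound, I would analyze the trivial ``first-bit'' strategy in which Alice outputs $a=x_1$ and Bob outputs $b=y_1$; since $C\subseteq Q$ this also lower-bounds $\omega_Q$. Under this strategy $a\oplus b = z_1$, where $\mathbf{z}:=\mathbf{x}\oplus\mathbf{y}$ is uniform on $\{0,1\}^n$, so the success probability equals $\mathbb{P}_{\mathbf{z}}[z_1=\Maj_n(\mathbf{z})]$. I would evaluate this by conditioning on $|\mathbf{z}|=j$ (giving $j/n$ if $j>n/2$ and $(n-j)/n$ otherwise), applying the identity $j\binom{n}{j}=n\binom{n-1}{j-1}$, and using the symmetric identity $\sum_{j=k+1}^{2k}\binom{2k}{j} = \tfrac{1}{2}(2^{2k}-\binom{2k}{k})$. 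The sum then collapses cleanly to $\tfrac{1}{2}+\binom{2k}{k}/2^{2k+1}$.

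For the upper bound, I would bound $\omega_C$ by Fourier analysis over $\F_2^n$. By convexity it suffices to consider deterministic strategies $(a(\mathbf{x}),b(\mathbf{y}))$; writing $A=(-1)^a$, $B=(-1)^b$, $M=(-1)^{\Maj_n}$, the bias $2\omega-1$ equals $\mathbb{E}_{\mathbf{x},\mathbf{y}}[A(\mathbf{x})B(\mathbf{y})M(\mathbf{x}\oplus\mathbf{y})]$. The change of variables $\mathbf{z}=\mathbf{x}\oplus\mathbf{y}$ expresses this as the inner product of $M$ with the convolution $A*B$, which by the standard convolution-to-product identity becomes $\sum_U \widehat{A}(U)\widehat{B}(U)\widehat{M}(U)$. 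Cauchy--Schwarz together with Parseval ($\sum_U \widehat{A}(U)^2 = \sum_U \widehat{B}(U)^2 = 1$) then yields $|\text{bias}|\le \max_U|\widehat{M}(U)|$.

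The main obstacle is identifying this maximum: for $\Maj_n$ on odd $n=2k+1$, Fourier coefficients at even-sized sets vanish by symmetry, and it is standard that the singleton coefficients dominate the odd-sized ones, with $\widehat{M}(\{i\}) = 2\mathbb{P}[z_i=\Maj_n(\mathbf{z})]-1 = \binom{2k}{k}/2^{2k}$, which can be read off directly from the lower-bound computation above. Substituting yields $\omega\le\tfrac{1}{2}+\binom{2k}{k}/2^{2k+1}$, matching the lower bound and establishing the theorem. If a self-contained verification of the Fourier-monotonicity claim is desired, one can either cite the standard closed form for $\widehat{\Maj_n}(U)$ at odd $|U|$, or give a short induction on $|U|$ showing that the ratio between consecutive odd-level coefficients is bounded above by $1$.
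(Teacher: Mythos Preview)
The statement sits inside an \verb|\if{false}...\fi| block in the paper's source, so the paper neither typesets it nor supplies a proof; there is no paper proof to compare against directly.

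The closest thing the paper proves is the analogous value for $\Amp_k$ (equation~\eqref{eq:quantumValueAmpk}), which is the same majority-XOR predicate but under the weighted question distribution $\pi$ rather than the uniform one. The paper's route there is quite different from yours: it establishes an exact affine relation between $\Omega_c(\Amp_k)$ and $A_{\BBLMTU(c)}'(1/2)$ (Claim~\ref{cl:relateToAmp}), gets the lower bound from the identity correlation (Corollary~\ref{cor:translateLowerbdToSup}), and gets the upper bound by contradiction with the nontriviality of quantum communication complexity (Claim~\ref{claim:informationTheoreticTwoPartyAmplification}). A footnote explicitly mentions \cite{noQAdvantage} as an alternative derivation. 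That amplification-derivative machinery is tied to the specific weighting of $\Amp_k$ and would not directly yield the uniform-distribution value you are computing, so your Fourier/LPSW route is the natural one for $\mathrm{MajNL}$.

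Your argument is correct. The first-bit lower bound evaluates exactly to $\tfrac12+\binom{2k}{k}/2^{2k+1}$ as you outline. The upper bound is essentially the LPSW argument itself: the bias equals $\sum_U \widehat A(U)\widehat B(U)\widehat M(U)$, and H\"older (or Cauchy--Schwarz plus Parseval) gives $|\text{bias}|\le \max_U|\widehat M(U)|$, after which \cite{noQAdvantage} closes the gap to $\omega_Q$. The only step you leave as a citation is that $\max_U|\widehat{\Maj_n}(U)|$ is attained at singletons; you flag this and the standard closed form for the odd-level coefficients of $\Maj_n$ (which are monotone decreasing in $|U|$) handles it. Note a small redundancy: once you invoke \cite{noQAdvantage}, their result already contains the classical Fourier bound you re-derive, so you could shorten by citing them for both $\omega_Q=\omega_C$ and $\omega_C=\tfrac12(1+\max_U|\widehat M(U)|)$.
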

\begin{thm}\label{thm:MinMajorityAmpValue}
A $\Maj_{2k+1,\eps}$ gate serves as an amplifier away from $1/2$ for all
\begin{equation}\label{eq:MinMajorityAmpValue}
\epsilon < \frac{1}{2} - \frac{2^{2k-1}}{(2k+1){2k\choose k}}.
\end{equation}
\end{thm}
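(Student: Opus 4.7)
The plan is to verify directly the two defining conditions of an amplifier away from $p_0 = 1/2$ from Definition~\ref{define:amplifier}: namely $A(1/2) = 1/2$ and $A'(1/2) > 1$, where $A = A_{\Maj_{2k+1,\eps}}$ is the amplification function. Writing $n = 2k+1$ and letting $X_1, \ldots, X_n \sim \Ber(p)$ be i.i.d., a $\Maj_{n,\eps}$ gate outputs the correct majority value with probability $1 - \eps$ and the flipped value with probability $\eps$, so
$$A(p) = (1-\eps)\sum_{j=k+1}^{n} \binom{n}{j} p^j(1-p)^{n-j} + \eps \sum_{j=0}^{k} \binom{n}{j} p^j(1-p)^{n-j}.$$
The condition $A(1/2) = 1/2$ then follows immediately from the symmetry $\binom{n}{j} = \binom{n}{n-j}$ (using $n$ odd), which makes each of the two sums at $p=1/2$ evaluate to $2^{n-1}$.

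For the derivative at $1/2$, the cleanest route is to recycle the identity that already appears in the proof of Theorem~\ref{thm:AmplifierEquivToFTCC_andOverhead}:
$$A'(1/2) \;=\; \sum_{\mathbf{x} \in \F_2^n} \frac{|\mathbf{x}| - n/2}{2^{n-2}} \, \P\!\left[\Maj_{n,\eps}(\mathbf{x}) = 1\right].$$
Plugging in the noisy majority, grouping by Hamming weight, and using that $\sum_{\mathbf{x}}(|\mathbf{x}| - n/2) = 0$ (so the constant $\eps$ part cancels), this collapses to
$$A'(1/2) \;=\; \frac{1-2\eps}{2^{n-2}} \sum_{j=k+1}^{n} \binom{n}{j}\!\left(j - \tfrac{n}{2}\right).$$

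The remaining combinatorial sum is purely mechanical. Using the identity $j\binom{n}{j} = n\binom{n-1}{j-1}$ together with the symmetry-based fact $\sum_{i=k}^{2k}\binom{2k}{i} = \tfrac{1}{2}\bigl(2^{2k} + \binom{2k}{k}\bigr)$, a short computation yields
$$\sum_{j=k+1}^{n} \binom{n}{j}(2j - n) \;=\; (2k+1)\binom{2k}{k},$$
and hence
$$A'(1/2) \;=\; \frac{(2k+1)\binom{2k}{k}(1 - 2\eps)}{2^{2k}}.$$
Demanding $A'(1/2) > 1$ and solving for $\eps$ gives precisely the claimed bound $\eps < \tfrac{1}{2} - \tfrac{2^{2k-1}}{(2k+1)\binom{2k}{k}}$, completing the proof.

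There is no real obstacle: the only substantive step is the binomial identity in the final paragraph, and even this is standard. The main thing to be careful about is the bookkeeping when splitting the derivative sum into high- and low-weight halves and applying the noisy-gate definition consistently.
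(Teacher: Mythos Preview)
Your proof is correct. The computation of $A(1/2)=1/2$ by symmetry, the use of the derivative identity \eqref{eq:ampFuncDerivOnehalf} from the proof of Theorem~\ref{thm:AmplifierEquivToFTCC_andOverhead}, and the evaluation of the binomial sum $\sum_{j=k+1}^{2k+1}\binom{2k+1}{j}(2j-n)=(2k+1)\binom{2k}{k}$ are all valid, and the final rearrangement yields exactly the claimed threshold.

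As for comparison: the paper does not actually supply a proof of this statement. Theorem~\ref{thm:MinMajorityAmpValue} appears only inside a suppressed block (wrapped in \texttt{\textbackslash if\{false\} \ldots \textbackslash fi}) under the heading ``Miscellaneous Discouraging Calculations,'' where it is stated without proof and used only heuristically. Your argument is therefore not merely consistent with the paper's approach---it fills a gap the paper left open, and does so in the natural way by recycling the derivative identity the paper had already derived for the reliable-computation-implies-amplifier direction.
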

Notice that \eqref{eq:QValueMajNL} and \eqref{eq:MinMajorityAmpValue} are separated by a nonzero gap for all finite $k$. The approach of \cite{Brassard} is to build a circuit of $\text{PR}_\eps$ boxes which implements distributed $\Maj_{\delta(\eps)}$. The value $\delta(\eps)$ is below the maximum error tolerable for amplification \eqref{eq:MinMajorityAmpValue} for all $\eps < \frac{1}{2}-\frac{1}{\sqrt{6}}$.
This exact argument can not possibly work out at the quantum threshold.
Any circuit (wiring) of $\text{PR}_{\eps}$ boxes which implements a $\Maj_{\delta(\eps)}$ gate will necessarily have by Theorem~\ref{thm:QValueMajNL} $\delta\left(\frac{1}{2} - \frac{1}{\sqrt{8}} \right) > \frac{1}{2} - \frac{2^{2k-1}}{(2k+1){2k\choose k}}$.
By the finite degree in $\epsilon$ of the amplification function, this strict inequality can not be reversed for all arbitrarily small shifts of $\epsilon$
\fi

\subsection{Adding Quantum Advantage to a Nonlocal Game: The Magic Amplification Game}\label{ssection:addQuantumAdvantagetoAnyNonlocalGame}
By Claim~\ref{claim:superquantumCollapse}, the Amplification Game provides a tight limit on quantum nonlocality in any world in which communication complexity is nontrivial, satisfying a key motivation of \cite{implausibleConsequencesSuperstrongNonlocality,Brassard}.
On the other hand, part of the motivation that these authors had for investigating the CHSH game was that this game has some quantum advantage but no perfect quantum strategy. Intuitively, why would nature be strictly more nonlocal than classical strategies, but stop short of still greater nonlocality? In this respect, the Amplification Game falls short of the CHSH game because it has no quantum advantage. 
This turns out to be easy to rectify by building a new game from $\Amp_k$, the trivial game $G_T$ (Definition~\ref{define:trivialNLGame}), and a {pseudo-telepathy} game $M$.
A pseudo-telepathy game is a nonlocal game for which there is no perfect classical strategy, but there exists a perfect quantum strategy.
The idea is that if we require Alice and Bob to play both $M$ and $\Amp_k$ at the same time, then the resulting game could inherit some quantum advantage from $M$ and the special properties from $\Amp_k$ pertaining to communication complexity. To make this idea precise requires some fine-tuning because we can merely bound the value of the new nonlocal game.

There are many pseudo-telepathy games, and \cite{quantumPseudoTelepathy} provide a nice catalogue.
Although for our purposes it will suffice that pseudo-telepathy games exist, for completeness we describe one pseudo-telepathy game in detail here: the Mermin-Peres magic square game, \cite{merminMagicSquare, peresMagicSquare}\footnote{also see \cite{mermin1993hidden, aravind2002, aravind2002simple, aravind2004quantum}}, which we denote $M$. The inputs are $x, y \in \{1, 2, 3\}$. Alice and Bob must return bit vectors $\mathbf{a}$ and $\mathbf{b}$ respectively with $\mathbf{a}, \mathbf{b}\in \{0,1\}^2$. The bits returned must satisfy a linear system of equations mod 2, which depends upon the inputs $x, y$. In particular, we say that a matrix $A \in \F_2^{3\times 3}$ is a ``magic square" if each column sums to $1$ and each row sums to $0$. A magic square cannot exist, since summing all the entries gives $0=1$. Alice and Bob must return the first two bits of row $x$ and column $y$, respectively, such that the verifier is convinced that these bits came from a magic square. More precisely, there must exist some completion of the partial row and column $\mathbf{a}$ and $\mathbf{b}$ such that the intersecting bit agrees and the parity constraints are satisfied. Since no magic square exists, it holds that $\omega_C(M) = 8/9$. Surprisingly, $\omega_Q(M) = 1$, making $M$ a pseudo-telepathy game.

We begin with a two basic observations about the $S$-value of conjunctions and mixtures of two games $G_1, G_2$, in the case that $S$ is closed under wirings. Note that there exists a wiring of any correlation used to play $G_1\land G_2$ which can be used to play $G_1$ with at least the same win probability, giving an upper bound. Further, we may use two correlations which play $G_1$ and $G_2$ independently to play the conjunction $G_1\land G_2$. This gives Observation~\ref{observation:conjunctionValue}:
\begin{observation}\label{observation:conjunctionValue}
Let $S\supseteq C$ be a set of bipartite nonsignalling correlations closed under wirings. Then
$$\omega_S(G_1) \omega_S(G_2) \leq \omega_S(G_1\land G_2) \leq \min\{ \omega_S(G_1), \omega_S(G_2) \},$$
in which $G_1\land G_2$ is the conjunction (Definition~\ref{define:NLConjunction}).
\end{observation}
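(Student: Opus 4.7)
The plan is to prove each inequality separately using the closure of $S$ under wirings, together with $S \supseteq C$ to access shared randomness.

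For the lower bound $\omega_S(G_1)\omega_S(G_2) \leq \omega_S(G_1 \land G_2)$, I would fix $\eta > 0$ and choose correlations $c_1, c_2 \in S$ with $\Omega_{c_i}(G_i) \geq \omega_S(G_i) - \eta$ and with the appropriate input/output alphabets. The idea is to wire $c_1$ and $c_2$ in parallel to obtain a single correlation $c$ on the product alphabets $X_1 \times X_2$ and $Y_1 \times Y_2$: on input $((x_1,x_2),(y_1,y_2))$, Alice routes $x_i$ into the $c_i$-box and concatenates the outputs, and Bob does the same. Since $S$ is closed under wirings, $c \in S$. Because $(x_1,y_1) \sim \pi_1$ is independent of $(x_2,y_2) \sim \pi_2$ under the product distribution $\pi_1 \otimes \pi_2$, and because the outputs of $c_1$ and $c_2$ are independent by the parallel wiring, we have $\Omega_c(G_1 \land G_2) = \Omega_{c_1}(G_1)\, \Omega_{c_2}(G_2) \geq (\omega_S(G_1)-\eta)(\omega_S(G_2)-\eta)$. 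Letting $\eta \to 0$ yields the desired bound.

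For the upper bound $\omega_S(G_1 \land G_2) \leq \min\{\omega_S(G_1), \omega_S(G_2)\}$, it suffices by symmetry to show $\omega_S(G_1 \land G_2) \leq \omega_S(G_1)$. I would take any $c \in S$ on the product alphabets with $\Omega_c(G_1 \land G_2) \geq \omega_S(G_1 \land G_2) - \eta$, and construct from it a correlation $c' \in S$ for $G_1$. On input $(x_1,y_1)$, Alice and Bob use shared randomness (available since $C \subseteq S$ and $S$ is closed under wirings) to sample $(x_2,y_2) \sim \pi_2$ so that Alice learns $x_2$ and Bob learns $y_2$; they then feed $((x_1,x_2),(y_1,y_2))$ into $c$ and project onto the $G_1$-coordinates of the output. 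Closure under wirings again gives $c' \in S$. The crucial point is that since the $G_1 \land G_2$ predicate factors as $D_1 \otimes D_2$, the marginal event ``$D_1$ is satisfied'' is implied by ``both $D_1$ and $D_2$ are satisfied,'' so $\Omega_{c'}(G_1) \geq \Omega_c(G_1 \land G_2) \geq \omega_S(G_1 \land G_2) - \eta$; letting $\eta \to 0$ finishes the bound.

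The only subtlety is verifying that the informal ``wirings'' above are genuine wirings in the formal sense of Allcock et al., but both the parallel composition of two boxes and the use of shared random bits to fix part of a box's input are standard operations, so closure under wirings applies directly. No step appears to be a real obstacle; the observation is essentially bookkeeping built on top of the definitions of conjunction of games and closure of correlation sets under wirings.
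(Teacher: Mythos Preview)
Your proposal is correct and matches the paper's approach exactly: the paper's proof is a two-sentence sketch noting that one can wire two independent correlations in parallel for the lower bound and restrict a correlation for $G_1\land G_2$ down to one for $G_1$ for the upper bound, and you have simply fleshed out both steps (including the $\eta$-approximation for the supremum and the explicit use of shared randomness to sample $(x_2,y_2)\sim\pi_2$).
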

Next consider the mixture $(1-q)G_1 + qG_2$. Recall that in the mixture, the players are told each round which game they must play.
Any correlation which plays the mixture may thus be restricted to play either $G_1$ or $G_2$, allowing the supremum in Definition~\ref{define:nonlocalGame} to be rewritten to yield Observation~\ref{observation:mixtureValue}:
\begin{observation}\label{observation:mixtureValue}
Let $S\supseteq C$ be a set of bipartite nonsignalling correlations closed under wirings. Then
$$\omega_S((1-q)G_1 + qG_2) = (1-q)\omega_S(G_1) + q\omega_S(G_2),$$
in which $(1-q)G_1 + qG_2$ is the mixture (Definition~\ref{define:NLMixture}).
\end{observation}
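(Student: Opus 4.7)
The plan is to prove the two inequalities separately, exploiting the fact that in the mixture game the input alphabets are disjoint unions, so any strategy can be read off as two independent sub-strategies, one for each component game.

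First I would establish the upper bound $\omega_S((1-q)G_1 + qG_2) \leq (1-q)\omega_S(G_1) + q\omega_S(G_2)$. Let $c \in S$ be any correlation with input alphabets $X_1 \sqcup X_2$ and $Y_1 \sqcup Y_2$ and output alphabets $A_1 \sqcup A_2$ and $B_1 \sqcup B_2$ that attains the win probability nearly $\omega_S((1-q)G_1 + qG_2)$. Since Alice and Bob always know which of the two games they are playing from the ``tag'' on their inputs, one can define $c_i$ to be the restriction of $c$ to inputs in $X_i \times Y_i$, post-processed so that outputs not in $A_i \times B_i$ are simply relabeled as losing answers. Because $S$ is closed under wirings, each $c_i$ lies in $S$ and is a valid strategy for $G_i$, so its win probability at $G_i$ is at most $\omega_S(G_i)$. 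Splitting the probability according to $\pi$ (using Definition~\ref{define:NLMixture}) gives the upper bound.

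For the lower bound, fix any $\eta>0$ and pick $c_1 \in S$ and $c_2 \in S$ with $\Omega_{c_i}(G_i) \geq \omega_S(G_i) - \eta$. I would build a single correlation $c \in S$ for the mixture by conditioning on the input tag: when Alice receives $x \in X_i$ and Bob receives $y \in Y_i$, they invoke $c_i$; when they receive inputs from different $G_i$'s (an event of zero probability under $\pi$) they output an arbitrary default. This branching is a wiring of $c_1$ and $c_2$, so the resulting $c$ lies in $S$ because $S$ is closed under wirings. Its win probability at the mixture is exactly $(1-q)\Omega_{c_1}(G_1) + q\Omega_{c_2}(G_2) \geq (1-q)\omega_S(G_1) + q\omega_S(G_2) - \eta$; letting $\eta \to 0$ completes the lower bound.

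The main subtlety, rather than an obstacle, is making sure that ``conditioning on the input tag'' genuinely counts as a wiring in the sense of \cite{closedCorrelationSets}; one has to verify that the construction uses only local operations and shared randomness (which are available because $C \subseteq S$) to select between $c_1$ and $c_2$ based on each party's local input. Since the inputs $X_1 \sqcup X_2$ and $Y_1 \sqcup Y_2$ already carry the tag locally, no communication is needed and the construction is indeed a legal wiring. The rest is bookkeeping with the definition of $\pi$ in Definition~\ref{define:NLMixture}.
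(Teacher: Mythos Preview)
Your proposal is correct and follows essentially the same approach as the paper's. The paper treats this as a brief observation, noting only that since the players are told each round which game they must play, any correlation playing the mixture can be restricted to play either $G_1$ or $G_2$, allowing the supremum in Definition~\ref{define:nonlocalGame} to be split; your proof simply spells out both directions of this argument with appropriate care about wirings.
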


Now we may prove Lemma~\ref{lem:addQAdvantage}.
\begin{lem*}[Lemma~\ref{lem:addQAdvantage}, restated]

\begin{proof}
Fix $G = (X,Y,A,B,\pi,D)$.
Let $M = (A_M, B_M, X_M, Y_M, \pi_M, D_M)$ be a pseudo-telepathy game, which by definition has $\omega_C(M) <1$ and $\omega_Q(M) = 1$.
For $q\in (0,1]$ define the new game:
$$G_q := (q G + (1-q) G_T) \land M.$$
Let $\ell = (A\sqcup\{\perp\}, B\sqcup\{\perp\},(A \sqcup \{\perp\})  \times A_M, (B \sqcup \{\perp\}) \times B_M,f) \in C$ be the local correlation which will simply discard the outputs for the game $M$:
$$f(a,b, (a_q, a_M),(b_q, b_M)) = \begin{cases}
1 & a=a_q\text{ and }b=b_q\\
0 & \text{otherwise}
\end{cases}.$$

We will now demonstrate that for any $q\in (0,1]$, $G'=G_q$ along satisfies condition (1) of Lemma~\ref{lem:addQAdvantage}.
Fix any $q\in (0,1]$ and let $S$ be any set of correlations such that $S\supseteq Q$, $S$ is closed under wirings, and $\omega_S(G') > \omega_Q(G')$. Then since $S\supseteq Q$, $\omega_S(M) \geq \omega_Q(M) = 1$ and the upper and lower bounds from Observation~\ref{observation:conjunctionValue} both become tight, giving
\begin{equation}\label{eq:expandObsv1}
\omega_S(G_q) = \omega_S(qG+(1-q)G_T).
\end{equation}
By Observation~\ref{observation:mixtureValue} and since $\omega_S(G_T) =1$ the right side may be expressed as
\begin{equation}\label{eq:expandObsv2}
q\omega_S(G) + (1-q)\omega_S(G_T) = q\omega_S(G) + 1-q.
\end{equation}
Combining \eqref{eq:expandObsv1} and \eqref{eq:expandObsv2} we have
$$\omega_S(G_q) = q\omega_S(G) + 1-q.$$
Since $Q$ is itself closed under wirings, this applies to $Q$ as well, and we have 
\begin{equation}\label{eq:qValueGqGeneral}
\omega_Q(G_q) = q\omega_Q(G) + 1-q.
\end{equation}
For all $q>0$, this implies the following equivalence:
$$\omega_S(G_q) > \omega_Q(G_q) \Leftrightarrow q\omega_S(G) + 1-q > q\omega_Q(G) + 1-q \Leftrightarrow \omega_S(G) > \omega_Q(G),$$
which demonstrates that $G_q$ satisfies condition (1) of Lemma~\ref{lem:addQAdvantage}.

Next we show that there exists a choice of $q\in (0,1]$ so that $G_q$ satisfies condition (2) of Lemma~\ref{lem:addQAdvantage}, which requires that
$$\omega_C(G_q) < \omega_Q(G_q).$$
The upper bound in Observation~\ref{observation:conjunctionValue} gives
$$\omega_C(G_q) \leq \omega_C(M).$$
There are two cases:
either $\omega_C(G) > \omega_C(M)$ or $\omega_C(G)\leq \omega_C(M)$.

{\bf Case 1:} If $\omega_C(G) > \omega_C(M)$, then we set $q=1$ and find that
$$\omega_C(G_{q=1}) \leq \min\{\omega_C(G), \omega_C(M)\} = \omega_C(M) <  \omega_C(G) \leq \omega_Q(G) = \omega_Q(G_{q=1}),$$
in which we have used the upper bound of Observation~\ref{observation:conjunctionValue}, then the fact that $C\subset Q$, and finally Equation~\eqref{eq:qValueGqGeneral} for the case of $q=1$. Therefore
$$\omega_C(G_{q=1}) < \omega_Q(G_{q=1})$$
as desired.

{\bf Case 2:} Otherwise $\omega_C(G)\leq \omega_C(M)$. Then we may define the constant
$$q_0 := \frac{1-\omega_C(M)}{1-\omega_C(G)},$$
and because $\omega_C(G) \leq \omega_C(M) < 1$, $q_0 \in (0,1]$.
Notice that for any $q \in (0, q_0)$, by Observation~\ref{observation:mixtureValue} and since $\omega_C(G_T)=1$,
$$\omega_C(qG + (1-q)G_T) = 1 - q(1-\omega_C(G)) > 1 - q_0(1-\omega_C(G)) = \omega_C(M).$$
Combining with the upper bound from Observation~\ref{observation:conjunctionValue} we have
\begin{equation}\label{eq:upperBdOmegaCGq}
\omega_C(G_q) \leq \min\{ \omega_C(M), \omega_C(qG + (1-q)G_T) \} = \omega_C(M).
\end{equation}
Meanwhile we can lower bound the quantum value using Observation~\ref{observation:conjunctionValue}  combined with Observation~\ref{observation:mixtureValue} and the fact that $\omega_Q(M)=1$:
$$\omega_Q(G_q) \geq \omega_Q(qG + (1-q)G_T) = 1 - q(1-\omega_Q(G)) > 1- q_0 (1-\omega_Q(G)).$$
Since $Q\supseteq C$, $1- \omega_Q(G)  \leq 1 -\omega_C(G)$ which implies
$$\frac{1-\omega_Q(G)}{1-\omega_C(G)} \leq 1$$
$$\Rightarrow q_0 (1-\omega_Q(G)) =  \frac{1-\omega_C(M)}{1-\omega_C(G)}(1-\omega_Q(G)) \leq 1-\omega_C(M)$$
$$\Rightarrow 1-q_0(1-\omega_Q(G)) \geq 1 - (1-\omega_C(M)) = \omega_C(M).$$
which we combine with Equation~\eqref{eq:upperBdOmegaCGq}
to give
$$\omega_Q(G_q) > \omega_C(G_q)$$
as desired.

Therefore there exists a choice of $q\in (0,1]$ so that $G_q$ satisfies condition (2) of Lemma~\ref{lem:addQAdvantage}. Recall that for any such $q$, $G_q$ also satisfies condition (1). Therefore Lemma~\ref{lem:addQAdvantage} is proven with $G' = G_q$.

\end{proof}
\end{lem*}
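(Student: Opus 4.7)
The plan is to build $G'$ from $G$ by combining it with a \emph{pseudo-telepathy game} $M$ (satisfying $\omega_C(M) < 1 = \omega_Q(M)$, for instance the Mermin--Peres magic square game) and the trivial game $G_T$ of Definition~\ref{define:trivialNLGame}. Specifically, I would set
\[ G_q := (qG + (1-q)G_T) \wedge M \]
for a parameter $q \in (0,1]$ to be tuned. The intuition is that conjoining with $M$ acts as an ``identity'' at the quantum level and above (since $\omega_S(M) = 1$ whenever $S \supseteq Q$), while at the classical level $M$ enforces a strict separation between classical and quantum values; the mixture with $G_T$ then gives a knob to ensure that this separation dominates.

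To carry out the analysis, I would first record two basic facts about values of conjunctions and mixtures for any set $S \supseteq C$ closed under wirings. For the conjunction, $\omega_S(G_1)\omega_S(G_2) \leq \omega_S(G_1 \wedge G_2) \leq \min\{\omega_S(G_1), \omega_S(G_2)\}$: the lower bound holds by playing the two games independently, while the upper bound holds by wiring any strategy for the conjunction so as to play either component alone. For the mixture, $\omega_S(qG_1 + (1-q)G_2) = q\omega_S(G_1) + (1-q)\omega_S(G_2)$, because both players are told which subgame is active each round, so any optimal strategy for the mixture decomposes into optimal strategies for the components.

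With these in hand, since $\omega_S(M) \geq \omega_Q(M) = 1$ whenever $S \supseteq Q$, the conjunction bounds pin down
\[ \omega_S(G_q) = \omega_S(qG + (1-q)G_T) = q\,\omega_S(G) + (1-q), \]
using $\omega_S(G_T) = 1$. Condition~(1) of the lemma then follows for every $q > 0$, since
\[ \omega_S(G_q) > \omega_Q(G_q) \iff q\,\omega_S(G) + (1-q) > q\,\omega_Q(G) + (1-q) \iff \omega_S(G) > \omega_Q(G). \]

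The main obstacle is condition~(2), $\omega_Q(G_q) > \omega_C(G_q)$, because this is where the choice of $q$ matters and where the case analysis enters. If $\omega_C(G) > \omega_C(M)$, I would simply take $q = 1$: the conjunction upper bound gives $\omega_C(G_1) \leq \omega_C(M) < \omega_C(G) \leq \omega_Q(G) = \omega_Q(G_1)$. Otherwise, $\omega_C(G) \leq \omega_C(M)$, and I would define the threshold $q_0 := (1 - \omega_C(M))/(1 - \omega_C(G)) \in (0,1]$ and take any $q \in (0, q_0)$. For such $q$, one checks that $\omega_C(qG + (1-q)G_T) = 1 - q(1 - \omega_C(G)) > \omega_C(M)$, so the conjunction upper bound yields $\omega_C(G_q) \leq \omega_C(M)$; meanwhile $\omega_Q(G_q) = 1 - q(1 - \omega_Q(G)) \geq 1 - q(1 - \omega_C(G)) > \omega_C(M)$, so the desired strict inequality $\omega_Q(G_q) > \omega_C(G_q)$ holds. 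Both conditions are then satisfied by $G' := G_q$ for the chosen $q$, completing the construction.
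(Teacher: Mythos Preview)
Your proposal is correct and follows essentially the same approach as the paper: the same construction $G_q := (qG + (1-q)G_T) \wedge M$ with a pseudo-telepathy game $M$, the same conjunction/mixture value identities, the same derivation of condition~(1), and the same two-case analysis (with the same threshold $q_0$) for condition~(2). Your Case~2 inequality chain $\omega_Q(G_q) = 1 - q(1-\omega_Q(G)) \geq 1 - q(1-\omega_C(G)) > \omega_C(M) \geq \omega_C(G_q)$ is a slightly more compact version of the paper's arithmetic, but the argument is the same.
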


\section{Conclusion and Future Work}\label{sec:conclusion}
We investigated the extent to which the axiom ``communication complexity is nontrivial'' can explain the quantum value of nonlocal games, along the way developing new results about reliable classical computation with noisy gates.
On the quantum side, we have shown that there is a game $G$ so that $\omega_Q(G)$ is precisely explained by the axiom ``communication complexity is nontrivial''; and we have provided evidence that the approach of \cite{Brassard} cannot show a similar statement for the CHSH game.  On the reliable computation side, we have shown that the class $\mathcal{F}_{\eps,\tau}$ of formulas made from $\land_\eps$ and $\oplus_\tau$ gates does not support reliable computation for any $\eps \in (1/6, 5/6)$ and $\tau\in (0,1)$.  Assuming  Conjecture~\ref{conj:AppliesToCircuits}, and combined with previous work of \cite{Brassard}, this implies that the noise threshold for $\cC_\eps$ is exactly $1/6$. To prove our results, we have developed new tools for reasoning about fault-tolerant computation with asymmetric noise, including formalizing the tight relationship between amplifiers and fault-tolerant computation.
 
We conclude with a few open questions and directions for future work.
\begin{enumerate}
\item \textbf{Establishing that ``communication complexity is nontrivial'' is \em not \em enough to explain $\omega_Q(CHSH)$.}  We have shown that the approach of Brassard \etal in \cite{Brassard} likely cannot be pushed further.
However, this does not rule out all approaches; in particular, it could be that there is a way to use the CHSH correlation in way other than to create noisy AND gates.
It would be interesting to rule out any approach (or to find an approach that works!)

\item \textbf{An analogous result for circuits.}
As with previous results about formulas (eg, \cite{maxGateNoise,FalkUngerMinorBoundImprovements}),
we conjecture (Conjecture~\ref{conj:AppliesToCircuits}) that the same threshold of $\eps=1/6$ that we have proved for formulas also holds for circuits.  The assumption of formulas only comes in in the proof of Lemma~\ref{claim:mainClaim}, where we use the fact that the noise in the subtrees beneath two different inputs is independent.  It would be interesting to see if this assumption could be relaxed by investigating the nature of the dependencies which arise in general circuits.

\item \textbf{Results for general asymmetric gate noise.}  We have studied the gate set $\{\land_\eps, \oplus_\tau\}$ for the case that $\tau$ is arbitrarily small or $\tau = 0$. 
However, it remains open for  general $\tau > 0$.  The parameter regime where $(\eps,\tau) \in \left(\frac{3 - \sqrt{7}}{4}, 1/6\right) \times \left(0, \frac{3 - \sqrt{7}}{4}\right)$ is of particular interest.  Indeed, as shown in Figure~\ref{fig:params}, we understand what happens on the boundaries of this region, but do not know what happens in the interior.  

\item \textbf{Relationship to quantum fault-tolerant computation.}
Theorem~\ref{thm:AmplifierEquivToFTCC_andOverhead} may be viewed as an upper bound on the overhead required for reliable computation: regardless of the noise rate, there will only ever be a constant blow-up in the depth of the circuit when using noisy gates to compute reliably.

It is interesting to consider the analogous question in quantum computation, where realistic gate implementations will have significant gate noise necessitating fault-tolerance techniques in order to scale. Despite the resulting enormous amount of work on fault-tolerant quantum computation, it is not known whether a corresponding statement about constant blow-up in depth applies in the quantum setting.

It is possible that there are multiple distinct thresholds in the quantum case: one noise threshold below which quantum circuits can reliably compute with minimal overhead, and a higher noise threshold below which quantum circuits can reliably compute at all. Indeed, a trivial version of the statement is almost certainly true; one limit of maximally asymmetric gate noise simply turns a quantum computer into a noiseless classical computer. Such a computer could simulate quantum computation but only with exponential overhead as far as we know. More interestingly, the possibility of multiple thresholds is supported, for example, by the work of \cite{virmani2005classical} which shows that circuits of sufficiently noisy quantum gates are efficiently simulatable by a classical circuit. It is also consistent with the best known constructions for fault-tolerant quantum computing.\footnote{The usual model allows noiseless classical computation on the side to perform syndrome calculations for error correction.} Specifically, in order to obtain fault-tolerance with a constant factor overhead in quantum circuit depth, the best current construction has a threshold that is orders of magnitude worse than thresholds from proposals with super-constant overhead~\cite{constantOverheadQuantum,DBLP:journals/qic/Gottesman14} .

It may well be that a quantum version of Theorem~\ref{thm:AmplifierEquivToFTCC_andOverhead}  exists, meaning there remain major improvements to be found in quantum fault tolerance that will achieve constant depth overhead at high noise rates. That would be an exciting and likely technologically important discovery. On the other hand, the story may simply be more complicated in the quantum setting, with multiple thresholds depending on the scaling of the overhead cost, which would be a sharp contrast to what happens for reliable classical computation.
\end{enumerate}

\section*{Acknowledgements} We thank Li-Yang Tan for helpful discussions. We thank the Stanford Research Computing Center and Google for providing computing resources.  We thank anonymous reviewers for helpful comments and suggestions. We also thank Ryuhei Mori for helping correct an error in a previous version of this manuscript by pointing out that formulas do not remain closed under composition.

\begin{appendices}
\section{Proof of Proposition~\ref{prop:thepoint}: Connection between nontrivial communication complexity and reliable computation}\label{app:qm}
In this section we outline the proof of Proposition~\ref{prop:thepoint}, which we repeat below.
\begin{proposition*}[Proposition \ref{prop:thepoint}, restated]
Suppose that $C \subseteq S \subseteq NS$ and that $S$ is closed under wirings.  Then
 $S$ causes probabilistic communication complexity to become trivial (in the sense described in Section~\ref{sec:overview_rel}) if and only if $\BBLMTU(S)$ supports reliable computation. 
\end{proposition*}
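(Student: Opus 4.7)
The equivalence is essentially a reformulation of the informal discussion already given in Section~\ref{sec:overview_rel}, but there are two small technical points to get right: (a) converting a protocol with constant communication to a no-communication protocol while preserving constant advantage, and (b) matching the distributed-bit convention of Definition~\ref{define:BBLMTU}, in which the share on Alice's side is uniformly random, with arbitrary inputs held by Alice and Bob.

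\paragraph{Direction 1: $\BBLMTU(S)$ supports reliable computation $\Rightarrow$ $S$ has trivial probabilistic CC.} Given $f:\{0,1\}^n\times\{0,1\}^n\to\{0,1\}$, I would define the auxiliary function $h(\mathbf{u},\mathbf{v})=f(\mathbf{u},\mathbf{u}\oplus\mathbf{v})$ on $2n$ bits. By reliable computation in $\BBLMTU(S)$, there is a circuit $C\in\BBLMTU(S)$ that computes $h$ with some constant advantage $\delta_0>0$. To execute $C$ with Alice holding $\mathbf{x}$ and Bob holding $\mathbf{y}$, I set up distributed input bits so that the first block has value $\mathbf{x}$ (Alice's share $x_i$, Bob's share $0$) and the second block has value $\mathbf{x}\oplus\mathbf{y}$ (Alice's share $x_i$, Bob's share $y_i$); then $h$ evaluated at this input equals $f(\mathbf{x},\mathbf{y})$. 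Each gate $\BBLMTU(c)$ is simulated as in Figure~\ref{fig:nonlocalcircuit}, using shared randomness (available because $C\subseteq S$) to re-randomize the incoming shares so that Alice's share into $c$ is uniform, matching the sampling in~\eqref{eq:BBLMTUdist}. At the end, Alice holds a bit $a$ and Bob holds $b$ with $a\oplus b=C(\mathbf{x},\mathbf{x}\oplus\mathbf{y})$, so $\Pr[a\oplus b = f(\mathbf{x},\mathbf{y})]\geq 1/2 + \delta_0/2$ (after re-normalizing the advantage inequality~\eqref{eq:advantage}). Alice sends the single bit $a$, Bob outputs $a\oplus b$, and we are done.

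\paragraph{Direction 2: $S$ has trivial probabilistic CC $\Rightarrow$ $\BBLMTU(S)$ supports reliable computation.} Given $g:\{0,1\}^n\to\{0,1\}$, define $f(\mathbf{x},\mathbf{y})=g(\mathbf{x}\oplus\mathbf{y})$. By hypothesis there is a protocol $\Pi_S$ using a constant number $t$ of communicated bits and achieving $\Pr[\Pi_S(\mathbf{x},\mathbf{y})=f(\mathbf{x},\mathbf{y})]\geq 1/2+\varepsilon$. I then eliminate the communication by the standard trick: Alice and Bob use $t$ shared random bits in place of the communicated bits, continuing only if every guessed bit matches what the protocol would actually have sent, and otherwise outputting a default. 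The probability of guessing correctly is $2^{-t}$, so the resulting no-communication protocol has advantage at least $2^{-t}\varepsilon$, which is still a positive constant. The entire no-communication object is a composition of local computations, shared randomness (in $C\subseteq S$), and a constant number of uses of correlations in $S$; since $S$ is closed under wirings, it collapses to a single correlation $c\in S$ with inputs $\mathbf{x},\mathbf{y}$ and one-bit outputs $a,b$ whose XOR equals the computed value. Finally, if Alice and Bob re-randomize their inputs to $c$ using shared randomness, Alice's share into $c$ becomes uniform, so the overall stochastic map on the distributed bit $\mathbf{z}=\mathbf{x}\oplus\mathbf{y}$ is literally $\BBLMTU(c')$ for some $c'\in S$ (Proposition~\ref{proposition:closedUnderWiringsImpliesBBLMTUisElementwise}). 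This $\BBLMTU(c')\in\BBLMTU(S)$ computes $g$ with constant advantage, establishing reliable computation.

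\paragraph{Main obstacle.} Neither direction is deep, but the bookkeeping around the $\BBLMTU$ construction is the delicate piece: one must verify both that the uniform-$\mathbf{u}$ requirement of Definition~\ref{define:BBLMTU} can always be met by re-randomization (which is why $C\subseteq S$ and closure under wirings are needed, hence the added hypothesis $C\subseteq S$ in the statement), and that collapsing a multi-round no-communication protocol down to a single correlation in $S$ is legitimate. Both steps rest on closure under wirings and Proposition~\ref{proposition:closedUnderWiringsImpliesBBLMTUisElementwise}, and once these are invoked, the proof is essentially the argument in \cite{Brassard} augmented with the elementary communication-elimination trick.
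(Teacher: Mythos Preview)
Your proposal is correct. For Direction~2 you follow essentially the paper's line (replace the $t$ communicated bits by shared random guesses, then collapse the resulting no-communication protocol to a single $c\in S$ via closure under wirings), though you are terser than the paper on one point: because each party can only verify the guesses for the bits \emph{they} would have sent, the ``output a default'' step must be arranged so that each party independently XORs in a fresh random bit when their own check fails, and this is exactly what forces the output into the form $a\oplus b$. The paper spells this out; you assert it.

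For Direction~1 you take a genuinely different and more direct route. The paper, following \cite{Brassard}, first has Alice and Bob produce many independent weakly-biased distributed estimates of $f(\mathbf{x},\mathbf{y})$ using only shared randomness (bias $2^{-n}$), and then invokes the Majority circuit in $\BBLMTU(S)$ as an amplifier to boost the bias to a constant. You instead apply the reliable-computation hypothesis directly to the auxiliary function $h(\mathbf{u},\mathbf{v})=f(\mathbf{u},\mathbf{u}\oplus\mathbf{v})$, set up the distributed input $(\mathbf{x},\mathbf{x}\oplus\mathbf{y})$, and get constant advantage in one shot. Your argument is shorter and avoids the explicit amplification step; the paper's argument has the expository advantage of foregrounding the amplification mechanism that drives the rest of the paper. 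Both are valid, and your careful remark about re-randomizing shares (so Alice's input to each box is uniform, matching Definition~\ref{define:BBLMTU}) is exactly the bookkeeping the paper relegates to a footnote in Proposition~\ref{proposition:closedUnderWiringsImpliesBBLMTUisElementwise}.
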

\begin{proof}
We begin by explaining why, if $\BBLMTU(S)$ supports reliable computation, then $S$ renders communication complexity trivial.  This direction follows the reasoning of \cite{Brassard}.

Suppose that Alice and Bob would like to compute $f(\mathbf{u}, \mathbf{v})$. 
It is not hard to see 
that using only shared randomness, Alice and Bob can always come up with bits $x$ and $y$ respectively so that the marginals of each of $x$ and $y$ are uniform, and so
that 
\[ \mathbb{P}[ x \oplus y = f(\mathbf{u},\mathbf{v}) ] \geq \frac{1}{2} + \gamma_n,\] where $\gamma_n > 0$ may depend on $n$.  
Indeed, suppose that Alice and Bob flip $n$ shared random coins to get $\mathbf{r}$.  Bob assumes that $\mathbf{u} = \mathbf{r}$, and computes $y = f(\mathbf{r}, \mathbf{v})$.  Alice produces a single bit $x$ which is $0$ if indeed $\mathbf{u} = \mathbf{r}$, and otherwise is uniformly random.  
Then $\mathbb{P}[x \oplus y = f(\mathbf{u}, \mathbf{v})] \geq \frac{1}{2} + \frac{1}{2^n}.$

However, Alice and Bob are after a success probability of $1/2 + \eps$ for some constant $\eps > 0$.  Thus, they would like to amplify their success probability.  If $\BBLMTU(S)$ supports reliable computation, then 
in particular $\BBLMTU(S)$ contains a circuit $\mathrm{Amp}: \{0,1\}^t \to \{0,1\}$ that acts as an \em amplifier \em (Definition~\ref{define:amplifier}).  That is, given independent random bits $z_1, \ldots, z_t$ with bias $p>1/2$ (resp. $p<1/2$), $\mathrm{Amp}(z_1, \ldots, z_t)$ outputs a bit $a$ that is very likely to be $1$ (resp. $0$).  Indeed, $\mathrm{Amp}$ is simply the circuit that implements the Majority function.

Alice and Bob repeat the procedure above $t$ times independently to obtain $x_1, \ldots, x_t$ and $y_1, \ldots, y_t$. Then they can use protocol using $S$ that corresponds to the amplifier $\mathrm{Amp} \in \BBLMTU(S)$ to obtain final bits $x$ and $y$ so that 
\[ x \oplus y = \mathrm{Amp}( x_1 \oplus y_1, \ldots, x_t \oplus y_t ) = \mathrm{Amp}( z_1, \ldots, z_t ). \]
Finally, Alice sends the single bit $x$ to Bob, who outputs $x \oplus y$.  By construction, Bob's output is very likely to be equal to $f(\mathbf{u}, \mathbf{v})$.

For the other direction, suppose that Alice and Bob can use $S$ to compute any function with $t$ bits of communication each and with probability at least $1/2 + \eps$ for some constants $\eps > 0$ and $t \geq 0$.  We claim that, without loss of generality, the communication can come in the form of a single bit $a$ that Alice sends to Bob at the end of the computation, and moreover that Bob outputs $a \oplus b$ for some bit $b$ that he has computed locally.

Indeed, suppose that there is a protocol $\Pi$ for Alice and Bob to compute $f(\mathbf{u}, \mathbf{v})$ with the guarantees above, where Bob outputs the final answer.  Suppose that, in $\Pi$, Alice would send the bits $a_1, \ldots, a_t$ to Bob, and Bob would send the bits $b_1, \ldots, b_t$ to Alice.  Then consider the following modification of $\Pi$.  Alice and Bob use shared randomness to obtain random bits $r_1, \ldots, r_t, q_1, \ldots, q_t$.  Alice assumes that $\mathbf{b} = \mathbf{q}$ and computes her responses $\mathbf{a}$ accordingly.  If $\mathbf{a} = \mathbf{r}$, then Alice sends Bob the bit $a = 0$; otherwise she sends a uniformly random bit $a$.  Meanwhile, Bob assumes that $\mathbf{a} = \mathbf{r}$ and computes his responses $\mathbf{b}$ and the outcome $\Pi_B(\mathbf{r}, \mathbf{b}, \mathbf{v})$ of running the protocol $\Pi$ on Alice's assumed responses and his own input and responses.  Then Bob also computes a bit $y$ which is $0$ if $\mathbf{b} = \mathbf{q}$ and uniformly random otherwise and sets $b = y \oplus \Pi_B(\mathbf{r}, \mathbf{b}, \mathbf{v})$.  Finally, Bob outputs $a \oplus b$.  If $\Pi$ correctly computed $f(\mathbf{u},\mathbf{v})$ with probability at least $1/2 + \eps$, then this new protocol computes $f(\mathbf{u}, \mathbf{v})$ with probability at least $1/2 + \eps \cdot 2^{-2t}$.  Since $t$ is a constant, Alice and Bob still compute $f$ with a constant advantage.

Now let $g: \{0,1\}^n \to \{0,1\}$ be any function, and define $f(\mathbf{u}, \mathbf{v}) := g(\mathbf{u} \oplus \mathbf{v})$, where $\oplus$ is defined coordinate-wise.  Then there is a strategy $\Pi$ for Alice and Bob to compute $f$ using the correlations in $S$; we assume that $\Pi$ has the form described above.  Since $S$ is closed under wirings and the only communication in $\Pi$ is in the form of a single bit at the end of the protocol, there is some correlation $c \in S$ so that $\Pi$ consists of using $c(\mathbf{u}, \mathbf{v})$ to obtain bits $a,b$ for Alice and Bob respectively; then Alice sends $a$ to Bob and Bob outputs $a \oplus b$. 

Then the single gate $\BBLMTU(c)$ reliably computes the function $g$.  Indeed, we have
\[ \mathbb{P}[\BBLMTU(c)(\mathbf{z}) = g(\mathbf{z})] = \mathbb{P}[ \BBLMTU(c)(\mathbf{u}\oplus \mathbf{v}) = g(\mathbf{u} \oplus \mathbf{v}) ] = \mathbb{P}[c(\mathbf{u},\mathbf{v}) = f(\mathbf{u}, \mathbf{v})] \geq 1/2 + \eps. \]

Since $g$ was arbitrary, $\BBLMTU(S)$ can compute any function with constant advantage, meaning that $\BBLMTU(S)$ supports reliable computation.
\end{proof}

\section{Proof of Claim~\ref{claim:NMapExists}: Existence of the map $\mathbf{N}$}\label{appendix:NMapExists}
\subsection{Theory of Amplification}

In this section we collect a few useful definitions and preliminary lemmas to reason about amplifiers. The full proof of Claim~\ref{claim:NMapExists}, given in Section~\ref{ssection:proofNMapExists}, will make use of these definitions and generalizations of the lemmas.

We begin with a useful definition, which defines the \em dual \em of a stochastic map $c$.
Below, for stochastic maps $f:\{0,1\}^k \to \{0,1\}$ and $g:\{0,1\}^m \to \{0,1\}$, we use the notation $f \circ g$ to mean the function $f\circ g: \{0,1\}^{mk} \to \{0,1\}$ given by
\[ (f\circ g)(\mathbf{x}^{(1)}, \ldots, \mathbf{x}^{(k)}) = f( g(\mathbf{x}^{(1)}), \ldots, g(\mathbf{x}^{(k)}) ) \]
where each $\mathbf{x}^{(j)} \in \{0,1\}^m$.
\begin{define}[Dual]
The {\bf dual} of a stochastic map $c$ is defined by $\dual(c) = \neg \circ c \circ \neg$.
\end{define}
Note that the dual has amplification function
$$A_{\dual(c)}(p) = A_{\neg}(A_c(A_{\neg}(p))) = 1 - A_c(1-p).$$

Our first lemma shows that if there is any stochastic $c \in \conv\cC$ with $A'_c(1/2) > 1$, then there is some other stochastic map in $\conv \cC$ that amplifies away from $1/2$.
\begin{lem}\label{lem:ampViaSymmetrization}
If there exists a circuit $c \in \conv\mathcal{C}$ such that $A_c'(1/2)> 1$, and $\neg \in \conv\mathcal{C}$, then $\exists f \in \conv \mathcal{C}$ such that $A_{f}'(1/2)  >1, A_{f}(1/2) = 1/2$.
\end{lem}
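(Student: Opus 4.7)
The plan is to symmetrize $c$ using the noise-free $\neg$ gate, by averaging $c$ with its dual. Specifically, I would define $\dual(c) = \neg \circ c \circ \neg$, which applies $\neg$ to each input of $c$ and to its output. Since $\neg \in \conv \mathcal{C}$ by hypothesis and circuit models are closed under composition, $\dual(c) \in \conv \mathcal{C}$. Note that $\dual(c)$ has the same number of inputs as $c$, so we can form convex combinations of the two stochastic maps.

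The next step is a straightforward computation of the amplification function of $\dual(c)$. Using $A_\neg(p) = 1 - p$ and the fact that if $X \sim \Ber(p)$ then $\neg X \sim \Ber(1-p)$, one sees that
\[
A_{\dual(c)}(p) = 1 - A_c(1 - p),
\]
and consequently $A_{\dual(c)}'(p) = A_c'(1-p)$. Evaluating at $p = 1/2$ yields $A_{\dual(c)}(1/2) = 1 - A_c(1/2)$ and $A_{\dual(c)}'(1/2) = A_c'(1/2)$.

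Now define $f = \tfrac{1}{2} c + \tfrac{1}{2} \dual(c) \in \conv \mathcal{C}$ as a probabilistic mixture. Amplification functions are linear in convex combinations of stochastic maps, so
\[
A_f(1/2) = \tfrac{1}{2}\!\left(A_c(1/2) + 1 - A_c(1/2)\right) = \tfrac{1}{2},
\]
while $A_f'(1/2) = \tfrac{1}{2}(A_c'(1/2) + A_c'(1/2)) = A_c'(1/2) > 1$. Both conclusions follow immediately.

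There is no real obstacle here beyond verifying that $\conv \mathcal{C}$ is closed under the two operations used (composition with $\neg$, and convex mixtures of circuits on the same input set), which is part of the definition of a circuit model and its convex hull. The essential idea is just that averaging $c$ with its dual symmetrizes the amplification function around the point $p = 1/2$ without affecting its slope there.
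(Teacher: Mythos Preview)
Your proof is correct and takes essentially the same approach as the paper: define $\dual(c) = \neg \circ c \circ \neg$, compute $A_{\dual(c)}(p) = 1 - A_c(1-p)$, and take $f$ to be the uniform mixture of $c$ and $\dual(c)$ to obtain $A_f(1/2)=1/2$ and $A_f'(1/2)=A_c'(1/2)>1$.
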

\begin{proof}
Defining the stochastic map $f$ as the uniform distribution over $c$ and $\dual(c)$, we have
$$A_f(p) = \frac{1}{2}(A_{c}(p) + 1 - A_{c}(1-p))$$
which implies that
$$A_f(1/2) = \frac{1}{2},\,\, A_f'(1/2) = A_{c}'(1/2) > 1$$
and therefore $f \in \conv\mathcal{C}$ is an amplifier away from $1/2$.
\end{proof}

Next we show how to convert an amplifier away from a point $p_0$ to one which amplifies away from $1/2$.
\begin{lem}\label{lem:amplifierEquivalence}
Let $\mathcal{C}$ denote a set of circuits closed under composition and including the constant functions $0, 1$. 
Let $p_0 \in (0,1)$ and suppose that there is some $c \in \cC$ so that
\[ A_{c}'(p_0) > 1, A_{c}(p_0) = p_0. \]
Then there exists $f \in \conv\cC$ so that
\[ A_{f}'(1/2) > 1, A_{f}(1/2) = 1/2. \]
\end{lem}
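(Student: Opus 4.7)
The plan is to combine two ingredients: iterated composition of $c$ with itself to make the amplifying derivative arbitrarily large at $p_0$, together with local mixing with the constant gates on each input wire and on the output wire to translate the fixed point from $p_0$ to $1/2$.

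First I would introduce two auxiliary single-bit stochastic maps built from the constants. Let $g$ be the mixture that outputs its input with probability $\ell := 2\min(p_0, 1-p_0)$, outputs $\mathbf{1}$ with probability $q := p_0 - \ell/2$, and outputs $\mathbf{0}$ with the remaining probability. A direct computation gives $A_g(p) = \ell p + q$, so $A_g(1/2) = p_0$, and one checks that $\ell$ is the maximum slope compatible with this constraint. Symmetrically let $h$ output its input with probability $\ell' := 1/(2\max(p_0, 1-p_0))$, output $\mathbf{1}$ with probability $q' := 1/2 - \ell' p_0$, and output $\mathbf{0}$ otherwise; then $A_h(p) = \ell' p + q'$, and $A_h(p_0) = 1/2$. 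Both $g$ and $h$ are convex combinations of ``pass-through'' together with the constants $\mathbf{0}$ and $\mathbf{1}$, so in any larger circuit they are realized as mixtures over the choice of what to wire at a given slot.

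Next, let $c^{(n)}$ denote the $n$-fold tree composition of $c$ with itself, which belongs to $\cC$ by closure under composition. Because $p_0$ is a fixed point of $A_c$, the chain rule gives $A_{c^{(n)}}(p_0) = p_0$ and $A_{c^{(n)}}'(p_0) = (A_c'(p_0))^n$. With $k$ denoting the fan-in of $c$, I would then define
$$f := h \circ c^{(n)} \circ g^{\otimes k^n},$$
meaning that each of the $k^n$ input slots of $c^{(n)}$ is first fed through its own independent copy of $g$ (using a distinct input variable), and the output is then passed through $h$. Expanding the probabilistic choices in $g$ and $h$, $f$ is a convex combination of deterministic-structure circuits, each built from $c$ together with the constant gates $\mathbf{0}$ and $\mathbf{1}$; each such component lies in $\cC$, so $f \in \conv\cC$.

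Finally, independence of the $k^n$ copies of $g$ and of the noise inside $c^{(n)}$ yields the clean factorization $A_f(p) = A_h(A_{c^{(n)}}(A_g(p)))$. Plugging in $p = 1/2$ gives $A_f(1/2) = A_h(A_{c^{(n)}}(p_0)) = A_h(p_0) = 1/2$, while the chain rule evaluated at $1/2$ gives
$$A_f'(1/2) = \ell' \cdot (A_c'(p_0))^n \cdot \ell = \frac{\min(p_0,1-p_0)}{\max(p_0,1-p_0)}\,(A_c'(p_0))^n.$$
Since $p_0 \in (0,1)$ the prefactor is strictly positive, and since $A_c'(p_0) > 1$ by hypothesis, choosing $n$ large enough forces $A_f'(1/2) > 1$, as required. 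The main obstacle I anticipate is not any analytic subtlety but the bookkeeping needed to verify membership in $\conv\cC$: one must write $g^{\otimes k^n}$ and $h$ as a convex combination over $3^{k^n+1}$ deterministic structural choices and check that each such choice is a legitimate element of $\cC$, which relies on the fact that ``pass-through'' at an input slot simply means wiring the input directly into $c^{(n)}$ rather than requiring an identity gate in $\cC$.
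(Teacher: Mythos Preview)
Your proposal is correct and essentially parallel to the paper's proof, but with one genuine difference that is worth noting. Both arguments iterate $c$ to make the derivative at $p_0$ arbitrarily large and then precompose with an affine ``bias-shifting'' map built from mixtures with the constants so that $1/2$ is sent to $p_0$. The paper uses a one-sided mixture $m_r = r\mathbf{x} + (1-r)\mathbf{1}$ (which forces a WLOG reduction to $p_0 > 1/2$ via the dual), and after this step it only obtains $A_{b\circ m_r}(1/2) = p_0$, not $1/2$; it then invokes Lemma~\ref{lem:ampViaSymmetrization} to symmetrize, which requires $\neg \in \conv\mathcal{C}$. You instead use two-sided mixtures $g$ and $h$ on the input and output, so that $A_f(1/2) = 1/2$ is achieved directly without any appeal to $\neg$ or to the symmetrization lemma. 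This makes your argument slightly more self-contained and in fact a better match for the hypotheses as stated (only constants and closure under composition). The bookkeeping concern you flag about membership in $\conv\mathcal{C}$ is handled exactly as you suggest: expanding the independent choices in each copy of $g$ and in $h$ gives a finite convex combination of circuits each of which is either a constant or $c^{(n)}$ with some leaves relabeled by constants, all of which lie in $\mathcal{C}$.
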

\begin{proof}
Suppose we have a stochastic map $c$ taking $n$ inputs that amplifies away from a point $p_0 \in (0, 1) \setminus \{ 1/2 \}$. We will show how to construct a stochastic map $f$ that amplifies away from $1/2$. 
If $p_0 < 1/2$, we may instead choose the stochastic map $\dual(c) \in \conv\mathcal{C}$ that amplifies away from $1-p_0$. Hence, without loss of generality we assume $p_0 > 1/2$. For $r < 1$, let
\[ m_r = r \cdot \mathbf{x} + (1 - r) \cdot \mathbf{1} \]
denote the stochastic map on a single input bit $x$ which returns $1$ with probability $1-r$ and $x$ with probability $r$. The amplification function of $m_r$ is
$$A_{m_r}(p) = rp + 1-r.$$
It is easy to see that $A_{c\circ m_r} = A_c \circ A_{m_r}.$
Choosing $r = 2(1-p_0) < 1$,  we have that
$$A_{c\circ m_r}(1/2) = A_c(p_0) = p_0$$
and
$$A_{c\circ m_r}'(1/2) = 2(1-p_0)A_c'(p_0).$$
Since $2(1-p_0) \in (0, 1)$, as long as $A_c'(p_0) > \frac{1}{2(1-p_0)}$ we will have $A_{c\circ m_r}'(1/2) > 1$. With this in mind, we wish to construct an amplifier $b\in \conv\mathcal{C}$ away from $p_0$ so that $A_b'(p_0) > \frac{1}{2(1 - p_0)}$.

Define $c^{\circ k}$ to be the stochastic map 
$$c^{\circ k} := \underbrace{c \circ c \circ ... \circ c }_{k \text{ times}},$$
so $c^{\circ k}$ takes $n^k$ inputs.
The amplification function of $c^{\circ k}$ is given by 
$$A_{c^{\circ k}}(p) = \underbrace{A_c \circ A_c \circ ... \circ A_c }_{k \text{ times}}.$$
This implies that that for all $k \geq 1$, the value and derivative at $p_0$ obey
\begin{align}
A_{c^{\circ k}}(p_0) &= p_0, \notag \\
A_{c^{\circ k}}'(p_0) &= \left[ A_{c}'(p_0) \right]^k.\label{eq:derivativeAtFixedPointKfoldSelfComposition}
\end{align}
In particular, since $A_{c}'(p_0) > 1$, \eqref{eq:derivativeAtFixedPointKfoldSelfComposition} implies that there is some $k'$ so that
$$A_{c^{\circ k'}}'(p_0) > \frac{1}{2(1-p_0)}.$$
Choosing $b = c^{\circ k'}$ for this $k'$, we have
$$A_{b \circ m_r}'(1/2) = 2(1-p_0)A_c'(p_0)^k > 1.$$
Then by Lemma \ref{lem:ampViaSymmetrization}, $\conv\mathcal{C}$ contains an amplifier $f$ away from $1/2$.

\end{proof}

\subsection{Self-Dual Amplifier Lemma}
We will first prove Lemma~\ref{lem:existsSelfDualAmplfiier}, stated below, which extends Lemma \ref{lem:ampViaSymmetrization} to show that if $\conv \mathcal{C}$ contains an amplifier and a $\neg_\kappa$ gate then there is a {\it self-dual} amplifier away from $1/2$ in $\conv \mathcal{C}$. Then we will prove Claim~\ref{claim:NMapExists} using Lemma~\ref{lem:existsSelfDualAmplfiier}.
\begin{lem}\label{lem:existsSelfDualAmplfiier}
If $\mathcal{C}$ contains an amplifier and a $\neg_\kappa$ gate for some $\kappa < 1/2$, then there exists $c \in  \conv \mathcal{C}$ such that $c$ is an amplifier away from $1/2$ and $A_c(p) = 1- A_c(1-p)$.
\begin{proof}

Suppose there exists and amplifier $\Amp \in \conv \mathcal{C}$ such that $\Amp$ amplifies away from $p_0$. We wish to construct $c \in \conv \mathcal{C}$ such that $c$ amplifies away from $1/2$ and $A_{c}(p) = 1-A_{c}(1-p)$. We will reuse some ideas from the proof of Lemma~\ref{lem:amplifierEquivalence}.  Without loss of generality, assume $p_0 > 1/2$. Let $\mathbf{0}, \mathbf{1}$ denote the constant $0$ and $1$ function respectively. Let $\mathbf{x}$ denote the identity function on one bit. First define the ``noise gate'' $\mathcal{N}_\ell$, as the stochastic map
\begin{equation}\label{eq:NoiseGateDefinition}
\mathcal{N}_\ell := \ell \mathbf{x} + \frac{1-\ell}{2}(\mathbf{0} + \mathbf{1}),
\end{equation}
which has amplification function
\begin{equation}\label{eq:AmpFuncNell}
A_{\mathcal{N}_\ell}(p) = \ell p + \frac{1-\ell}{2}.
\end{equation}
We will reuse the $m_r$ map defined in the proof of Lemma~\ref{lem:amplifierEquivalence}, 
$$m_r := r\mathbf{x} + (1-r)\mathbf{1} \in \conv \mathcal{C}.$$
As in the proof of Lemma~\ref{lem:amplifierEquivalence} we will set $r = 2(1-p_0)$ and make use of the stochastic map
$$\Amp^{\circ k} \circ m_{2(1-p_0)}.$$
Recall that this map satisfies:
\begin{align}
&A_{\Amp^{\circ k} \circ m_{2(1-p_0)}} (1/2) = p_0 \label{eq:oneHalfMappedDilatedAmp} \\
&A_{\Amp^{\circ k} \circ m_{2(1-p_0)}}' (1/2) = 2(1-p_0)A_{\Amp}'(p_0)^k. \label{eq:oneHalfMappedDilatedAmpDeriv}
\end{align}

We will use the following mixture as our self-dual amplifier away from $1/2$:
\begin{equation}\label{eq:selfDualAmplifierDefn}
c = \frac{1}{2}\left[ \neg_\kappa \circ \left(\Amp^{\circ k} \circ m_{2(1-p_0)} \right) \circ \neg_\kappa + \mathcal{N}_{1-2\kappa} \circ \left(\Amp^{\circ k} \circ m_{2(1-p_0)} \right) \circ \mathcal{N}_{1-2\kappa}  \right].
\end{equation}

Note that $1-2\kappa > 0$ because $\kappa < 1/2$.
We will now show that for any $\kappa < 1/2$, we may choose sufficiently large $k$ such that $c$ is an amplifier away from $1/2$ with $A_c(p) = 1-A_c(1-p)$. The amplification function of $\neg_\kappa$ is
\begin{equation}\label{eq:AmpFuncNOTeps}
A_{\neg_\kappa}(p) = (1-\kappa)(1-p) + \kappa p = 1-\kappa - (1 - 2\kappa)p.
\end{equation}

{\bf Check that $1/2$ is a Fixed Point of $A_c(p)$:}
First, we verify that $A_c(1/2) = 1/2$.
Clearly,
$$A_{\neg_\kappa}(1/2) = A_{\mathcal{N}_\ell}(1/2) = \frac{1}{2},$$
and by \eqref{eq:oneHalfMappedDilatedAmp}, plugging in \eqref{eq:AmpFuncNOTeps} and \eqref{eq:AmpFuncNell}, we find that
\begin{align}
A_c(1/2) &= \frac{1}{2}\left[ A_{\neg_\kappa}(p_0) + A_{\mathcal{N}_{1-2\kappa}}(p_0) \right]\notag \\
&= \frac{1}{2}\left[ 1 - \kappa + p_0(2\kappa - 1) + \kappa + (1-2\kappa)p_0 \right] \notag \\
&= \frac{1}{2}\notag
\end{align}
and so this condition is satisfied for any choice of $k$.

{\bf Check that $c$ is Self-Dual:}
Next we show that
$$A_c(p) = 1-A_c(1-p).$$
This condition holds as long as
\begin{equation}\label{eq:selfDualAmpDef}
A_c = A_{\dual c} = A_{\neg \circ c \circ \neg}.
\end{equation}
To verify this condition, we will first compute the composition of $\neg$ with the noise gate and noisy $\neg$ gate, enabling us to compute $A_{\dual c}$. The composition of $\neg$ with the noise gate $\mathcal{N}_{1-2\kappa}$ is
$$A_{\neg \circ \mathcal{N}_{1-2\kappa}}(p) = (1-2\kappa)(1-p) + \frac{2\kappa}{2} = 1-\kappa -  (1-2\kappa) p = A_{\neg_\kappa},$$
which implies further that
$$A_{\neg \circ \neg_\kappa} = A_{\neg \circ \neg \circ \mathcal{N}_{1-2\kappa}} = A_{\mathcal{N}_{1-2\kappa}},$$
giving us the $\neg$ composition with the noisy $\neg$ gate.
Using these composition relations, it is easy to verify equation \eqref{eq:selfDualAmpDef}. Specifically, we have that
\begin{align}
A_{\dual c}  &= A_{\frac{1}{2}\neg \circ \left[ \neg_\kappa \circ \left(\Amp^{\circ k} \circ m_{2(1-p_0)} \right) \circ \neg_\kappa + \mathcal{N}_{1-2\kappa} \circ \left(\Amp^{\circ k} \circ m_{2(1-p_0)} \right) \circ \mathcal{N}_{1-2\kappa}  \right]\circ \neg}  \\
&=A_{\frac{1}{2}\left[\neg \circ \neg_\kappa \circ \left(\Amp^{\circ k} \circ m_{2(1-p_0)} \right) \circ \neg_\kappa \circ \neg +\neg \circ \mathcal{N}_{1-2\kappa} \circ \left(\Amp^{\circ k} \circ m_{2(1-p_0)} \right) \circ \mathcal{N}_{1-2\kappa} \circ \neg \right]} \notag \\
&=A_{\frac{1}{2}\left[\mathcal{N}_{1-2\kappa} \circ \left(\Amp^{\circ k} \circ m_{2(1-p_0)} \right) \circ \mathcal{N}_{1-2\kappa} +\neg_\kappa \circ \left(\Amp^{\circ k} \circ m_{2(1-p_0)} \right) \circ \neg_\kappa \right]} \notag \\
&= A_c \notag
\end{align}
and therefore $c$ is self-dual for any choice of $k$.

{\bf Check that $c$ satisfies $A_c'(1/2) > 1$:}
Finally, we must show that we may choose $k$ such that $A_c'(1/2) > 1$.
We compute the derivative of the amplification function of one term in equation \eqref{eq:selfDualAmplifierDefn},
$$A_{\mathcal{N}_{1-2\kappa}\circ \left(\Amp^{\circ k} \circ m_{2(1-p_0)} \right)\circ \mathcal{N}_{1-2\kappa}}(p) = \kappa + (1-2\kappa) A_{\Amp^{\circ k} \circ m_{2(1-p_0)} }(\kappa + (1-2\kappa)p)$$
which gives
\begin{align}
A_{\mathcal{N}_{1-2\kappa}\circ \left(\Amp^{\circ k} \circ m_{2(1-p_0)} \right)\circ \mathcal{N}_{1-2\kappa}}'(p) &=  (1-2\kappa)^2 A_{\Amp^{\circ k} \circ m_{2(1-p_0)} }'(\kappa + (1-2\kappa)p) \notag \\
A_{\mathcal{N}_{1-2\kappa}\circ \left(\Amp^{\circ k} \circ m_{2(1-p_0)} \right)\circ \mathcal{N}_{1-2\kappa}}'(1/2) &=  (1-2\kappa)^2 A_{\Amp^{\circ k} \circ m_{2(1-p_0)} }'(1/2), \notag 
\end{align}
which by equation \eqref{eq:oneHalfMappedDilatedAmpDeriv} becomes
$$A_{\mathcal{N}_{1-2\kappa}\circ \left(\Amp^{\circ k} \circ m_{2(1-p_0)} \right)\circ \mathcal{N}_{1-2\kappa}}'(1/2) =  2 (1-2\kappa)^2 (1-p_0)A_{\Amp}'(p_0)^k.$$
Since $(1-2\kappa)^2, (1-p_0) > 0$ and $A_{\Amp}'(p_0) >1$, there exists finite $k_0 $ such that for all $k \geq k_0$,
$$A_{\mathcal{N}_{1-2\kappa}\circ \left(\Amp^{\circ k} \circ m_{2(1-p_0)} \right)\circ \mathcal{N}_{1-2\kappa}}'(1/2) > 1.$$
The derivative at $1/2$ is the same for the amplification function of both terms in \eqref{eq:selfDualAmplifierDefn} because they are dual to each other.
Therefore we choose $k=k_0$ and have that $c \in \conv \mathcal{C}$ is a self-dual amplifier away from $1/2$.
\end{proof}
\end{lem}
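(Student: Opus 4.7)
My plan is to follow the strategy already used in Lemmas~\ref{lem:amplifierEquivalence} and~\ref{lem:ampViaSymmetrization}, but to work around the fact that the given negation $\neg_\kappa$ is noisy. I start from the amplifier $\Amp\in\conv\cC$ around some $p_0\in(0,1)$; by dualizing I may assume $p_0\ge 1/2$. First, using the shift-and-iterate construction from Lemma~\ref{lem:amplifierEquivalence}, form $B := \Amp^{\circ k}\circ m_{2(1-p_0)}$; then $A_B(1/2)=p_0$ and $A_B'(1/2)=2(1-p_0)[A'_\Amp(p_0)]^k$, which can be made arbitrarily large by taking $k$ large. This circuit $B$ does not yet have the right fixed point (it maps $1/2$ to $p_0$) nor is it self-dual, but it will serve as the ``engine'' of the final amplifier.

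The symmetrization step in Lemma~\ref{lem:ampViaSymmetrization} averages a map with its $\neg$-dual, but here no noiseless $\neg$ gate is available. The key observation that unlocks the construction is that the composition $\neg\circ \neg_\kappa$ has the same stochastic behaviour as the ``noise gate'' $\mathcal{N}_{1-2\kappa} := (1-2\kappa)\mathbf{x} + \kappa(\mathbf{0}+\mathbf{1})$, which can be realized in $\conv\cC$ using constants and the identity. Moreover, a direct computation shows $\neg_\kappa\circ\neg = \mathcal{N}_{1-2\kappa}$ as well, so the two maps $\neg_\kappa$ and $\mathcal{N}_{1-2\kappa}$ are swapped under $\neg$-conjugation. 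This motivates defining
\[ c \;:=\; \tfrac12\bigl(\neg_\kappa\circ B\circ\neg_\kappa \;+\; \mathcal{N}_{1-2\kappa}\circ B\circ\mathcal{N}_{1-2\kappa}\bigr),\]
so that $\dual(c)$ simply swaps the two summands and hence $c = \dual(c)$; equivalently, $A_c(p)=1-A_c(1-p)$ holds by construction.

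It then remains to check (a) $A_c(1/2)=1/2$ and (b) $A_c'(1/2)>1$. For (a), plugging $p=1/2$ into the two summands reduces, using $A_B(1/2)=p_0$, to $\tfrac12[(1-\kappa-(1-2\kappa)p_0) + (\kappa+(1-2\kappa)p_0)] = \tfrac12$, so the affine pre/post-wrapping by $\neg_\kappa$ and by $\mathcal{N}_{1-2\kappa}$ contributes cancelling offsets. For (b), the chain rule gives
\[ A_c'(1/2) \;=\; (1-2\kappa)^2\,A_B'(1/2) \;=\; 2(1-2\kappa)^2(1-p_0)[A'_\Amp(p_0)]^k,\]
and since $\kappa<1/2$ and $A'_\Amp(p_0)>1$, this exceeds $1$ for all sufficiently large $k$.

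The main obstacle is the very first one: noisy negation $\neg_\kappa$ is not an involution and does not act as a $\mathbb{Z}/2$ symmetry on amplification functions, so the one-line averaging trick of Lemma~\ref{lem:ampViaSymmetrization} fails when applied naively. Overcoming this is precisely the observation that $\neg$-conjugation exchanges $\neg_\kappa$ with the realizable gate $\mathcal{N}_{1-2\kappa}$; once that hidden symmetry is exploited via a two-term mixture, the fixed-point condition and the self-duality condition both fall out, and the amplifier derivative is saved by the factor $(1-2\kappa)^2>0$ that we get for free from $\kappa<1/2$.
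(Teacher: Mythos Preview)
Your proposal is correct and follows essentially the same argument as the paper: you build $B=\Amp^{\circ k}\circ m_{2(1-p_0)}$, identify the key relation that $\neg$-conjugation swaps $\neg_\kappa$ and $\mathcal{N}_{1-2\kappa}$ (at the level of amplification functions), and use the two-term mixture $c=\tfrac12(\neg_\kappa\circ B\circ\neg_\kappa+\mathcal{N}_{1-2\kappa}\circ B\circ\mathcal{N}_{1-2\kappa})$, then verify the fixed point, self-duality, and derivative conditions exactly as the paper does. The only cosmetic difference is that you articulate the swap $\neg_\kappa\leftrightarrow\mathcal{N}_{1-2\kappa}$ up front as the organizing idea, whereas the paper verifies it mid-proof; the constructions and computations are identical.
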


\subsection{Proof of Claim~\ref{claim:NMapExists}}\label{ssection:proofNMapExists}
Recall Claim~\ref{claim:NMapExists}:

\begin{claim*}[Claim~\ref{claim:NMapExists}, restated]
Let $\mathcal{C}$ denote a circuit model closed under composition. Suppose $\mathcal{C}$ contains an amplifier and a $\neg_\kappa$ gate for some $\kappa < 1/2$. Then there exists $\beta \in (0, 1/2]$, an integer $m \geq 1$, and a map $\mathbf{N} \in \conv \mathcal{C}$ such that $\mathbf{N}$ takes $2m$ inputs and so that the following holds.  Letting
\begin{align}
I_- &= \left[\frac{1}{2}-\beta, \frac{1}{2}- \frac{\beta}{2}\right]\notag \\
I_+ &= \left[\frac{1}{2} + \frac{\beta}{2},  \frac{1}{2} + \beta\right], \notag
\end{align}
we have
\begin{align}
\psi_\mathbf{N}( (I_-)^m \times (I_-)^m) &\subseteq I_+ \notag \\
\psi_\mathbf{N}( (I_+)^m \times (I_-)^m) &\subseteq I_+  \notag \\
\psi_\mathbf{N}( (I_-)^m \times (I_+)^m) &\subseteq I_+  \notag \\
\psi_\mathbf{N}( (I_+)^m \times (I_+)^m) &\subseteq I_-. \notag
\end{align}
\end{claim*}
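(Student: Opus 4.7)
The plan is to combine the self-dual amplifier provided by Lemma~\ref{lem:existsSelfDualAmplfiier} with $\neg_\kappa$ to emulate the gate identity $\NAND(X,Y) = \Maj(\neg X,\neg Y,\mathbf{1})$, using amplifier iteration to drive all intermediate biases far enough from $1/2$ that the Boolean-logic approximation is faithful.

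First I would invoke Lemma~\ref{lem:existsSelfDualAmplfiier} to obtain a self-dual amplifier $a\in\conv\mathcal{C}$ with $A_a(1/2)=1/2$, $A_a(p)=1-A_a(1-p)$, and $\lambda := A_a'(1/2) > 1$. Since $\conv\mathcal{C}$ is closed under composition, every iterate $a^{\circ k}$ is available and has $A_{a^{\circ k}}'(1/2) = \lambda^k$. Because $1/2$ is a repelling fixed point of the polynomial $A_a$, and $A_a$ is self-dual about $(1/2,1/2)$, iteration drives a punctured neighborhood of $1/2$ toward a symmetric pair of fixed points $p_-^* < 1/2 < p_+^*$ with $p_-^* = 1 - p_+^*$ (either the nearest fixed points of $A_a$ on each side of $1/2$, or the endpoints $\{0,1\}$). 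I would fix $\beta\in(0,1/2]$ small enough that $1/2+\beta$ lies strictly inside the basin of attraction of $p_+^*$, and then choose $K$ large enough that $A_{a^{\circ K}}$ maps $[1/2+\beta/2,\,1/2+\beta]$ into an arbitrarily small neighborhood of $p_+^*$; by self-duality the analogous statement on $[1/2-\beta,\,1/2-\beta/2]$ is automatic.

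With these tools I would build $\mathbf{N}$ in three stages. Stage one pre-composes each of the $2m$ inputs with $\neg_\kappa$ followed by iterates of $a$, producing bits whose biases cluster tightly near $p_-^*$ when the raw input bias was in $I_+$ and near $p_+^*$ when it was in $I_-$. Stage two emulates $\Maj(\neg X,\neg Y,\mathbf{1})$: feed $m$ copies of each processed $X_i$-line and of each processed $Y_j$-line into the leaves of an iterated amplifier $a^{\circ K'}$ and fill the remaining leaves with the constant $\mathbf{1}$; for $K'$ large enough the multivariate polynomial $\psi_{a^{\circ K'}}$ behaves as an approximate threshold function on these clustered inputs, so the output bias is concentrated near $p_+^*$ whenever the underlying NAND value is $1$ and near $p_-^*$ whenever it is $0$. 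Stage three composes with a noise gate $\mathcal{N}_\ell = \ell\mathbf{x} + \tfrac{1-\ell}{2}(\mathbf{0}+\mathbf{1})$; since $A_{\mathcal{N}_\ell}$ is affine, a suitable $\ell\in(0,1)$ shifts any bias from a small neighborhood of $p_\pm^*$ into the required interval $I_\pm=[1/2\pm\beta/2,\,1/2\pm\beta]$. The self-duality of every stage reduces the four required inclusions to the two cases $(I_-)^m\times(I_-)^m\subseteq I_+$ and $(I_+)^m\times(I_+)^m\subseteq I_-$, with the mixed cases following by symmetry.

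The principal obstacle is that a generic self-dual amplifier is not a symmetric threshold function, so the Boolean identity $\NAND(X,Y)=\Maj(\neg X,\neg Y,\mathbf{1})$ does not lift directly to $\psi_a$. The fix is the iteration step: as $K'$ grows, $A_{a^{\circ K'}}$ converges pointwise to the step function that takes the value $p_+^*$ above $1/2$ and $p_-^*$ below, and this threshold behavior transfers to the multilinear polynomial $\psi_{a^{\circ K'}}$ on product neighborhoods of $\{p_-^*,p_+^*\}$ by standard Bernstein-polynomial-style concentration estimates (inputs clustered near hard bits behave almost deterministically, so $\psi$ is well approximated by the Boolean evaluation at the implied $\{0,1\}$-labels). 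Making this quantitative and choosing $\beta,K,K',m,\ell$ jointly so that all four product-interval inclusions hold simultaneously is the technical crux; once these parameters are fixed, verification of each case reduces to an elementary computation using the explicit affine forms of $A_{\neg_\kappa}$ and $A_{\mathcal{N}_\ell}$.
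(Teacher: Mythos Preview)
Your overall architecture matches the paper's: obtain a self-dual amplifier from Lemma~\ref{lem:existsSelfDualAmplfiier}, use $\neg_\kappa$ to flip, iterate the amplifier to push biases toward its symmetric attracting fixed points $p_\pm^*$, and finish with a noise gate $\mathcal{N}_\ell$ to rescale into $I_\pm$. The divergence---and the gap---is in your Stage two, where you try to realize the ``combine $X$ with $Y$'' step by feeding $m$ processed $X$-copies, $m$ processed $Y$-copies, and constant $\mathbf{1}$'s into the leaves of $a^{\circ K'}$.

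The amplifier hypotheses $A_a(1/2)=1/2$, $A_a'(1/2)>1$, $A_a(p)=1-A_a(1-p)$ constrain only the \emph{univariate} polynomial $A_a$, i.e.\ the behavior on i.i.d.\ inputs. They say nothing about the multilinear polynomial $\psi_{a^{\circ K'}}$ on a heterogeneous vector whose coordinates sit near $p_-^*$, near $p_+^*$, and at $1$. Your ``Bernstein-style concentration'' remark would apply if the inputs were clustered near $\{0,1\}$, but the attracting fixed points $p_\pm^*$ of $A_a$ can lie anywhere in $(0,1)$ (the paper explicitly allows $p_0,p_1\in(0,1)\setminus\{1/2\}$), so iteration in Stage one does \emph{not} drive biases to hard bits. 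Moreover, since $a$ need not be a symmetric function, the output of $a^{\circ K'}$ depends on \emph{which} leaves receive the $X$-copies, the $Y$-copies, and the $1$'s, not merely on their proportions; there is no reason a generic self-dual amplifier should implement anything majority-like on such an assignment. A separate small error: your claim that ``self-duality of every stage reduces the four inclusions to two'' cannot be right, because $\NAND$ is not self-dual, so $\mathbf{N}$ must not be either (swap-symmetry in the two blocks reduces the two mixed cases to one, but that is a different symmetry).

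The paper supplies exactly the missing idea: instead of asking the amplifier to combine $X$ and $Y$, it builds an explicit two-input gadget
\[
D_{r,s} \;=\; r\,\mathbf{0} \;+\; s\,\mathbf{1} \;+\; \tfrac{1-r-s}{2}\bigl(\mathbf{x}_1+\mathbf{x}_2\bigr)\ \in\ \conv\mathcal{C},
\]
whose output bias is the \emph{affine} function $q=\tfrac{1-r-s}{2}(1+\kappa_1+\kappa_2)+s$ of the input biases $\tfrac12+\kappa_j$. One then chooses $r,s$ so that $q>1/2$ iff both $\kappa_j>0$ on $(I_-\cup I_+)^2$; composing with $\neg_\kappa$, a noise gate $\mathcal{N}_{\ell_0}$ to land strictly inside $(p_0,p_1)$, then $c^{\circ k}$ (now acting on genuinely i.i.d.\ inputs, which is the only regime the amplifier controls), and finally $\mathcal{N}_{\ell_1}$ to rescale into $I_\pm$. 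This way the amplifier is used only on i.i.d.\ inputs, and the nonlinear ``AND'' step is handled by the elementary mixture $D_{r,s}$ rather than by any threshold property of $a$.
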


\begin{proof}[Proof of Claim~\ref{claim:NMapExists}]
First, by Lemma \ref{lem:existsSelfDualAmplfiier}, we can construct a self-dual amplifier $c\in \conv \mathcal{C}$ away from $1/2$
using $\Amp$ and $\neg_\kappa$.
Let $p_0, p_1 \in (0, 1) \setminus \{ 1/2 \}$ denote the two fixed points of $A_{c}(p)$
adjacent to $p=1/2$, with $p_0 < 1/2 < p_1$. 
Notice that these exist because $c$ is an amplifier, and further 
that by the self-duality of $c$, $p_0 = 1-p_1$.

To construct $\mathbf{N}$, we choose any $\beta$ such that $0 < \beta < \min\left\{p_1 - \frac{1}{2}, \frac{1}{2} - p_0 \right\} = p_1 -\frac{1}{2}$. Next we will need some ingredients. The first ingredient is the noise gate $\mathcal{N}_\ell$, which was defined in equation \eqref{eq:NoiseGateDefinition}, restated here:
$$\mathcal{N}_\ell := \ell \mathbf{x} + \frac{1-\ell}{2}(\mathbf{0} + \mathbf{1}).$$
The noise gate will be useful for ensuring that inputs have {\it sufficient} noise to occupy $I_- \cup I_+$.

The second ingredient is $D_{r,s}$, which accepts two input bits $x_1, x_2$, and is similar to an $\land$ gate. 
For $j \in \{1,2\}$,
Let $\mathbf{x}_j$ denote the map accepting $2$ input bits $(x_1, x_2)$ and outputting the $j$th input $x_j$. Then we define the stochastic  map $D_{r, s}:\F_2^2 \to \F_2$ as the mixture
$$D_{r,s} := r\mathbf{0} + s\mathbf{1} + \frac{1-r - s}{2}\left( \mathbf{x}_1 +  \mathbf{x}_2 \right).$$

The final ingredient is a reordering map $\mathcal{R}_n$, which we need just for notational purposes.  
For a positive integer $n$, the map
$$\mathcal{R}_n : \F_2^{2n} \rightarrow \F_2^{2n}$$
is defined by
$$\left(\mathcal{R}_n(\mathbf{x}) \right)_{k} = \begin{cases}
\mathbf{x}_{k/2} & k \text{ even} \\
\mathbf{x}_{n + (k+1)/2} & k \text{ odd} \\
\end{cases}$$
for $k = 1, \ldots, 2n$.
For example, $\mathcal{R}_4(11110000) = 01010101$. 
We will use $\mathcal{R}_n$ to ensure that the inputs are in the right order to satisfy equation \eqref{eq:intervalMappingsNoisyNANDsimulator}. 
As noted at the beginning of this appendix, we have been using the shorthand $c \circ c'$ to mean $c \circ (c')^{\otimes n}$ when $c: \F_2^n \to \F_2$ and $c': \F_2^m \to \F_2$; 
below, for $g: \F_2^{2n} \to \F_2$, we will use $g \circ \mathcal{R}_n$ to mean composition in the usual sense (not using our shorthand).

Now we can construct $\mathbf{N}$. We will use the following map, parametrized by $\ell_0, \ell_1, r, s, k$:
\begin{equation}\label{eq:defNmapParametrized}
\mathbf{N}_{r,s,\ell_0,\ell_1, k} = \left[\mathcal{N}_{\ell_1} \circ c^{\circ k} \circ \mathcal{N}_{\ell_0} \circ\neg_\kappa \circ D_{r,s} \right] \circ \mathcal{R}_{r^k}.
\end{equation}
This satisfies $\mathbf{N}_{r,s,\ell_0,\ell_1} \in \conv \mathcal{C}$. Let $c$ take $r$ inputs. Then $\mathbf{N}_{r,s,\ell_0,\ell_1}$ takes $2r^k$ inputs. Therefore, the following Claim~\ref{claim:NMapParametersExists} implies  Claim~\ref{claim:NMapExists}:

\begin{claim}\label{claim:NMapParametersExists}
For all $\beta \in (0, p_1 - \frac{1}{2})$ and $\kappa < 1/2$, there exist $\ell_0, \ell_1 \in (0, 1]$, and $r, s \in [0, 1]$ with $r+s < 1$, and $k \geq 1$ such that $\mathbf{N}_{r,s,\ell_0,\ell_1, k}$ as defined in equation \eqref{eq:defNmapParametrized} satisfies equation \eqref{eq:intervalMappingsNoisyNANDsimulator}.
\end{claim}

\begin{proof}
Analyzing the $D_{r,s}$ map, we see that for inputs $x_j \sim \Ber(\frac{1}{2} + \kappa_k)$ for $j \in \{1,2\}$ and $r, s$, we have 
\[ D_{r,s} \sim \Ber(q) \qquad \text{for} \qquad q = \frac{1 + \kappa_1 + \kappa_2}{2} (1-r-s) + s. \]
We would like to choose $r$ and $s$ so that $1/2 < q $ if and only if $\kappa_1, \kappa_2 > 0$, while $  q < 1/2$ otherwise.
For all inputs $(\frac{1}{2} + \kappa_1, \frac{1}{2} + \kappa_2) \in  (I_- \cup I_+)^2$ such that there is some $j \in \{1,2\}$ such that $\kappa_j < 0$, we have

$$\kappa_1 + \kappa_2 \leq  - \frac{\beta}{2}  + \beta = \frac{\beta}{2}$$
and we require
$$q \leq \frac{1  + \frac{\beta}{2}}{2} (1-r-s) + s < 1/2$$
which happens if and only if
\begin{equation}\label{eq:sUpperBd}
s  < \frac{1 -  (1  + \frac{\beta}{2})(1-r)}{1 - \frac{\beta}{2}}.
\end{equation}
For all inputs $(\frac{1}{2} + \kappa_1, \frac{1}{2} + \kappa_2) \in  (I_- \cup I_+)^2$ such that $\kappa_1, \kappa_2 > 0$, we have
$$\kappa_1 + \kappa_2 \geq \beta$$
and we require
$$q \geq \frac{1 + \beta}{2}(1-r-s) + s > 1/2$$
which happens if and only if
\begin{equation}\label{eq:sLowerBd}
s  > \frac{1 - (1 + \beta)(1-r)}{1-\beta}.
\end{equation}

If our choice of $r, s$ satisfies both \eqref{eq:sUpperBd} and \eqref{eq:sLowerBd}, then the output bit will have positive bias if and only if both input bits have positive bias, so $D_{r, s}$ will function effectively similar to an $\land$ gate. We will choose $r = 1/4$ and show that we may always choose $s$ (depending on $\beta$) so that \eqref{eq:sUpperBd} and \eqref{eq:sLowerBd} are satisfied. With $r=1/4$, our requirements on $s$ become
$$s > \frac{1 - 3 \, \beta }{4 \, {\left(1 - \beta \right)}}$$
and
$$s < \frac{3 \, \beta - 2}{4 \, {\left(\beta - 2\right)}}.$$
Since we must have $s \in [0, 1-r]$, there exists a suitable choice of $s$ to satisfy both \eqref{eq:sUpperBd} and \eqref{eq:sLowerBd} for each $\beta \in (0, 1/2]$ if and only if the following three inequalities are satisfied:
\begin{align*}
\frac{1 - 3 \, \beta }{4 \, {\left(1 - \beta \right)}} & \leq \frac{3}{4}, \\ 
 0 &\leq \frac{3 \, \beta - 2}{4 \, {\left(\beta - 2\right)}} ,\\ 
\frac{1 - 3 \, \beta }{4 \, {\left(1 - \beta \right)}} & \leq \frac{3 \, \beta - 2}{4 \, {\left(\beta - 2\right)}}. 
\end{align*}
We note that $\beta \leq p_1 - 1/2 \leq 1/2$, and it is not hard to see that the above are satisfied for any $\beta \leq 1/2$.

Thus, for our $\beta$, there exists $ r,s \in [0, 1]$ such that $r + s \leq 1$, and denoting $\mathbf{D} := D_{r,s}$ with this choice of $r, s$, $\mathbf{D}$ satisfies the following:
\begin{align}
\psi_{ \mathbf{D}}\left((I_-\times I_-) \cup (I_-\times I_+) \cup (I_+\times I_-) \right) &\subseteq [0, 1/2) \label{eq:intervalMappingsDrs} \\
\psi_{ \mathbf{D}}\left(I_+\times I_+\right) &\subseteq (1/2, 1]. \notag
\end{align}

It is also straightforward to show that for $\kappa < 1/2$, the following are satisfied:
\begin{align}
\psi_{\neg_\kappa}([0, 1/2)) &\subseteq (1/2, 1] \label{eq:intervalMappingsNOTeps}\\
\psi_{\neg_\kappa}((1/2, 1]) &\subseteq [0, 1/2). \notag
\end{align}
Therefore $\neg_\kappa \circ \mathbf{D}$ satisfies the following:
\begin{align}
\psi_{ \neg_\kappa \circ \mathbf{D}} \left((I_-\times I_-) \cup (I_-\times I_+) \cup (I_+\times I_-) \right) &\subseteq (1/2, 1] \\
\psi_{ \neg_\kappa \circ \mathbf{D}} \left(I_+\times I_+\right) &\subseteq [0, 1/2). \notag
\end{align}

Fixing any $\ell_0 \in (0, \beta)$, we have that $\mathcal{N}_{\ell_0} \circ \neg_\kappa \circ \mathbf{D}$ satisfies
\begin{align}
\psi_{ \mathcal{N}_{\ell_0} \circ\neg_\kappa \circ \mathbf{D}} \left((I_-\times I_-) \cup (I_-\times I_+) \cup (I_+\times I_-) \right) &\subseteq (1/2, 1/2 + \beta/2] \\
\psi_{ \mathcal{N}_{\ell_0} \circ\neg_\kappa \circ \mathbf{D}} \left(I_+\times I_+\right)  &\subseteq [1/2 - \beta/2, 1/2)\notag
\end{align}
With all our outputs as Bernoulli random variables occupying $(p_0, 1/2) \cup (1/2, p_1)$, we can amplify using $c^{\circ k}$. Let $\Delta := p_1 - \frac{1}{2} = \frac{1}{2} - p_0$. In particular, there exists sufficiently large $k$ such that, denoting
\begin{align*}
S_- &= (I_-\times I_-) \cup (I_-\times I_+) \cup (I_+\times I_-)\\
S_+ &= I_+\times I_+
\end{align*}
we have
\begin{align}
\psi_{c^{\circ k} \circ \mathcal{N}_{\ell_0} \circ\neg_\kappa \circ \mathbf{D}} \left( S_-^{r^k} \right) &\subseteq (1/2 + \Delta / 2, 1/2 + \Delta]  \\
\psi_{c^{\circ k} \circ \mathcal{N}_{\ell_0} \circ\neg_\kappa \circ \mathbf{D}}\left( S_+^{r^k}\right) &\subseteq [1/2 - \Delta, 1/2-\Delta/2)\notag
\end{align}
and choosing $\ell_1 := \frac{\beta}{\Delta}$ we have that
\begin{align}
\psi_{\mathcal{N}_{\ell_1} \circ c^{\circ k} \circ \mathcal{N}_{\ell_0} \circ\neg_\kappa \circ \mathbf{D}}(S_-^{r^k}) &\subseteq I_+ \label{eq:semiFinalNconstruction} \\
\psi_{\mathcal{N}_{\ell_1} \circ c^{\circ k} \circ \mathcal{N}_{\ell_0} \circ\neg_\kappa \circ \mathbf{D}}(S_+^{r^k}) &\subseteq I_-.\notag
\end{align}
Inserting the reordering map prior to evaluation changes the ordering of the domains specified in equation \eqref{eq:semiFinalNconstruction}, so that
\begin{align*}
\psi_{\mathbf{N}_{r,s,\ell_0,\ell_1, k}}\left(((I_-)^{r^k}\times (I_-)^{r^k}) \cup ((I_-)^{r^k}\times (I_+)^{r^k}) \cup ((I_+)^{r^k}\times (I_-)^{r^k})\right) &\subseteq I_+ \\
\psi_{\mathbf{N}_{r,s,\ell_0,\ell_1, k}}\left((I_+)^{r^k} \times (I_+)^{r^k}\right) &\subseteq I_-
\end{align*}
as desired.

Therefore letting $\mathbf{N} = \left[\mathcal{N}_{\ell_1} \circ c^{\circ k} \circ \mathcal{N}_{\ell_0} \circ\neg_\kappa \circ\mathbf{D} \right] \circ \mathcal{R}_{r^k}$, we have the map $\mathbf{N} \in \conv \mathcal{C}$ accepting $2r^k$ inputs and satisfying relations \eqref{eq:intervalMappingsNoisyNANDsimulator}.
This proves Claim~\ref{claim:NMapParametersExists}, which completes the proof of
 Claim~\ref{claim:NMapExists}.
\end{proof}

\end{proof}

\end{appendices}

\bibliographystyle{alpha}
\bibliography{refs}

\end{document}